\documentclass[11 pt]{article}
\usepackage{amsfonts,amsmath,color,amsthm,amssymb, url, enumerate, bbm, subfig}
\usepackage{graphicx}
\usepackage[DIV=14,BCOR=2mm,headinclude=true,footinclude=false]{typearea}
\usepackage[font=small,labelfont=bf]{caption}
\usepackage{hyperref}
\usepackage{tikz, etoolbox}
\usetikzlibrary{shapes}
\usetikzlibrary{arrows}
\usepackage[margin=1in]{geometry}



\newcommand{\ch}{\choose}
\renewcommand{\ch}[2]{{#1 \choose #2}}

\newcommand{\eps}{\epsilon}

\newcommand{\fa}{\text{ for all }}

\newcommand{\al}{\alpha}

\newcommand{\f}{\frac}

\newcommand{\tr}{\text{trace}}

\renewcommand{\th}{\theta}

\newcommand{\N}{\mathbb{N}}
\newcommand{\R}{\mathbb{R}}

\definecolor{purple}{RGB}{128,0,128}

\setlength{\parskip}{.4cm plus4mm minus3mm}

\def\qed{\vbox{\hrule\hbox{\vrule\kern3pt\vbox{\kern6pt}\kern3pt\vrule}\hrule}}

\def\eps{\epsilon}

\newcommand{\SDPOPT}{\text{OPT}_{\text{SDP}}}
\newcommand{\TSPOPT}{\text{OPT}_{\text{TSP}}}
\newcommand{\CYCOPT}{\text{OPT}_{\text{k-Cycle}}}

\newtheorem{thm}{Theorem}[section]
\newtheorem{lm}[thm]{Lemma}
\newtheorem{cor}[thm]{Corollary}
\newtheorem{rem}[thm]{Remark}
\newtheorem{prop}[thm]{Proposition}

\newtheorem{cm}[thm]{Claim}

\newtheorem*{thm*}{Theorem}

\theoremstyle{definition}
\newtheorem{exam}[thm]{Example}
\newtheorem{fact}[thm]{Fact}

\newtheoremstyle%
 {Aside}%
 {}{}%
 {\color{purple}\itshape}
 {}%
 {\color{purple}\bfseries}%
 {\color{purple}.}%
 { }{}
\theoremstyle{Aside}

\title{The Unbounded Integrality Gap of a Semidefinite Relaxation of the Traveling Salesman Problem}
\author{Samuel C. Gutekunst and David P. Williamson }
\date{}
\begin{document}
\maketitle

\begin{abstract}
We  study a semidefinite programming relaxation of the traveling salesman problem introduced by de Klerk, Pasechnik, and Sotirov \cite{Klerk08} and show that their relaxation has an unbounded integrality gap.  In particular, we give a family of instances such that the gap increases linearly with $n$. To obtain this result, we search for feasible  solutions within a highly structured class of matrices; the problem of finding such solutions reduces to finding feasible solutions for a related linear program, which we do analytically.  The  solutions we find imply the unbounded integrality gap.  Further, they imply several corollaries that help us better understand the semidefinite program and its relationship to other TSP relaxations.  Using the same technique, we show that a more general semidefinite program introduced by de Klerk, de Oliveira Filho, and Pasechnik \cite{Klerk12} for the $k$-cycle cover problem also  has an unbounded integrality gap.    
\end{abstract}

\section{Introduction} 

The traveling salesman problem (TSP) is one of the most famous problems in combinatorial optimization.  An input to the TSP consists of a set of $n$ cities $[n]:=\{1, 2, ..., n\}$ and edge costs $c_{ij}$ for each pair of distinct $i, j \in [n]$ representing the cost of traveling from city $i$ to city $j$.  Given this information, the TSP is to find a minimum-cost tour visiting every city exactly once.  Throughout this paper, we implicitly assume that the edge costs are \emph{symmetric} (so that $c_{ij}=c_{ji}$ for all distinct $i, j\in [n]$) and \emph{metric} (so that $c_{ij}\leq c_{ik}+c_{kj}$ for all distinct $i, j, k\in [n]$).  Hence, we  interpret the $n$ cities as vertices of the complete undirected graph $K_n$ with edge costs $c_e=c_{ij}$ for edge $e=\{i, j\}$.  In this setting, the TSP is to find a minimum-cost Hamiltonian cycle on $K_n$.

The TSP is well-known to be NP-hard.  It is even NP-hard to approximate TSP solutions in polynomial time to within any constant factor  $\alpha<\f{123}{122}$ (see Karpinski,  Lampis,  and Schmied \cite{Karp15}).  For the  general TSP (without any assumptions beyond metric and symmetric edge costs), the state of the art approximation algorithm remains Christofides' 1976 algorithm \cite{Chr76}.  The output of Christofides' algorithm is at most   a factor of $\f{3}{2}$ away from the optimal solution to any TSP instance.

A broad class of approximation algorithms begin by relaxing the set of Hamiltonian cycles.   The prototypical example is the subtour elimination linear program (also referred to as the Dantzig-Fulkerson-Johnson relaxation \cite{Dan54} and the Held-Karp bound \cite{Held70}, and which we will refer to as the subtour LP). Let $V=[n]$ denote the set of vertices in $K_n,$ and let $E$ denote the set of edges in $K_n$.  For $S\subset V$, denote the set of edges with exactly one endpoint in $S$ by $\delta(S):=\{e=\{i, j\}: |\{i, j\}\cap S|=1\}$ and let $\delta(v):=\delta(\{v\}).$  The subtour elimination linear programming relaxation of the TSP is:
$$\begin{array}{l l l}
\min & \sum_{e\in E} c_e x_e & \\
\text{subject to} & \sum_{e\in \delta(v)} x_e = 2, & v=1, \ldots, n \\
& \sum_{e\in \delta(S)} x_e \geq 2, & S\subset V: S\neq \emptyset, S\neq V \\
&0\leq x_e \leq 1, & e=1, \ldots, n.
\end{array} $$
The constraints  $\sum_{e\in \delta(v)} x_e = 2$ are known as the degree constraints, while the constraints $\sum_{e\in \delta(S)} x_e \geq 2$ are known as the subtour elimination constraints.  Wolsey \cite{Wol80} and Shmoys and Williamson \cite{Shm90} show that solutions to this linear program are also within a factor of $\f{3}{2}$ of the optimal, integer solution to the TSP. 

Instead of linear programming relaxations, another approach is to  consider relaxations that are semidefinite programs (SDPs).   This avenue is considered by Cvetkovi{\'c}, {\v{C}}angalovi{\'c}, and Kova{\v{c}}evi{\'c}-Vuj{\v{c}}i{\'c} \cite{Cvet99}.  They introduce an SDP relaxation that searches for solutions that meet the degree constraints and that are at least as connected as a cycle  with respect to algebraic connectivity (see Section \ref{sec:SDPs}).  Goemans and Rendl \cite{Goe00}, however, show that the SDP relaxation of Cvetkovi{\'c} et al.\ \cite{Cvet99} is weaker than the subtour LP in the following sense: any solution to the subtour LP implies an equivalent feasible solution for the SDP of Cvetkovi{\'c} et al.\ of the same cost.  Since both optimization problems are minimization problems, the optimal  value for the SDP of Cvetkovi{\'c} et al.\ cannot be closer than the optimal solution  of the subtour LP to the optimal solution to the TSP.

More recently, de Klerk, Pasechnik, and Sotirov \cite{Klerk08} introduced another SDP relaxation of the TSP.  This SDP can be motivated and derived through a   general framework for SDP relaxations based on the theory of association schemes (see de Klerk, de Oliveira Filho, and Pasechnik \cite{Klerk12}).   Moreover, de Klerk et al.\ \cite{Klerk08} show computationally that this new SDP is incomparable to the subtour LP: there are cases for which their SDP provides a closer approximation to the TSP than the subtour LP and vice versa!  Moreover, de Klerk et al.\ \cite{Klerk08} show that their SDP is stronger than the earlier SDP of Cvetkovi{\'c} et al.\  \cite{Cvet99}: any feasible solution for the SDP of  de Klerk et al.\ \cite{Klerk08} implies a feasible solution for the SDP of Cvetkovi{\'c} et al.\  \cite{Cvet99} of the same cost.

We analyze the SDP relaxation of de Klerk et al.\ \cite{Klerk08}; our main result is that the integrality gap of this SDP is unbounded.  To show this result, we introduce a family of instances corresponding to a \emph{cut semimetric}: a subset $S\subset V$ such that $c_{ij}=1$ if $\{i, j\}\in \delta(S)$ and $c_{i j}=0$ otherwise.  We will take $|S|=\f{n}{2}.$  Equivalently, $n/2$ of the cities are located at the point $(0)\in \R^1$, the remaining $n/2$ cities are located at $(1)\in\R^1$, and the cost $c_{ij}$ is the Euclidean distance between the locations of city $i$ and city $j$. We show that for these instances the integrality gap grows linearly in $n$. The feasible solutions we introduce to bound the integrality gap, moreover, have the same algebraic connectivity as a Hamiltonian cycle $n$ vertices, even though their cost becomes arbitrarily far from that of a Hamiltonian cycle (see Section \ref{sec:SDPs}) as $n$ grows. 

%

 We introduce the SDP of de Klerk et al.\ \cite{Klerk08} in Section \ref{SDPTSP}.  In Section \ref{Idea} we motivate and prove our result.  The crux of our argument involves exploiting the symmetry of the instances we introduce.  We consider a candidate class of solutions to the SDP respecting this symmetry and show that  members of this class are feasible solutions to the SDP if and only if they are feasible solutions for a simpler linear program, whose constraints enforce certain positive semidefinite inequalities.  We then analytically find solutions to this linear program, and show that these solutions imply the unbounded integrality gap.    Next, in Section \ref{cors}, we discuss several corollaries of our main result.  These corollaries shed new light on how the SDP relates to the subtour LP as well as  to the earlier SDP of Cvetkovi{\'c} et al.\  \cite{Cvet99}. In Section \ref{sec:kcyc}, we apply our technique for showing that the integrality gap is unbounded to  a  generalization of the SDP of  de Klerk et al.\ \cite{Klerk08} for the minimum-cost $k$-cycle cover problem; when $k=1$, this problem is exactly the same as the TSP.  This more general SDP was introduced in de Klerk et al.\ \cite{Klerk12}, and we show that it also has an unbounded integrality gap.
 
 This work is related in spirit to Goemans and Rendl \cite{Goe99}, who study how to solve SDPs arising from association schemes using a linear program.  Specifically, they show that an SDP of the form $$\max \langle M_0, X\rangle \text{ s.t. } \langle M_j, X\rangle =b_j \text{ for }j=1, ..., m, \hspace{5mm} X\succeq 0,$$ where the $M_j$ are fixed, input matrices forming an association scheme, can be solved using a linear program.  Like Goemans and Rendl \cite{Goe99}, the SDP we study is related to an association scheme and we obtain a result using a linear program.   In contrast, however, to having input matrices that form an association scheme, the SDP we analyze seeks solutions that satisfy many properties of a certain, fixed association scheme (in particular, de Klerk et al.\ \cite{Klerk12} shows that the constrains of the SDP are satisfied by the association scheme corresponding to cycles; see Section \ref{SDPTSP}).  Moreover, we only use a linear program to find feasible solutions to this SDP that are sufficient to imply an unbounded integrality gap: this SDP does not in general reduce to the LP we use.

\section{A Semidefinite Programming Relaxation of the TSP}\label{SDPTSP}
\subsection{Notation and Preliminaries}
Throughout this paper we will use standard  notation from linear algebra.  We use $J_m$ and $I_m$ to  denote the all-ones and identity matrices in $\R^{m\times m},$ respectively.  When clear from context, we   suppress the dependency on the dimension and just write $J$ and $I$. We denote by $e$ the column vector of all ones, so that $J=ee^T$.  Also, we use $S^m$ for the set of real, symmetric matrices in $\R^{m\times m}$ and  $\otimes$ to denote the Kronecker product of matrices.  $A\succeq B$ denotes that $A-B$ is a positive semidefinite (PSD) matrix (we will generally have $A, B$ symmetric, in which case positive semidefiniteness is equivalent to all eigenvalues of $A-B$ being nonnegative).  The trace of a matrix $A$, denoted $\tr(A)$, is the sum of its diagonal entries so that for $A, B \in S^m$, $\tr(AB) = \sum_{i=1}^m \sum_{j=1}^m A_{ij}B_{ij}.$  $A\geq 0$ means that each entry of of matrix $A$ is nonnegative.

Our main result addresses the \emph{integrality gap} of a relaxation, which represents the worst-case ratio of the original problem's optimal solution to the relaxation's optimal solution.  We are specifically interested in the  gap of the SDP of de Klerk et al.\ \cite{Klerk08}; we will refer to this SDP as simply ``the SDP'' throughout.  Let $C$ denote a matrix of edge costs, so that $C_{ij}=C_{ji}=c_{ij}$ and $C_{ii}=0$. Let $\SDPOPT(C)$ and $\TSPOPT(C)$ respectively denote the optimal solutions to the SDP and to the TSP for a given matrix of costs $C$.  The integrality gap is then $$\sup_{C} \f{\TSPOPT(C)}{\SDPOPT(C)},$$ where we take the supremum over all valid cost matrices $C$ (those whose constituent costs are metric and symmetric).  This ratio is bounded below by 1, since the SDP is a relaxation of the TSP; we re-derive this fact in Section \ref{SDPFacts}.  We will show that the ratio cannot be bounded above by any constant.  In contrast, the results we noted previously about the subtour  LP imply that its integrality gap is bounded above by $\f{3}{2}.$

Throughout the remainder of this paper we will take $n$ to be even and let $d= \f{n}{2}.$ We use $A\backslash B$ for set minus notation, so that $A\backslash B=\{a\in A: a\notin B\}.$  We take $x\in \R^{\ch{n}{2}}$ to mean that $x$ is a vector whose entries are indexed by the edges of $K_n$.

\subsection{Facts about the SDP}\label{SDPFacts}
The SDP introduced by de Klerk et al.\ \cite{Klerk08} uses  $d$ matrix variables $X^{(1)}, ..., X^{(d)} \in \R^{n\times n}$, with the cost of a solution depending  only on $X^{(1)}.$   It is:
\begin{equation}\label{eq:SDP}
\begin{array}{l l l}
\min & \f{1}{2} \tr\left(CX^{(1)}\right) & \\
\text{subject to} & X^{(k)} \geq 0, & k=1, \ldots, d \\
& \sum_{j=1}^d X^{(j)} = J-I, & \\
& I + \sum_{j=1}^d \cos \left(\f{2\pi jk}{n}\right) X^{(j)} \succeq 0, & k=1, \ldots, d \\
& X^{(k)} \in S^n, & k=1, \ldots, d.
\end{array} \end{equation}
Both de Klerk et al.\ \cite{Klerk08} and de Klerk et al.\ \cite{Klerk12}  show that this is a relaxation of the TSP by showing that the following solution is feasible: for a simple, undirected graph $G$, let $A_k(G)$ be the \emph{$k$-th distance matrix}: the matrix with $i,j$-th entry equal to $1$ if and only if the shortest path between vertices $i$ and $j$ in $G$ is of distance $k,$ and equal to 0 otherwise.  Let $\mathcal{C}_n$ be a cycle  of length $n$ (i.e., any Hamiltonian cycle on $[n]$).  The solution  where $X^{(k)} = A_k(\mathcal{C}_n)$ for $k=1, ..., d$ is feasible for the SDP (see Proposition \ref{prop:relax}).  Hence, the optimal integer solution to the TSP has a corresponding feasible solution to the SDP. That SDP solution has the same value as the optimal integer solution to the TSP:  each edge $e=\{i, j\}$ is represented twice in $X^{(1)}$ as both $X^{(1)}_{ij}$ and $X^{(1)}_{ji},$ but this is accounted for by the factor $\f{1}{2}$ in the objective function. 

These solutions are shown to be feasible in de Klerk et al.\ \cite{Klerk08} by noting that the $A_k(\mathcal{C}_n)$ form an association scheme and are therefore simultaneously diagonalizable.  This allows for the positive semidefinite inequalities to be verified after computing the eigenvalues of each $A_k(\mathcal{C}_n)$.  A more systematic approach is taken in de Klerk et al.\ \cite{Klerk12}, where they introduce general results about association schemes.  The constraints of the SDP then represent an application of these results to a specific association scheme: that of the distance matrices $A_k(\mathcal{C}_n)$. 
%
We begin by  providing a new, direct proof that the SDP is a relaxation of the TSP.
\begin{prop}[de Klerk et al.\ \cite{Klerk08}]\label{prop:relax}
Setting  $X^{(j)} = A_j(\mathcal{C}_n)$ for $j=1, ..., d$ yields a feasible solution to the SDP (\ref{eq:SDP}).
\end{prop}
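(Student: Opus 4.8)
The plan is to verify each of the three families of constraints in (\ref{eq:SDP}) directly for the candidate $X^{(j)} = A_j(\mathcal{C}_n)$. The nonnegativity constraints $X^{(k)}\geq 0$ are immediate, since distance matrices have entries in $\{0,1\}$. The constraint $\sum_{j=1}^d X^{(j)} = J - I$ is the statement that every ordered pair of distinct vertices $i,j$ has a shortest-path distance in $\mathcal{C}_n$ lying in $\{1,\dots,d\}$, which holds because the diameter of $\mathcal{C}_n$ on $n = 2d$ vertices is exactly $d$; so for each off-diagonal entry exactly one of the $A_j(\mathcal{C}_n)$ contributes a $1$, and the diagonal is zero throughout. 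The symmetry $X^{(k)}\in S^n$ is clear.

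The substantive step is the positive-semidefiniteness of $I + \sum_{j=1}^d \cos\!\left(\tfrac{2\pi jk}{n}\right)A_j(\mathcal{C}_n)$ for each $k=1,\dots,d$. The key structural fact is that, labeling the vertices of $\mathcal{C}_n$ cyclically by $\Z/n\Z$, each $A_j(\mathcal{C}_n)$ is a circulant matrix: its $(a,b)$ entry depends only on $a-b \bmod n$, being $1$ exactly when $a - b \equiv \pm j \pmod n$ (for $j < d$) or $a-b\equiv d$ (for $j = d$). Hence all the $A_j(\mathcal{C}_n)$, and therefore the matrix in question, are simultaneously diagonalized by the Fourier basis: the eigenvectors are $v_\ell = (1,\omega^\ell,\omega^{2\ell},\dots,\omega^{(n-1)\ell})^T$ for $\ell = 0,\dots,n-1$, where $\omega = e^{2\pi i/n}$. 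The eigenvalue of $A_j(\mathcal{C}_n)$ on $v_\ell$ is $\omega^{j\ell} + \omega^{-j\ell} = 2\cos\!\left(\tfrac{2\pi j\ell}{n}\right)$ for $j<d$, and $\omega^{d\ell} = (-1)^\ell = \cos\!\left(\tfrac{2\pi d\ell}{n}\right)$ for $j = d$.

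Therefore the eigenvalue of $I + \sum_{j=1}^d \cos\!\left(\tfrac{2\pi jk}{n}\right)A_j(\mathcal{C}_n)$ on $v_\ell$ is
\[
1 + \sum_{j=1}^{d} \cos\!\left(\tfrac{2\pi jk}{n}\right)\cos\!\left(\tfrac{2\pi j\ell}{n}\right)\cdot c_j,
\]
where $c_j = 2$ for $j < d$ and $c_d = 1$. I would rewrite each product of cosines via $\cos\alpha\cos\beta = \tfrac12(\cos(\alpha-\beta) + \cos(\alpha+\beta))$, so that the sum becomes a combination of terms $\sum_{j} \cos\!\left(\tfrac{2\pi j(k\pm\ell)}{n}\right)$ over a full or half period. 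Using the standard identity $\sum_{j=0}^{n-1}\cos\!\left(\tfrac{2\pi jm}{n}\right) = 0$ unless $n\mid m$ (in which case it equals $n$) — together with careful bookkeeping of the $j=0$ and $j=d$ endpoints and the doubling factor $c_j$ — this collapses: the eigenvalue is $0$ unless $\ell \equiv \pm k \pmod n$, and equals $n/2$ in those cases. I expect the main obstacle to be exactly this endpoint bookkeeping: getting the $j=d$ term (which carries weight $1$, not $2$) and the constant $1 = I$ to combine cleanly so that every eigenvalue comes out nonnegative (indeed in $\{0, n/2\}$), which establishes $I + \sum_{j=1}^d \cos\!\left(\tfrac{2\pi jk}{n}\right)A_j(\mathcal{C}_n)\succeq 0$ and completes the proof. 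Finally I would remark that the objective value of this solution equals $\TSPOPT(C)$, as already noted in the text, so the SDP is a relaxation.
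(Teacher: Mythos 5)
Your argument is correct in substance and would carry through, but it is worth noting that it matches the \emph{original} proof of de Klerk et al.\ \cite{Klerk08} rather than the reorganized proof given in this paper, and the paper explicitly draws this distinction at the end of the section. You observe that each $A_j(\mathcal{C}_n)$ is individually circulant, diagonalize all of them simultaneously via the Fourier basis $v_\ell = (1,\omega^\ell,\dots,\omega^{(n-1)\ell})^T$, record the eigenvalue of each $A_j$ on $v_\ell$, and then sum. The paper instead observes that the full linear combination $M = I + \sum_{j=1}^d \cos(2\pi jk/n)\, A_j(\mathcal{C}_n)$ is itself a single circulant matrix and reads off its spectrum in one shot from Lemma~\ref{lm:circ}. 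The two routes converge on the same double cosine sum and require the same product-to-sum step followed by the same half-period cosine identity (the paper's Lemma~\ref{lem:trig}), so neither is deeper than the other; the paper's version just packages the computation as one application of the circulant eigenvalue formula rather than $d$ applications plus a recombination, which keeps the $j=d$ weighting and the $+I$ shift inside a single formula rather than as side conditions to track.

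Two small points of bookkeeping you flagged but did not resolve. First, the half-period sum you need is $\sum_{j=1}^{d}\cos(2\pi jm/n)$, not the full-period $\sum_{j=0}^{n-1}$; the relevant identity is Lemma~\ref{lem:trig}, which yields $\tfrac{-1+(-1)^m}{2}$ for $0<m<n$ and $d$ when $m\in\{0,n\}$, and this is where the $j=d$ endpoint (with weight $1$ instead of $2$) and the constant $1$ from $I$ get absorbed. Second, your claimed nonzero eigenvalue $n/2$ is correct for $k<d$, where $\ell\equiv k$ and $\ell\equiv -k$ are distinct; but for $k=d$ those coincide, and the single nonzero eigenvalue is $2d = n$, not $n/2$ (this is also forced by $\operatorname{trace}(M)=n$). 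Neither issue affects the conclusion $M\succeq 0$, but a complete write-up would need to run these cases explicitly rather than defer them.
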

We will use two lemmas in our proof.  First, the main work in our proof involves showing that the positive semidefinite inequalities from (\ref{eq:SDP}) hold.  We do so by noticing that $ I+\sum_{j=1}^d \cos \left(\f{2\pi jk}{n}\right)  A_j(\mathcal{C}_n) $ has a very specific structure: that of a \emph{circulant} matrix. A circulant matrix is a matrix of the form
$$M=\begin{pmatrix} m_0 & m_1 & m_2 & m_3 & \cdots & m_{n-1} \\ m_{n-1} & m_0 & m_1 & m_2 & \cdots & m_{n-2} \\ m_{n-2} & m_{n-1} & m_0 & m_1 & \ddots & m_{n-3} \\ \vdots & \vdots & \vdots & \vdots & \ddots & \vdots \\ m_1 & m_2 & m_3 & m_4 & \cdots & m_0 \end{pmatrix} = \left(m_{(s-t) \text{ mod } n}\right)_{s, t=1}^n.$$
  The eigenvalues of circulant matrices are well understood, which will allow us to show that $I + \sum_{j=1}^d \cos \left(\f{2\pi jk}{n}\right)  A_j(\mathcal{C}_n)$ is a positive semidefinite matrix for each $k$ by  computing  the eigenvalues of that linear combination.  In particular:
\begin{lm}[Gray \cite{Gray06}]\label{lm:circ}
The circulant matrix $M= \left(m_{(s-t) \text{ mod } n}\right)_{s, t=1}^n$ has eigenvalues
 $$\lambda_t(M) = \begin{cases} \sum_{s=0}^{n-1} m_s e^{-\f{2\pi st  \sqrt{-1}}{n}}, & \text{ if } t=1, ..., n-1 \\  \sum_{s=0}^{n-1} m_s, & \text{ if }t=n.\end{cases}$$
\end{lm}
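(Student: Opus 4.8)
The plan is to prove this by exhibiting an explicit eigenbasis for $M$, namely the discrete Fourier vectors. Let $\omega=e^{2\pi\sqrt{-1}/n}$ be a primitive $n$-th root of unity, and for each $t\in\{0,1,\dots,n-1\}$ set $v^{(t)}=(1,\omega^{t},\omega^{2t},\dots,\omega^{(n-1)t})^{T}\in\C^{n}$. First I would verify by a direct computation that each $v^{(t)}$ is an eigenvector of $M$. Indexing the rows and columns of $M$ by $\{0,\dots,n-1\}$ (which leaves the description $M=(m_{(s-t)\bmod n})$ unchanged), the $s$-th coordinate of $Mv^{(t)}$ is $\sum_{u=0}^{n-1}m_{(s-u)\bmod n}\,\omega^{ut}$; substituting $r=(s-u)\bmod n$, using $\omega^{n}=1$ to rewrite $\omega^{ut}=\omega^{(s-r)t}$, and observing that $r$ again ranges over all of $\{0,\dots,n-1\}$, this equals $\omega^{st}\sum_{r=0}^{n-1}m_{r}\omega^{-rt}$. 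Hence $Mv^{(t)}=\lambda_{t}v^{(t)}$ with $\lambda_{t}=\sum_{s=0}^{n-1}m_{s}\,e^{-2\pi st\sqrt{-1}/n}$, and for $t=0$ this sum collapses to $\sum_{s=0}^{n-1}m_{s}$, which is precisely the value recorded as the ``$t=n$'' case in the statement (the indices $t$ and $t+n$ give the same vector because $\omega^{n}=1$).

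It then remains to argue that these exhaust the spectrum. The vectors $v^{(0)},\dots,v^{(n-1)}$ are linearly independent: the matrix having them as its columns is Vandermonde in the pairwise-distinct nodes $\omega^{0},\omega^{1},\dots,\omega^{n-1}$, hence invertible (equivalently, $n^{-1/2}$ times this matrix is the unitary DFT matrix). So $\C^{n}$ admits a basis of eigenvectors of $M$; therefore $M$ is diagonalizable and its list of eigenvalues, counted with multiplicity, is exactly $\lambda_{1},\dots,\lambda_{n}$, as claimed.

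I do not expect a genuine obstacle here — this is the classical diagonalization of a circulant matrix. The only thing requiring care is bookkeeping: shifting the index set between $\{1,\dots,n\}$ and $\{0,\dots,n-1\}$, tracking the sign in the exponent (an artifact of building the circulant from $(s-t)\bmod n$ rather than $(t-s)\bmod n$), and the convention that the ``$t=n$'' eigenvalue is the degenerate $t=0$ case $\sum_{s}m_{s}$. An equivalent and perhaps cleaner route is to write $M=\sum_{s=0}^{n-1}m_{s}P^{s}$, where $P$ is the cyclic shift matrix with $P_{s,t}=\bi[\,(s-t)\equiv 1\pmod{n}\,]$; since $P$ has eigenvalue $\omega^{-t}$ on $v^{(t)}$, one reads off $\lambda_{t}(M)=\sum_{s}m_{s}\,\omega^{-st}$ immediately, and the linear independence of the $v^{(t)}$ again pins down the full spectrum.
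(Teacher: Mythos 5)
Your proof is correct. Note, however, that the paper does not prove this lemma at all — it is stated with an attribution to Gray's survey on Toeplitz and circulant matrices and used as a black box — so there is no in-paper argument to compare against. What you have written is the standard, classical diagonalization of a circulant matrix: the discrete Fourier vectors $v^{(t)}=(1,\omega^t,\dots,\omega^{(n-1)t})^T$ are eigenvectors (your index substitution $r=(s-u)\bmod n$ is handled correctly, including the sign in the exponent that arises from indexing by $(s-t)\bmod n$ rather than $(t-s)\bmod n$), and the Vandermonde argument shows they form a basis, so the listed $\lambda_t$ exhaust the spectrum with multiplicity. Your alternative route via $M=\sum_{s}m_s P^s$ with $P$ the cyclic shift is also sound and is arguably the cleaner way to see simultaneously that all circulants share this eigenbasis, which is exactly the structural fact the paper later exploits when it diagonalizes the linear combination $I+\sum_j \cos(2\pi jk/n)\,A_j(\mathcal{C}_n)$ all at once. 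One minor bookkeeping remark: the paper's indexing runs $t=1,\dots,n$ with the $t=n$ case being the degenerate $t\equiv 0$ eigenvalue $\sum_s m_s$; you identify this correctly, and it is worth being explicit (as you are) that $v^{(t)}$ and $v^{(t+n)}$ coincide, so there is no double-counting.
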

\noindent This is the only section where we will work with imaginary numbers, and to avoid ambiguity with index variables, we explicitly write $\sqrt{-1}$ and reserve $i$ and $j$ as index variables.

Our second lemma is a trigonometric identity that we will use repeatedly in later proofs:
\begin{lm}\label{lem:trig}
Let $n$ be even and $0<k<n$ be an integer.  Then $$\sum_{j=1}^d \cos\left(\f{2\pi jk}{n}\right) = \f{-1+(-1)^k}{2}.$$  
\end{lm}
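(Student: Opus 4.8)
The plan is to evaluate the sum $\sum_{j=1}^d \cos(2\pi jk/n)$ by recognizing it as the real part of a finite geometric series in the complex exponential $\zeta := e^{2\pi \sqrt{-1}\, k/n}$. First I would write $\sum_{j=1}^d \cos(2\pi jk/n) = \operatorname{Re}\left(\sum_{j=1}^d \zeta^j\right)$. If $\zeta = 1$ — which for $0<k<n$ happens precisely when $n \mid k$, i.e.\ never — this is a degenerate case we can ignore; so $\zeta \neq 1$ and the geometric sum is $\sum_{j=1}^d \zeta^j = \zeta\,\frac{\zeta^d - 1}{\zeta - 1}$. Since $n = 2d$, we have $\zeta^d = e^{\pi \sqrt{-1}\, k} = (-1)^k$, which cleanly splits the argument into two cases.

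The main case split is on the parity of $k$. If $k$ is even, then $\zeta^d = 1$, so $\sum_{j=1}^d \zeta^j = 0$, hence the real part is $0$; and indeed $\frac{-1 + (-1)^k}{2} = \frac{-1+1}{2} = 0$, matching. If $k$ is odd, then $\zeta^d = -1$, so $\sum_{j=1}^d \zeta^j = \zeta\,\frac{-1-1}{\zeta-1} = \frac{-2\zeta}{\zeta - 1}$. I would then simplify $\frac{-2\zeta}{\zeta-1}$ by multiplying numerator and denominator by $\bar\zeta - 1$ (using $|\zeta|=1$ so $\zeta\bar\zeta = 1$): the denominator becomes $(\zeta-1)(\bar\zeta - 1) = 2 - (\zeta + \bar\zeta) = 2 - 2\cos(2\pi k/n)$, and the numerator becomes $-2\zeta(\bar\zeta - 1) = -2(1 - \zeta) = -2 + 2\zeta$. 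Taking real parts, $\operatorname{Re}(\sum_{j=1}^d \zeta^j) = \frac{-2 + 2\cos(2\pi k/n)}{2 - 2\cos(2\pi k/n)} = -1$, which equals $\frac{-1 + (-1)^k}{2} = \frac{-1-1}{2} = -1$ for odd $k$. Combining the two cases gives the claimed identity.

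Alternatively — and this may be cleaner to write up — one can avoid the rationalization step by telescoping: note that $2\cos(\theta) = \frac{\sin((j+1/2)\theta) - \sin((j-1/2)\theta)}{\sin(\theta/2)}$ after setting $\theta = 2\pi k/n$, so $\sum_{j=1}^d 2\cos(j\theta) = \frac{\sin((d+1/2)\theta) - \sin(\theta/2)}{\sin(\theta/2)}$, and then evaluate $\sin((d+1/2)\theta) = \sin(\pi k + \theta/2) = (-1)^k \sin(\theta/2)$ using $d\theta = \pi k$. This yields $\sum_{j=1}^d \cos(j\theta) = \frac{(-1)^k - 1}{2}$ directly, with no case analysis. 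The denominator $\sin(\theta/2) = \sin(\pi k/n)$ is nonzero exactly because $0 < k < n$, so this is where the hypothesis on the range of $k$ is used.

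There is no serious obstacle here; the only thing to be careful about is making sure the hypothesis $0 < k < n$ is invoked at the right place — it guarantees $\zeta \neq 1$ (equivalently $\sin(\pi k/n) \neq 0$), which is exactly what makes the geometric-series or telescoping manipulation legitimate. I would present the telescoping version as the main proof for brevity, with the computation of $\sin(\pi k + \theta/2)$ being the one line that needs the identity $\sin(\pi k + x) = (-1)^k \sin(x)$.
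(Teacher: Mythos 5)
Your proposal is correct, and the telescoping version you say you would actually write up is essentially identical to the paper's proof: the paper invokes Lagrange's trigonometric identity $\sum_{j=1}^m \cos(j\theta) = -\frac{1}{2} + \frac{\sin((m+\frac{1}{2})\theta)}{2\sin(\theta/2)}$ with $\theta = 2\pi k/n$, which is exactly the closed form your telescoping sum produces, and then simplifies $\sin(\pi k + \theta/2) = (-1)^k\sin(\theta/2)$ just as you do. The geometric-series route you develop first is a genuine alternative: it trades the citation of a named identity for a completely elementary computation with $\sum_{j=1}^d \zeta^j$, at the cost of a case split on the parity of $k$ (the telescoping argument handles both parities uniformly via the factor $(-1)^k$). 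Both routes use the hypothesis $0 < k < n$ in the same essential place — to guarantee $\zeta \neq 1$, equivalently $\sin(\pi k/n) \neq 0$, so that the geometric series has a closed form / the denominator is nonzero. Either writeup would be acceptable; the paper's choice (and yours, for brevity) avoids the case split.
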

\begin{proof}
Our identity is a consequence of Lagrange's trigonometric identity (see, e.g., Identity 14 in Section 2.4.1.6 of Jeffrey and Dai \cite{Jeff08}), which states, for $0<\th<2\pi,$ that $$\sum_{j=1}^m \cos(j \theta) = -\f{1}{2} + \f{\sin\left(\left(m+\f{1}{2}\right)\theta\right)}{2\sin\left(\f{\theta}{2}\right)}.$$  Taking $\theta = \f{2\pi k}{n}$ and using $n=2d$, we obtain:
\begin{align*}
\sum_{j=1}^d \cos\left(\f{2\pi k}{n} j\right) &= -\f{1}{2} + \f{\sin\left( \pi k+ \f{\pi k}{n}\right)}{2\sin \f{\pi k}{n}} \\ 
&=- \f{1}{2} + (-1)^k \f{1}{2},
\end{align*}
where we recall that $\sin(\pi+\theta)=-\sin(\theta).$ \hfill
\end{proof}
\noindent Notice that when $k=0$ or $k=n$, the sum is $d$.

\begin{proof}[Proof (of Proposition \ref{prop:relax})]
We first remark that each $A_j(\mathcal{C}_n)$ is a nonnegative symmetric matrix.  Moreover,  $\sum_{j=1}^d A_j(\mathcal{C}_n) = J-I.$  This follows because, in $\mathcal{C}_n$, the shortest path between any pair of distinct vertices $u, v \in [n]$ is a unique element $s$ of the set $[d]$.  
 Hence, exactly one of the terms in the sum $\sum_{j=1}^d A_j(\mathcal{C}_n)$ has a one in its $u, v$ entry, and all other terms have a zero. The diagonals of each $ A_j(\mathcal{C}_n)$ consist of all zeros, since the shortest path from vertex $i$ to itself has length $0\notin [d]$.

Now for any fixed $k\in[d]$ we compute the eigenvalues of the matrix $$M:=I+\sum_{j=1}^d \cos \left(\f{2\pi jk}{n}\right)  A_j(\mathcal{C}_n).$$ First, suppose the vertices are labeled so that the cycle $\mathcal{C}_n$ is $1, 2, 3, ..., n-1, n, 1.$  We will later note why this is without loss of generality.

 Then $M$ is circulant with, for $j=1, ..., d$, entries $m_j$ and $m_{n-j}$  given exactly by the coefficient of the $j$-th term in the sum.  Namely:
$$m_0 =1, \hspace{5mm} m_d = \cos\left(\f{2\pi kd}{n}\right), \hspace{5mm} m_j=m_{n-j}=\cos\left(\f{2\pi jk}{n}\right), j=1,..., d-1.$$  We can directly compute the $t$-th eigenvalue of $M$ using Lemma \ref{lm:circ}.  Our later proofs will include similar computations, so we pay particular emphasis to the details of our algebraic manipulation.  For $t=1, ..., n-1$, the  $t$-th eigenvalue of $M$ is:
\begin{align*}
\lambda_t(M) &= \sum_{s=0}^{n-1} m_s e^{-\f{2\pi st  \sqrt{-1}}{n}} \\
&= 1 +  \cos\left(\f{2\pi kd}{n}\right)e^{-\f{2 \pi  dt  \sqrt{-1}}{n}}+  \sum_{s=1}^{d-1 } \cos\left(\f{2\pi sk}{n}\right)\left( e^{-\f{2\pi st \sqrt{-1}}{n}}+e^{-\f{2\pi (n-s)t \sqrt{-1}}{n}}\right),
\intertext{where we have first written  the terms when $s=0$ and $s=d$.  We rewrite terms so that our sum is to $d$ and simplify exponentials:}
&= 1 -  \cos\left(\f{2\pi kd}{n}\right)e^{\f{2 \pi  dt \sqrt{-1}}{n}}+  \sum_{s=1}^{d } \cos\left(\f{2\pi sk}{n}\right)\left( e^{-\f{2\pi st \sqrt{-1}}{n}}+e^{\f{2\pi st \sqrt{-1}}{n}}\right) \\
&= 1 -  (-1)^k (-1)^t+  2\sum_{s=1}^{d } \cos\left(\f{2\pi sk}{n}\right) \cos\left(\f{2\pi st}{n}\right).
\intertext{Recalling the product-to-sum identity for cosines (that $2\cos(\theta)\cos(\phi)=\cos(\theta+\phi)+\cos(\theta-\phi)$), we get}
&= 1 -  (-1)^{k+t} +  \sum_{s=1}^{d } \cos\left(\f{2\pi s}{n}(k+t)\right) +  \sum_{s=1}^{d } \cos\left(\f{2\pi s}{n}(k-t)\right).
\intertext{Using Lemma \ref{lem:trig} and  $(-1)^{k+t}=(-1)^{k-t}$:}
&= \begin{cases}
1 - (-1)^{2d} + 2d, & \text{ if } k=t=d\\
 -\f{1}{2} + (-1)^{k+t}\f{1}{2}+ d, & \text{ if } k\neq d, t \in \{k, n-k\} \\
1 - (-1)^{k+t}  -\f{1}{2}+ (-1)^{k+t}\f{1}{2} -\f{1}{2}+ (-1)^{k-t}\f{1}{2}, & \text{ else}
\end{cases}\\
&=  \begin{cases}
2d, &\text{ if }  k=t=d\\
 d, & \text{ if } k\neq d, t \in \{k, n-k\} \\
0, & \text{ else}.
\end{cases}
\end{align*} The eigenvalue $\lambda_n$ is:
\begin{align*}
\lambda_n(M) &= \sum_{s=0}^{n-1} m_s  \\
&= 1 - \cos\left(\f{2\pi kd}{n}\right) +2 \sum_{s=1}^d \cos\left(\f{2\pi sk}{n}\right)\\
&= 1 - (-1)^k  -1 + (-1)^k\\
&= 0.
\end{align*}
The matrix $M$ thus has all nonnegative eigenvalues, so the positive semidefinite constraints hold for each $k\in\{1, ..., d\}.$  

Finally, we note that our assumption that the cycle $\mathcal{C}_n$ is $1, 2, 3, ..., n-1, n, 1$ was without loss of generality: we can replace the $ A_j(\mathcal{C}_n)$ with $P^T A_j(\mathcal{C}_n)P=P^{-1} A_j(\mathcal{C}_n)P$ for a permutation matrix $P$ that permutes the labels of the vertices so that the cycle is $1, 2, 3, ..., n-1, n, 1.$  Then $M$ and $P^{-1}MP$ are similar matrices and share the same spectrum.  Thus $M$ is positive semidefinite if and only if $P^{-1}MP$ is positive semidefinite; $P^{-1}MP$ is the circulant matrix above, with $$m_0 =1, \hspace{5mm} m_d = \cos\left(\f{2\pi kd}{n}\right), \hspace{5mm} m_j=m_{n-j}=\cos\left(\f{2\pi jk}{n}\right), j=1,..., d-1,$$ and thus both $P^{-1}MP$ and $M$ are positive semidefinite.\hfill
\end{proof} 
We briefly remark that de Klerk et al.\ \cite{Klerk08} also use the eigenvalue properties of circulant matrices in proving that the SDP is a relaxation of the TSP.  They use the fact that each individual $A_k(\mathcal{C}_n)$ is circulant to compute the eigenvalues of each $A_k(\mathcal{C}_n),$ while we use the fact that the linear combinations of those matrices denoted above by $M$ is circulant.

\section{The Unbounded Integrality Gap}\label{Idea}
To show that the SDP has an arbitrarily bad integrality gap, we demonstrate a family of instances of edge costs for which we can upper bound the SDP's objective value.  We consider an instance with two groups of $n/2$ vertices.  The costs associated to intergroup edges will be expensive (1), while the costs of intragroup edges, negligible (0).  As noted in the introduction, this instance is equivalent to both a cut semimetric and an instance where the costs are given by  Euclidean distances in $\R^1.$   Explicitly, we will use the cost matrix
$$\hat{C} := \begin{pmatrix} 
0 &\cdots  &0 & 1 & \cdots & 1 \\
\vdots & \ddots & \vdots & \vdots  & \ddots  & \vdots  \\
0 &\cdots  &0 & 1 & \cdots & 1\\
1 &\cdots  &1 & 0 & \cdots & 0 \\
\vdots & \ddots & \vdots & \vdots  & \ddots  & \vdots  \\
1 &\cdots  &1 & 0 & \cdots & 0 \\
\end{pmatrix} = \begin{pmatrix} 0 & 1 \\ 1 & 0 \end{pmatrix} \otimes J_{d}.$$
Notice that the edge costs embedded in this matrix are metric.

Throughout this paper, we reserve $U$ and $W$ to refer to the two groups of vertices, so that $|U|=|W|=d$ and $V=U\cup W$.  In a Hamiltonian cycle $\delta(U) \geq 2,$ so that any feasible solution to the TSP must use the expensive intergroup edges at least twice. 
 We can achieve a tour costing $2$ with a tour that starts in $U$, goes through all the vertices in $U$, crosses to $W$, goes through the vertices in $W$, and then returns to $U$.    
Hence $\TSPOPT(\hat{C}) = 2.$

We state our main result:
\begin{thm}\label{MainThm} $$\SDPOPT(\hat{C}) \leq \f{\pi^2}{2n} \TSPOPT(\hat{C}).$$ 
\end{thm}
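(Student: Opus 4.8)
The plan is to exhibit an explicit family of feasible solutions to the SDP (\ref{eq:SDP}) for the cost matrix $\hat{C}$ whose objective value is at most $\f{\pi^2}{n}$, since then, together with $\TSPOPT(\hat{C})=2$, we get $\SDPOPT(\hat{C}) \le \f{\pi^2}{n} = \f{\pi^2}{2n}\TSPOPT(\hat{C})$. The key idea is that the instance $\hat{C} = \lp \begin{smallmatrix} 0 & 1 \\ 1 & 0 \end{smallmatrix} \rp \otimes J_d$ is highly symmetric: it is invariant under permuting vertices within $U$, within $W$, and under swapping $U$ and $W$. So I would look for solutions that respect this symmetry, i.e. matrices $X^{(k)}$ that, when written in $2\x 2$ block form (blocks indexed by $U,W$), have each block a combination of $J_d$ and $I_d$. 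Concretely I would posit $X^{(k)} = \lp \begin{smallmatrix} a_k & b_k \\ b_k & a_k \end{smallmatrix} \rp \otimes J_d + \lp \begin{smallmatrix} p_k & q_k \\ q_k & p_k \end{smallmatrix} \rp \otimes I_d$ for scalars to be chosen (with the diagonal forced to zero, so $p_k + a_k = 0$, reducing the free parameters). The cost of such a solution is $\f12 \tr(\hat C X^{(1)})$, which depends only on the ``off-diagonal-block'' weight, namely $b_1$ times $d^2$; so I want $b_1$ small.

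The main work is then to translate the SDP constraints into linear constraints on the scalar parameters. The constraints $X^{(k)} \ge 0$ and $\sum_j X^{(j)} = J-I$ are immediately linear in the scalars. For the PSD constraints $I + \sum_j \cos\lp\f{2\pi jk}{n}\rp X^{(j)} \succeq 0$, I would exploit simultaneous diagonalization: every matrix of the form $M \otimes J_d + N \otimes I_d$ (with $M, N$ symmetric $2\x 2$) has eigenvalues obtained from those of $M$ and $N$ — namely, on the subspace where the $J_d$ factor acts as $d$ (the all-ones vector within each group) we get eigenvalues of $dM + N$, and on the complementary subspace ($J_d$ acts as $0$) we get eigenvalues of $N$, each with multiplicity $d-1$. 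Thus each PSD constraint reduces to the positive semidefiniteness of two fixed $2\x 2$ matrices built linearly from the scalar parameters and the cosines — i.e. to a small system of linear and $2\x2$-determinant inequalities. This is precisely the ``simpler linear program'' promised in the introduction; I would want to arrange the ansatz so that the $2\x 2$ blocks are actually diagonal or otherwise make the PSD conditions genuinely linear.

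The crux — and the step I expect to be the real obstacle — is actually producing a feasible point of this reduced linear program with $b_1 = O(1/n)$ (equivalently, objective $O(1/n)$), analytically and for all even $n$. I would look for inspiration from the canonical feasible solution $X^{(k)} = A_k(\mathcal{C}_n)$ of Proposition \ref{prop:relax}: that solution has each $X^{(k)}$ with total weight $n$ but spread over only two ``diagonals''; I want instead to smear the weight of each $X^{(k)}$ as uniformly as possible across the two blocks so that the expensive block $b_1$ carries almost none of the cost of $X^{(1)}$, while still meeting the $\cos$-weighted PSD inequalities. The trigonometric identity of Lemma \ref{lem:trig} (and its relatives, sums of $j^2$ or $j\cos$ type weights) will be what makes the PSD eigenvalue expressions collapse to nonnegative closed forms; the factor $\pi^2$ in the bound strongly suggests that the optimal choice of the $a_k$ (or of $b_1$) involves a weighting like $1/k^2$ or a quadratic in $k$, whose relevant sum $\sum_{k\ge 1} 1/k^2 = \pi^2/6$ produces the constant. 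So after setting up the reduced LP I would guess a one-parameter family of candidate solutions of this ``quadratic/$1/k^2$ weighting'' form, verify feasibility by the circulant/Kronecker eigenvalue computation plus Lemma \ref{lem:trig}, read off the objective value, and optimize the free parameter to land at $\f{\pi^2}{2n}\TSPOPT(\hat C)$.
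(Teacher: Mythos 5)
Your setup — respecting the symmetry of $\hat C$, positing block-structured $X^{(k)}$, using the Kronecker eigenvalue decomposition to reduce the PSD constraints to a small system, and optimizing the off-diagonal-block weight $b_1$ — is exactly the paper's strategy, and the reduction you describe does work (your worry about non-linear $2\times 2$ determinant conditions is unnecessary: since all the relevant $2\times 2$ matrices are symmetric circulants, they are simultaneously diagonalized by $(1,1)$ and $(1,-1)$, so the PSD conditions stay linear in the scalar parameters even with your extra $q_k$; the paper simply fixes $q_k=0$ from the start). However, the step you yourself flag as ``the real obstacle'' — producing an explicit feasible point of the reduced LP with objective $O(1/n)$ — is the entire content of the theorem, and you do not supply it. Your heuristic for where the $\pi^2$ comes from is also off: it is not a Basel-type $\sum 1/k^2=\pi^2/6$ sum, nor a $1/k^2$ or quadratic-in-$k$ weighting. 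The paper takes $a_i=\tfrac{2}{n-2}\bigl(\cos\bigl(\tfrac{\pi i}{d}\bigr)+1\bigr)$ (a ``raised-cosine'' profile), verifies feasibility via the product-to-sum identity and Lemma \ref{lem:trig}, which makes all the $a^{(k)}$ collapse to one of two constant values so the LP constraints are met with equality, and the $\pi^2$ then appears from the Taylor bound $\cos(\pi/d)\ge 1-\pi^2/(2d^2)$ applied to $b_1=\tfrac{2}{n}\bigl(1-\cos(\pi/d)\bigr)$. Until that explicit ansatz (or some other one) is written down and its feasibility verified, there is no proof, only a correct description of how a proof would be organized.
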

As a consequence:
\begin{cor}\label{MainCor}
The SDP has an unbounded integrality gap.  That is, there exists no constant $\al >0$ such that $$\f{\TSPOPT(C)}{\SDPOPT(C)} \leq \al $$ for all cost matrices $C$.
\end{cor}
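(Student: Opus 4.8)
The plan is to exhibit an explicit feasible solution to the SDP (\ref{eq:SDP}) for the cost matrix $\hat C$ whose objective value is at most $\f{\pi^2}{2n}\cdot 2$, and then Corollary \ref{MainCor} follows immediately since $\TSPOPT(\hat C)=2$ is a fixed constant while the ratio $\TSPOPT(\hat C)/\SDPOPT(\hat C) \geq n/\pi^2 \to \infty$. To build such a solution I would exploit the symmetry of $\hat C$, which is invariant under permutations fixing the bipartition $\{U, W\}$. It is natural to look for a solution in which each $X^{(k)}$ is constant on the ``within-group'' pairs and constant on the ``across-group'' pairs; that is, $X^{(k)} = \begin{pmatrix} a_k & b_k \\ b_k & a_k \end{pmatrix}\otimes J_d - a_k(I_2\otimes I_d)$ for some scalars $a_k, b_k \geq 0$ (subtracting the diagonal so the matrices vanish on the diagonal, as required). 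Here $\f12 \tr(\hat C X^{(1)}) = d^2 b_1 = \f{n^2}{4} b_1$, so I want $b_1$ as small as possible, ideally of order $1/n^3$.

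Next I would translate the SDP constraints into constraints on the scalars $(a_k, b_k)$. The nonnegativity constraints $X^{(k)} \geq 0$ become $a_k, b_k \geq 0$. The constraint $\sum_{j=1}^d X^{(j)} = J - I$ becomes $\sum_j a_j = d-1$ (from the within-group off-diagonal entries, since the diagonal is already handled) and $\sum_j b_j = d$ (from the across-group entries). The PSD constraints $I + \sum_j \cos(\f{2\pi jk}{n}) X^{(j)} \succeq 0$ are the crux: each such matrix again has the block-constant structure $\begin{pmatrix} \alpha_k & \beta_k \\ \beta_k & \alpha_k\end{pmatrix}\otimes J_d + (1 - \alpha_k)(I_2\otimes I_d)$ where $\alpha_k = \sum_j \cos(\f{2\pi jk}{n}) a_j$ and $\beta_k = \sum_j \cos(\f{2\pi jk}{n}) b_j$. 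The eigenvalues of a matrix of the form $\begin{pmatrix} \alpha & \beta \\ \beta & \alpha\end{pmatrix}\otimes J_d + \gamma(I\otimes I)$ are easy to compute: $\gamma$ with multiplicity $2(d-1)$, and $\gamma + d(\alpha \pm \beta)$. So the PSD constraints reduce to the linear inequalities $1 - \alpha_k \geq 0$ and $1 - \alpha_k + d(\alpha_k \pm \beta_k) \geq 0$ for each $k=1,\dots,d$ — that is, a linear program in $(a_j, b_j)$. This is exactly the reduction promised in the introduction.

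Then I would analytically produce a feasible point of this LP with $b_1 = O(1/n^3)$. A clean guess is to mimic the cycle solution's spectrum: recall from the proof of Proposition \ref{prop:relax} that the cycle's distance matrices give $\alpha_k = \beta_k$-type cancellations making $1 + \sum\cos(\cdot)A_j$ have eigenvalue $0$ at $k = t$. I would try to set the $a_j$ so that $\sum_j a_j = d-1$ and the $\alpha_k$ behave like those of a cycle, and choose the $b_j$ as a small perturbation: e.g. $b_j$ concentrated near $j = d$ (long edges within the bipartite-respecting structure are cheap to use since $\hat C$ only charges edge $\{i,j\}$ when they're in different groups, and being ``far apart'' in a suitable cyclic layout can be arranged to keep $b_1$ tiny). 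The main obstacle I anticipate is simultaneously satisfying all $2d$ PSD inequalities (the $\pm\beta_k$ and $1-\alpha_k$ bounds) while driving $b_1$ down to order $1/n^3$; this requires a careful explicit choice of the $b_j$ — likely something like $b_j$ proportional to a shifted/scaled version of $1 - \cos(\pi j / d)$ or a telescoping combination — and verifying the inequalities via the trigonometric summation in Lemma \ref{lem:trig} and its relatives. Once the explicit $(a_j, b_j)$ are written down and the finitely many inequalities checked, plugging $b_1$ into $\SDPOPT(\hat C) \leq \f{n^2}{4} b_1$ gives the bound in Theorem \ref{MainThm}, and Corollary \ref{MainCor} is then immediate by letting $n \to \infty$.
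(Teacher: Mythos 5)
Your high-level plan matches the paper's own: exploit the bipartite symmetry of $\hat C$, look for solutions of the block-constant form $X^{(j)} = \begin{pmatrix}a_j & b_j \\ b_j & a_j\end{pmatrix}\otimes J_d - a_j I_n$, reduce the $d$ PSD constraints to scalar linear inequalities via the eigenvalue decomposition of a matrix of type $\begin{pmatrix}\alpha & \beta \\ \beta & \alpha\end{pmatrix}\otimes J_d + \gamma I$, and then exhibit an explicit LP-feasible point with $b_1 = O(1/n^3)$. Your eigenvalue computation for that Kronecker form is correct, and your guess that $b_j$ should be proportional to $1-\cos(\pi j/d)$ is exactly what the paper uses (namely $b_j=\f{2}{n}\left(1-\cos\left(\f{\pi j}{d}\right)\right)$ for $j<d$). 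So the strategy is sound and is the same as the paper's.

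Two things need repair before this is a proof. First, the translation of $\sum_{j=1}^d X^{(j)}=J-I$ is wrong: each off-diagonal entry of $J-I$ equals $1$, and each within-group off-diagonal entry of $\sum_j X^{(j)}$ equals $\sum_j a_j$, so the correct constraints are $\sum_j a_j=1$ and $\sum_j b_j=1$, not $d-1$ and $d$ (with your numbers the within-group blocks of the sum would equal $(d-1)J_d$ off the diagonal, which cannot equal $J-I$). You appear to have conflated entry-wise equality with a row-sum count. Second, the argument stops at a plausible ansatz; what actually closes the gap — and what the paper does — is to eliminate the $b_j$ by forcing the row sums $(d-1)a_j+db_j$ to match those of the cycle distance matrices ($2$ for $j<d$, $1$ for $j=d$), so that the LP is entirely in the $a_j$; then one shows, via Lemma~\ref{lem:trig} and the product-to-sum identity, that $a_i=\f{2}{n-2}\left(\cos\left(\f{\pi i}{d}\right)+1\right)$ satisfies $\sum_i a_i=1$ and $-\f{2}{n-2}\le a^{(k)}\le 1$ for all $k$, from which $b_1=\f{2}{n}\left(1-\cos\left(\f{\pi}{d}\right)\right)\le \f{\pi^2}{nd^2}$ follows by Taylor's theorem. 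Without actually carrying out that verification you have a candidate solution, not a proof that Theorem~\ref{MainThm} (and hence Corollary~\ref{MainCor}) holds.
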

To prove this theorem, we construct a family of feasible SDP solutions whose cost becomes arbitrarily small as $n$ grows.  We will specifically search for solutions respecting the symmetry of $\hat{C}$: matrices $X^{(j)}$ that place a weight of $a_j$ on each intragroup edge and a weight of $b_j$ on each intergroup edge.  
 Moreover, we choose\footnote{Note that de Klerk et al.\ \cite{Klerk08} actually show that every feasible solution must satisfy $X^{(i)}e=2e$ for $i=1,. ..., d-1$ and $X^{(i)}e=e$ for $i=d$ (when $n$ is even).  The fact that \emph{every} feasible solution matches these row sums is not something we will need, though we implicitly use it to inform the  solutions we search for.   We provide an alternative, direct proof that all feasible solutions must satisfy these row sums in the appendix in Theorem \ref{thm:apthm}.
} the $b_i$ so as to enforce that the row sums of the $X^{(j)}$ match those of the distance matrices $A_j(\mathcal{C}_n)$ introduced earlier: $X^{(j)}e = A_j(\mathcal{C}_n)e= 2e$ for $j=1, ..., d-1$ and $X^{(d)}e=A_d(\mathcal{C}_n)e=e.$ Since every vertex is incident to $d-1$ edges in its group (with weight $a_i$) and $d$ edges in the other group (with weight $b_i$), we have 
$$(d-1)a_i + db_i = \begin{cases} 2, & \text{ if } i=1, ..., d-1 \\ 1,& \text{ if } i=d.\end{cases}$$  Rearranging for the $b_i$ lets us express the $j$-th solution matrix of this form as \begin{equation}X^{(j)}  =  \left( \begin{pmatrix} a_j & b_j \\ b_j & a_j \end{pmatrix} \otimes J_d \right) - a_j I_n, \hspace{5mm}  b_j  = \begin{cases} \f{4}{n} - \left(1-\f{2}{n}\right)a_j, & \text { if }j=1, ..., d-1 \\  \f{2}{n} - \left(1-\f{2}{n}\right)a_j, & \text{ if } j=d,\end{cases} \label{eq:solnstructure}\end{equation}  where we subtract $a_jI_n$ so that the diagonal is zero.  The cost of such a solution is entirely determined by the  $\left(n/2\right)^2$ intergroup edges, each of cost $b_1$.  Each edge is accounted for twice in $\tr(\hat{C}X^{(1)}),$ but the objective scales by $1/2$, so the cost of this solution is
$$ \left( \f{n}{2}\right)^2 b_1 .$$   
Theorem \ref{MainThm} then will follow from the claim below.
\begin{cm}  Choosing the  parameters 
$$a_i = \f{2}{n-2} \left(\cos\left(\f{\pi i}{d}\right)+1\right), \hspace{5mm} i=1, ..., d,$$ so that $$ b_i  = \begin{cases} \f{2}{n}\left(1-\cos\left(\f{\pi i}{d}\right)\right), & \text { if }i=1, ..., d-1 \\ \f{2}{n}, & \text{ if } i=d, \hspace{5mm} i=1, ..., d,\end{cases}$$
  leads to a feasible SDP solution with matrices $X^{(j)}$ as given in Equation (\ref{eq:solnstructure}). 
\end{cm}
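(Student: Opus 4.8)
The plan is to verify, one family at a time, that the matrices $X^{(j)}$ of \eqref{eq:solnstructure} built from the stated $a_i$ (and the resulting $b_i$) satisfy every constraint of the SDP \eqref{eq:SDP}; throughout I assume $n\ge 4$, i.e.\ $d\ge 2$, so that $n-2>0$. Membership in $S^n$ is immediate from the block form. For $X^{(k)}\ge 0$ I would note that $a_i=\frac{2}{n-2}\big(\cos(\pi i/d)+1\big)\ge 0$ since $\cos\ge -1$; that $b_i=\frac{2}{n}\big(1-\cos(\pi i/d)\big)\ge 0$ for $i<d$ since $\cos\le 1$, while $b_d=\frac{2}{n}>0$; and that subtracting $a_jI_n$ only turns the diagonal entries into zeros. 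Hence every entry of every $X^{(j)}$ is nonnegative.

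For $\sum_{j=1}^d X^{(j)}=J-I$, the diagonals already agree (all zero), so it suffices to show $\sum_{j=1}^d a_j=1$ and $\sum_{j=1}^d b_j=1$. Using $\frac{2}{n-2}=\frac{1}{d-1}$ and $\frac{\pi i}{d}=\frac{2\pi i}{n}$, one gets $\sum_{i=1}^d a_i=\frac{1}{d-1}\big(d+\sum_{i=1}^d\cos\frac{2\pi i}{n}\big)$, which equals $1$ by Lemma \ref{lem:trig} with $k=1$ (the lemma gives $\sum_{i=1}^d\cos\frac{2\pi i}{n}=-1$). Either the same lemma, or summing the defining identity $(d-1)a_j+db_j\in\{1,2\}$ over $j$, then gives $\sum_{j=1}^d b_j=1$.

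The substantive step is the positive semidefinite constraint: $M_k:=I+\sum_{j=1}^d\cos\big(\frac{2\pi jk}{n}\big)X^{(j)}\succeq 0$ for $k=1,\dots,d$. Substituting \eqref{eq:solnstructure} gives $M_k=(1-\alpha_k)I_n+B_k\otimes J_d$, where $\alpha_k:=\sum_{j=1}^d\cos\big(\frac{2\pi jk}{n}\big)a_j$ and $B_k$ is the $2\times 2$ matrix with diagonal entries $\alpha_k$ and off-diagonal entries $\beta_k:=\sum_{j=1}^d\cos\big(\frac{2\pi jk}{n}\big)b_j$. Since $B_k$ has eigenvalues $\alpha_k\pm\beta_k$ (eigenvectors $(1,\pm1)^T$) and $J_d$ has eigenvalues $d$ (once) and $0$ ($d-1$ times), the eigenvalues of $M_k$ are $1-\alpha_k+d(\alpha_k+\beta_k)$ (once, with eigenvector $e$), $1-\alpha_k+d(\alpha_k-\beta_k)$ (once), and $1-\alpha_k$ (multiplicity $n-2$). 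The first of these is automatically zero: because the $b_i$ were chosen so that $X^{(j)}e=2e$ for $j<d$ and $X^{(d)}e=e$, we have $M_ke=\big(1+2\sum_{j=1}^{d-1}\cos\frac{2\pi jk}{n}+\cos\frac{2\pi dk}{n}\big)e$, and the scalar here collapses to $0$ using $\cos\frac{2\pi dk}{n}=(-1)^k$ and Lemma \ref{lem:trig}; equivalently $d\beta_k=-\big(1+(d-1)\alpha_k\big)$, which lets me rewrite the second eigenvalue as $2\big(1+(d-1)\alpha_k\big)$. It then remains only to evaluate $\alpha_k$ and check $1-\alpha_k\ge 0$ and $1+(d-1)\alpha_k\ge 0$. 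Writing $\cos(\pi i/d)=\cos(2\pi i/n)$ and applying the product-to-sum identity, $\alpha_k$ becomes $\frac{1}{d-1}$ times a combination of the sums $\sum_{i=1}^d\cos\frac{2\pi i\ell}{n}$ for $\ell\in\{k-1,k,k+1\}$, each handled by Lemma \ref{lem:trig} (the exceptional case $\ell=0$, which occurs only when $k=1$, contributes $d$). This yields $\alpha_1=\frac{d-2}{2(d-1)}$ and $\alpha_k=\frac{-1}{d-1}$ for $2\le k\le d$; in both cases $1-\alpha_k>0$, while $1+(d-1)\alpha_k$ equals $\frac{d}{2}$ when $k=1$ and $0$ when $k\ge 2$. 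Thus all eigenvalues of every $M_k$ are nonnegative, so all constraints hold and the solution is feasible.

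The step I expect to be the real obstacle is the evaluation of $\alpha_k$: collapsing $\sum_{i=1}^d\cos\big(\frac{2\pi ik}{n}\big)\cos\big(\frac{2\pi i}{n}\big)$ to closed form, keeping track of the parity factors $(-1)^k$, and, crucially, treating $k=1$ (where a $\cos 0$ term replaces a genuinely oscillating sum) separately from the range $2\le k\le d$. A minor related point is that Lemma \ref{lem:trig} requires $0<\ell<n$, so one must check that $k\pm 1$ stays in that range; this is exactly why $d\ge 2$ is needed. Once the values of $\alpha_k$ are in hand, everything else is bookkeeping.
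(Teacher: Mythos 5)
Your proposal is correct and follows essentially the same route as the paper: both decompose $M_k=(1-\alpha_k)I_n+B_k\otimes J_d$, read off its three eigenvalues from the Kronecker structure, use the chosen $b_i$ (i.e.\ the row-sum normalization) to show one eigenvalue vanishes identically and to reduce the other to $2\bigl(1+(d-1)\alpha_k\bigr)$, and then evaluate $\alpha_k$ with the product-to-sum identity and Lemma \ref{lem:trig}, isolating $k=1$ as the special case. The paper merely packages the first half as a standalone equivalence with a linear program (Proposition \ref{prop:SDPLP} and Corollary \ref{rangecor}) and the second half as Claims \ref{cm:asum} and \ref{cm:aj}, whereas you present the same calculations inline; the substance is identical.
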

\noindent In particular 
$b_1 = \f{2}{n}\left(1-\cos\left(\f{\pi}{d}\right)\right).$ Basic facts from calculus will show that this is roughly $\f{1}{n^3},$ so that the cost of our solution is $(n/2)^2b_1$ is roughly $\f{1}{n},$ which gets arbitrarily small with $n$.
%

The main work in proving this claim involves showing that the $X^{(j)}$ satisfy the PSD  constraints. We first characterize the choices of the $a_i$ that lead to feasible SDP solutions of the form in Equation (\ref{eq:solnstructure}); this is done in Section \ref{SDPtoLP}.  There we exploit the structure of matrices in the form of Equation (\ref{eq:solnstructure}) to write the PSD constraints on the $X^{(j)}$ as linear constraints on the $a_i$; these linear constraints will imply that all eigenvalues of  the term $I + \sum_{i=1}^d \cos \left(\f{2\pi ik}{n}\right) X^{(i)}$ are nonnegative.  To finish proving the claim, in Section \ref{FindingSolns} we show that the claimed $a_i$  are indeed feasible. 


\subsection{Finding Structured Solutions to the SDP via Linear Programing}\label{SDPtoLP}
In this section we prove the following:
\begin{prop}\label{prop:SDPLP}
For the SDP, finding a minimum-cost feasible solution   of the form $$X^{(j)}  =  \left( \begin{pmatrix} a_j & b_j \\ b_j & a_j \end{pmatrix} \otimes J_d \right) - a_j I_n \hspace{5mm} \text{ where } \hspace{5mm}  b_j  = \begin{cases} \f{4}{n} - \left(1-\f{2}{n}\right)a_j, & \text { if }j=1, ..., d-1 \\  \f{2}{n} - \left(1-\f{2}{n}\right)a_j, & \text{ if } j=d,\end{cases} $$ for $j=1, ..., d$ is equivalent  to solving the following linear program:
$$ \normalfont
\begin{array}{llll}
\max & a_1 &  \\
\text{subject to} & \sum_{i=1}^d    \cos\left(\f{2\pi ik}{n}\right)  a_i & \geq -\f{2}{n-2}, & k=1, ..., d\\
 & \sum_{i=1}^d    \cos\left(\f{2\pi ik}{n}\right)  a_i & \leq 1, &  k=1, ..., d\\
& \sum_{i=1}^d a_i &= 1 \\
& a_i & \leq \f{4}{n-2},  & i=1, ..., d-1\\
& a_d& \leq \f{2}{n-2}\\
& a_i& \geq 0,  & i=1, ..., d.
\end{array} 
$$
\end{prop}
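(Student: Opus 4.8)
The plan is to rewrite every constraint of the SDP (\ref{eq:SDP}), restricted to matrices of the shape (\ref{eq:solnstructure}), as a linear constraint on the vector $(a_1,\dots,a_d)$, and then to note that the objective, so restricted, is an affine and strictly decreasing function of $a_1$; maximizing $a_1$ over the resulting polytope is then exactly the displayed linear program, and since the map $a\mapsto X$ given by (\ref{eq:solnstructure}) is a bijection onto structured solutions, the two optimization problems are equivalent.

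First I would dispose of the easy constraints. Every matrix of the form (\ref{eq:solnstructure}) is symmetric, so $X^{(k)}\in S^n$ is free. The off-diagonal entries of $X^{(k)}$ are $a_k$ on intragroup edges and $b_k$ on intergroup edges, and the diagonal is zero, so $X^{(k)}\ge 0$ is equivalent to $a_k\ge 0$ and $b_k\ge 0$; solving $b_k\ge 0$ for $a_k$ using the formula in (\ref{eq:solnstructure}) gives precisely $a_k\le\f{4}{n-2}$ for $k=1,\dots,d-1$ and $a_d\le\f{2}{n-2}$. For $\sum_{j=1}^d X^{(j)}=J-I$: the diagonals agree automatically, the intragroup off-diagonal entries force $\sum_{j=1}^d a_j=1$, and the intergroup entries force $\sum_{j=1}^d b_j=1$; but a one-line computation using $n=2d$ shows that the chosen $b_j$ already satisfy $\sum_{j=1}^d b_j=1$ whenever $\sum_{j=1}^d a_j=1$ (equivalently, the $b_j$ were defined so that $X^{(j)}e$ is $2e$ or $e$, and these row sums are consistent with $(J-I)e$). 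So this constraint collapses to the single equation $\sum_{j=1}^d a_j=1$.

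The substantive part is the positive semidefinite family. Writing $c_{jk}=\cos\left(\f{2\pi jk}{n}\right)$, $A_k=\sum_{j=1}^d c_{jk}a_j$, and $B_k=\sum_{j=1}^d c_{jk}b_j$, inserting (\ref{eq:solnstructure}) and collecting the multiples of $I_n$ gives
$$I+\sum_{j=1}^d c_{jk}X^{(j)} \;=\; (1-A_k)\,I_n \;+\; \left(\begin{pmatrix} A_k & B_k\\ B_k & A_k\end{pmatrix}\otimes J_d\right).$$
Since $J_d$ has eigenvalues $d$ (simple) and $0$ (multiplicity $d-1$) and $\begin{pmatrix} A_k & B_k\\ B_k & A_k\end{pmatrix}$ has eigenvalues $A_k\pm B_k$, the Kronecker product has eigenvalues $d(A_k+B_k)$, $d(A_k-B_k)$, and $0$ with multiplicity $n-2$; adding $(1-A_k)I_n$ shifts each of these by $1-A_k$. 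Next I would simplify $B_k$: substituting the definition of the $b_j$, factoring out the common coefficient $-\left(1-\f{2}{n}\right)$ of $A_k$, and evaluating $\sum_{j=1}^d c_{jk}$ by Lemma \ref{lem:trig} together with $c_{dk}=\cos(\pi k)=(-1)^k$, the parity contributions cancel and one obtains the clean identity $B_k=-\f{2}{n}-\f{n-2}{n}A_k$ for every $k\in\{1,\dots,d\}$. Feeding this back, and using $d=\f{n}{2}$, the three eigenvalues become $1-A_k$ (multiplicity $n-2$), $d(A_k+B_k)=0$ (identically zero — this is the eigenvalue of the all-ones eigenvector, reflecting the prescribed row sums), and $d(A_k-B_k)=2+(n-2)A_k$. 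Hence the $k$-th PSD constraint holds if and only if $-\f{2}{n-2}\le A_k\le 1$, i.e. exactly the $k$-th pair of inequalities in the LP.

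Finally, the objective. For $X^{(1)}$ of the form (\ref{eq:solnstructure}) one has $\f{1}{2}\tr(\hat C X^{(1)})=(n/2)^2 b_1$, and substituting $b_1=\f{4}{n}-\left(1-\f{2}{n}\right)a_1$ (valid since $1\le d-1$) gives $(n/2)^2 b_1=n-\f{n(n-2)}{4}a_1$, a strictly decreasing affine function of $a_1$. Therefore minimizing the SDP objective over structured solutions is the same as maximizing $a_1$ subject to the linear system assembled above, which is precisely the stated linear program. I expect the one genuinely delicate step to be the trigonometric bookkeeping that yields the identity $B_k=-\f{2}{n}-\f{n-2}{n}A_k$ and, with it, the vanishing of the middle eigenvalue; the remaining manipulations are direct substitution.
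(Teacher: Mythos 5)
Your proposal is correct and follows essentially the same route as the paper: restrict to the structured ansatz, reduce $X^{(k)}\geq 0$ and $\sum_j X^{(j)}=J-I$ to linear bounds on the $a_i$, compute the eigenvalues of $I+\sum_j \cos(2\pi jk/n)X^{(j)}$ via the Kronecker-product structure, use Lemma~\ref{lem:trig} to express $B_k$ affinely in $A_k$, and observe that the resulting PSD condition is $-\tfrac{2}{n-2}\leq A_k\leq 1$. One small notational slip: after simplifying, you list the second and third eigenvalues as ``$d(A_k+B_k)=0$'' and ``$d(A_k-B_k)=2+(n-2)A_k$'', but the full eigenvalues are $1-A_k+d(A_k\pm B_k)$ --- the shift $1-A_k$ should not be dropped (indeed $d(A_k+B_k)=A_k-1$, and only after adding $1-A_k$ does it vanish); the numerical conclusions you state are nonetheless the correct ones.
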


\begin{proof} 
First we notice that maximizing $a_1$ is equivalent to minimizing $b_1,$ which is in turn equivalent to minimizing the cost $\left(\f{n}{2}\right)^2b_1$ of the SDP solution. The $X^{(i)}$ are nonnegative if and only if  $a_i \geq 0, b_i\geq 0,$ for $i=1, ..., d.$  The constraints $a_i\geq 0$ are explicit in the linear program, and $b_i\geq 0$ is equivalent to  $a_i  \leq \f{4}{n-2},   i=1, ..., d-1$ and $ a_d \leq \f{2}{n-2}.$  Finally, the constraint that the $X^{(j)}$ to sum to $J-I$ is equivalent to $ \sum_{i=1}^{d} a_i = 1$ and $\sum_{i=1}^d b_i=1.$   However,  $\sum_{i=1}^d b_i=1$ follows from requiring $\sum_{i=1}^{d} a_i = 1$:
\begin{align*}
\sum_{i=1}^d b_i &= \sum_{i=1}^{d-1} \left(\f{4}{n}-\left(1-\f{2}{n}\right)a_i\right) + \left(\f{2}{n}-\left(1-\f{2}{n}\right)a_i\right) \\
&= (d-1)\f{4}{n} + \f{2}{n} -\left(1-\f{2}{n}\right)\sum_{i=1}^d a_i\\
&=2 - \f{2}{n} - \left(1-\f{2}{n}\right) \\
&= 1.
\end{align*}

It remains to show that the $k$-th SDP constraint is equivalent to 
$$ -\f{2}{n-2}\leq \sum_{i=1}^d    \cos\left(\f{2\pi ik}{n}\right)  a_i \leq 1, \hspace{5mm}  k=1, ..., d.$$
The $k$-th SDP constraint is:
$$I + \sum_{i=1}^d \cos\left(\f{2\pi ik}{n}\right) X^{(i)} \succeq 0.$$  Using properties of the Kroenecker product (see Chapter 4 of Horn and Johnson \cite{Hor91}) and the structure of our $X^{(j)}$, we simplify this:
\begin{align*}
I_n + \sum_{i=1}^d \cos\left(\f{2\pi ik}{n}\right) X^{(i)} &= I_n +  \sum_{i=1}^d \cos\left(\f{2\pi ik}{n}\right) \left( \left( \begin{pmatrix} a_i & b_i \\ b_i & a_i \end{pmatrix} \otimes J_d \right) -a_i I_n\right)\\
&= \left(1-\sum_{i=1}^d    \cos\left(\f{2\pi ik}{n}\right)  a_i\right)I_n + \left(  \sum_{i=1}^d \cos\left(\f{2\pi ik}{n}\right) \begin{pmatrix} a_i & b_i \\ b_i & a_i \end{pmatrix} \right) \otimes J_d  \\
&= (1- a^{(k)})I_n +  \begin{pmatrix}  a^{(k)} &  b^{(k)} \\ b^{(k)} &  a^{(k)} \end{pmatrix} \otimes J_d,
\end{align*}
where $$a^{(k)} =\sum_{i=1}^d    \cos\left(\f{2\pi ik}{n}\right)  a_i, \hspace{5mm} b^{(k)} =\sum_{i=1}^d    \cos\left(\f{2\pi ik}{n}\right)  b_i$$ depend on the full sequences  $a_1, ..., a_d, b_1, ..., b_d$ and on $k$.

To explicitly write the eigenvalues of the $k$-th SDP constraint, we use several helpful facts from linear algebra.
\begin{fact} \hfill
\begin{itemize}
 \item The $pq$ eigenvalues of $A\otimes B$ with $A\in \R^{p\times p}$ and $B\in \R^{q \times q}$ are $\lambda_i(A) \lambda_j(B)$ for $i=1, ..., p$ and $j=1, ..., q$.  See Theorem 4.2.12 in Chapter 4 of Horn and Johnson  \cite{Hor91}.
 \item The rank one matrix $J_d = ee^T$, with $e$ of dimension $d$, has one eigenvalue $d$ corresponding to eigenvector $e$, and all other eigenvalues are zero.  (Choose, e.g., any $d-1$ linearly independent vectors that are orthogonal to $e$.)
 \item  $\lambda(A)$ is an eigenvalue of $A$ with eigenvector $v$ if and only if $\lambda(A)+c$ is an eigenvalue of $A+cI$ with eigenvector $v$. This follows by direct computation.
\item The eigenvalues of $\begin{pmatrix} a & b \\ b & a \end{pmatrix}$ are $a+b$ and $a-b$ with respective eigenvectors  $\begin{pmatrix} 1 \\ 1 \end{pmatrix}$ and   $\begin{pmatrix} 1 \\ -1 \end{pmatrix}.$   
 \end{itemize}
\end{fact}
\noindent From these facts, we obtain that the eigenvalues of $I + \sum_{i=1}^d \cos\left(\f{2\pi ik}{n}\right) X^{(i)}$ are: $$1 - a^{(k)}, \hspace{5mm}  1-a^{(k)} + \f{n}{2} \left(a^{(k)}+b^{(k)}\right), \hspace{5mm} \text{ and } 1-a^{(k)} + \f{n}{2} \left(a^{(k)}-b^{(k)}\right).$$   For example, $1-a^{(k)}$ has multiplicity $n-2$.  It corresponds to the $d-1$ zero eigenvalues of $J_d$, each of which gives rise to 2 zero eigenvalues of $\begin{pmatrix}  a^{(k)} &  b^{(k)} \\ b^{(k)} &  a^{(k)} \end{pmatrix} \otimes J_d.$

Therefore, for the $k$-th PSD  constraint to hold,  it suffices that the following three linear inequalities  hold: \begin{equation}\label{eq:lineqs} 1 - a^{(k)} \geq 0, \hspace{5mm} 1-a^{(k)} + \f{n}{2} \left(a^{(k)}+b^{(k)}\right) \geq 0,  \hspace{5mm} 1-a^{(k)} + \f{n}{2} \left(a^{(k)}-b^{(k)}\right) \geq 0. \end{equation}

We thus far have derived a system of inequalities on the $a_i, b_i$  that, if satisfied, imply a set of feasible solutions to the SDP.  We can further simplify these by writing the $b_i$ in terms of the $a_i$.  As in Proposition \ref{prop:relax}, we begin by writing the sum so that we can use Lemma \ref{lem:trig}.  We compute
%
\begin{align*}
b^{(k)} &=\sum_{i=1}^d    \cos\left(\f{2\pi ik}{n}\right)  b_i  \\
&=\left(\sum_{i=1}^{d-1}    \cos\left(\f{2\pi ik}{n}\right)  \left(\f{4}{n}-\left(1-\f{2}{n}\right)a_i\right)\right)+\cos\left(\f{2\pi dk}{n}\right)  \left(\f{2}{n}-\left(1-\f{2}{n}\right)a_d\right) \\
&=\f{4}{n} \left(\sum_{i=1}^{d}    \cos\left(\f{2\pi ik}{n}\right) \right)-\left(1-\f{2}{n}\right) \left(\sum_{i=1}^{d}    \cos\left(\f{2\pi ik}{n}\right) a_i\right)-\cos(\pi k)   \left(\f{2}{n}\right)
\intertext{Using Lemma \ref{lem:trig}:}
&=\f{4}{n}  \left(\f{-1+(-1)^k}{2}\right)-\left(1-\f{2}{n}\right)a^{(k)}-\left(-1\right)^k  \left(\f{2}{n}\right)\\
&=-\left(1-\f{2}{n}\right)a^{(k)} - \f{2}{n}.
\end{align*}
We use this relationship to simplify the second and third inequalities in Equation (\ref{eq:lineqs}) by writing them only in terms of $a^{(k)}.$  We obtain
$$1-a^{(k)} + \f{n}{2} (a^{(k)}+b^{(k)}) =1-a^{(k)}+\f{n}{2} \left(a^{(k)}-\left(1-\f{2}{n}\right)a^{(k)}-\f{2}{n}\right)=0,$$
and
$$ 1-a^{(k)} + \f{n}{2} (a^{(k)}-b^{(k)}) =1-a^{(k)} + \f{n}{2} \left(a^{(k)}+\left(1-\f{2}{n}\right)a^{(k)} + \f{2}{n}\right)= 2+(n-2)a^{(k)}.$$ 
Hence, the three inequalities in  Equation (\ref{eq:lineqs}) become
 $$-\f{2}{n-2}\leq a^{(k)} \leq 1,$$ and these inequalities are equivalent to ensuring that the $k$-th PSD constraint of the SDP in (\ref{eq:SDP}) hold.\hfill
\end{proof}
\begin{cor}\label{rangecor} 
Consider a possible solution to the SDP of the form $$X^{(j)}  =  \left( \begin{pmatrix} a_j & b_j \\ b_j & a_j \end{pmatrix} \otimes J_d \right) - a_j I_n \hspace{5mm} \text{ where } \hspace{5mm}  b_j  = \begin{cases} \f{4}{n} - \left(1-\f{2}{n}\right)a_j, & \text { if }j=1, ..., d-1 \\  \f{2}{n} - \left(1-\f{2}{n}\right)a_j, & \text{ if } j=d,\end{cases}.$$ The $k$th PSD constraint  $I + \sum_{j=1}^d \cos \left(\f{2\pi jk}{n}\right) X^{(j)} \succeq 0$ is equivalent to $-\f{2}{n-2}\leq a^{(k)} \leq 1.$
\end{cor}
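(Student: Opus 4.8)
The plan is to recognize that this equivalence is precisely what was established inside the proof of Proposition~\ref{prop:SDPLP}, but now asserted for an arbitrary matrix tuple of the stated form, with none of the normalization or nonnegativity side constraints imposed; so I would simply re-run the relevant portion of that argument, taking care to isolate exactly which steps do not rely on $\sum_i a_i = 1$ or the bounds $a_i \le 4/(n-2)$.

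First I would use the Kronecker-product identities (as in the proof of Proposition~\ref{prop:SDPLP}) to rewrite the $k$-th constraint matrix in block form as $(1-a^{(k)})I_n + \begin{pmatrix} a^{(k)} & b^{(k)} \\ b^{(k)} & a^{(k)} \end{pmatrix}\otimes J_d$, where $a^{(k)}=\sum_{i=1}^d \cos(2\pi ik/n)\,a_i$ and $b^{(k)}=\sum_{i=1}^d \cos(2\pi ik/n)\,b_i$; this step uses only the assumed form of the $X^{(j)}$. Then, reading off the spectrum via the Kronecker eigenvalue rule together with the eigenvalues of $J_d$ (namely $d$ once and $0$ with multiplicity $d-1$) and of $\begin{pmatrix} a & b \\ b & a\end{pmatrix}$ (namely $a\pm b$), the eigenvalues of the constraint matrix are $1-a^{(k)}$ (with multiplicity $n-2$), $1-a^{(k)}+\f{n}{2}(a^{(k)}+b^{(k)})$, and $1-a^{(k)}+\f{n}{2}(a^{(k)}-b^{(k)})$. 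Positive semidefiniteness is equivalent to all three being nonnegative.

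Next I would eliminate $b^{(k)}$ using the identity $b^{(k)} = -(1-\f{2}{n})a^{(k)} - \f{2}{n}$, which is derived from the closed forms of the $b_j$ together with Lemma~\ref{lem:trig}. Substituting makes the middle eigenvalue identically zero and turns the last into $2+(n-2)a^{(k)}$, so the three inequalities collapse to $1-a^{(k)}\ge 0$ and $2+(n-2)a^{(k)}\ge 0$, i.e.\ $-\f{2}{n-2}\le a^{(k)}\le 1$, as claimed. There is no genuine obstacle here, since every step above already appears verbatim in the proof of Proposition~\ref{prop:SDPLP}; the only point worth double-checking is that the derivation of $b^{(k)}$ via Lemma~\ref{lem:trig} likewise does not secretly invoke $\sum_i a_i = 1$ or the upper bounds on the $a_i$, which it does not, as it only manipulates the explicit formulas for the $b_j$.
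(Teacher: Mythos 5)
Your proposal is correct and follows the paper's approach exactly: Corollary~\ref{rangecor} is in the paper an immediate consequence of the computations carried out inside the proof of Proposition~\ref{prop:SDPLP} (Kronecker-block eigenvalue computation, elimination of $b^{(k)}$ via the $b_j$ formulas and Lemma~\ref{lem:trig}, collapse of the three eigenvalue inequalities to $-\f{2}{n-2}\le a^{(k)}\le 1$), and you accurately isolate the steps involved. Your observation that none of these steps invokes $\sum_i a_i=1$ or the nonnegativity/upper-bound constraints on the $a_i$ is precisely the point the Corollary is implicitly making, and it is correct.
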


\subsection{Analytically Finding Solutions to the Linear Program} \label{FindingSolns}
We now show that the following choice of the $a_i$ lead to $X^{(j)}$ that are feasible for the SDP  (\ref{eq:SDP}):
$$a_i = \f{2}{n-2} \left(\cos\left(\f{\pi i}{d}\right)+1\right), \hspace{5mm} i=1, ..., d.$$
As argued above, to show feasibility we need only verify that the constraints of the linear program in Proposition \ref{prop:SDPLP} hold.  Notice that   $-1\leq \cos\left(\pi i/d\right) \leq 1$ so that, for $i=1, ..., d-1,$ we have $0\leq a_i \leq \f{4}{n-2}.$  Moreover, $a_d=0$.  Hence we need only show that $\sum_{i=1}^d a_i = 1$ and that the $a^{(k)}$ live in the appropriate range.

\begin{cm}\label{cm:asum}
For $a_i = \f{2}{n-2} \left(\cos\left(\f{\pi i}{d}\right)+1\right),$ $$\sum_{i=1}^d a_i = 1.$$
\end{cm}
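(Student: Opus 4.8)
The plan is to substitute the claimed values of the $a_i$, pull out the constant $\f{2}{n-2}$, and recognize the remaining trigonometric sum as an instance of Lemma \ref{lem:trig}. Since $n=2d$, we have $\f{2\pi i}{n}=\f{\pi i}{d}$, so
$$\sum_{i=1}^d a_i = \f{2}{n-2}\left(\sum_{i=1}^d \cos\left(\f{\pi i}{d}\right) + \sum_{i=1}^d 1\right) = \f{2}{n-2}\left(\sum_{i=1}^d \cos\left(\f{2\pi i}{n}\right) + d\right).$$

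Next I would apply Lemma \ref{lem:trig} with $k=1$ (which satisfies $0<k<n$), giving $\sum_{i=1}^d \cos\left(\f{2\pi i}{n}\right) = \f{-1+(-1)^1}{2} = -1$. Plugging this in, the sum becomes $\f{2}{n-2}(d-1)$, and since $n-2 = 2d-2 = 2(d-1)$, this equals $1$, as claimed.

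There is essentially no obstacle here: the only point worth flagging is the rewriting $\f{2\pi i}{n}=\f{\pi i}{d}$, which is what lets Lemma \ref{lem:trig} apply directly with $k=1$; everything else is bookkeeping. One could instead establish $\sum_{i=1}^d\cos(\pi i/d)=-1$ from scratch — either via Lagrange's trigonometric identity as in the proof of Lemma \ref{lem:trig}, or by pairing $\cos(\pi i/d)$ with $\cos(\pi(d-i)/d)=-\cos(\pi i/d)$ for $i=1,\dots,d-1$ and treating the $i=d$ term (equal to $-1$) separately — but invoking Lemma \ref{lem:trig} is the cleanest route.
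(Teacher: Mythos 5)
Your proof is correct and follows exactly the paper's approach: pull out the constant $\tfrac{2}{n-2}$, apply Lemma~\ref{lem:trig} with $k=1$ to get $\sum_{i=1}^d \cos(\pi i/d)=-1$, and simplify using $n-2=2(d-1)$. The alternative pairing argument you sketch is a nice elementary check but is not needed; the paper invokes the lemma just as you do.
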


\begin{proof}
We directly compute $\sum_{i=1}^d a_i$ using Lemma \ref{lem:trig} with $k=1$.  Then:
\begin{align*}
\sum_{i=1}^d a_i &= \f{2}{n-2} \sum_{i=1}^d  \left(\cos\left(\f{\pi i}{d}\right)+1\right)\\
&= \f{2}{n-2}\left(-1 + d\right) \\
&= 1.
\end{align*} \hfill
\end{proof}

\begin{cm}\label{cm:aj}
For $a_i = \f{2}{n-2} \left(\cos\left(\f{\pi i}{d}\right)+1\right),$ $$a^{(k)} = \begin{cases} \f{d-2}{n-2}, & \text{ if } k=1 \\ -\f{2}{n-2}, & \text{ otherwise}. \end{cases}$$ 
\end{cm}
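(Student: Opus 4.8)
The plan is to evaluate the trigonometric sum $a^{(k)} = \sum_{i=1}^d \cos\!\left(\frac{2\pi ik}{n}\right) a_i$ directly, reducing everything to repeated applications of Lemma \ref{lem:trig}. Since $n = 2d$ we have $\cos\!\left(\frac{2\pi ik}{n}\right) = \cos\!\left(\frac{\pi i k}{d}\right)$, so substituting $a_i = \frac{2}{n-2}\left(\cos(\pi i/d) + 1\right)$ and splitting off the constant term gives
$$a^{(k)} = \frac{2}{n-2}\left[\,\sum_{i=1}^d \cos\!\left(\tfrac{\pi i k}{d}\right)\cos\!\left(\tfrac{\pi i}{d}\right) + \sum_{i=1}^d \cos\!\left(\tfrac{\pi i k}{d}\right)\right].$$
The second sum equals $\sum_{i=1}^d \cos\!\left(\tfrac{2\pi i k}{n}\right)$, which Lemma \ref{lem:trig} evaluates as $\tfrac{-1+(-1)^k}{2}$ for every $k \in \{1,\dots,d\}$ (here $0 < k < n$). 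For the first sum I would apply the product-to-sum identity $2\cos\theta\cos\phi = \cos(\theta+\phi)+\cos(\theta-\phi)$, exactly as in the proof of Proposition \ref{prop:relax}, to get
$$\sum_{i=1}^d \cos\!\left(\tfrac{\pi i k}{d}\right)\cos\!\left(\tfrac{\pi i}{d}\right) = \frac12\sum_{i=1}^d \cos\!\left(\tfrac{2\pi i (k+1)}{n}\right) + \frac12\sum_{i=1}^d \cos\!\left(\tfrac{2\pi i (k-1)}{n}\right),$$
and then invoke Lemma \ref{lem:trig} on each piece.

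The one point requiring care — and essentially the only obstacle — is the boundary behaviour of Lemma \ref{lem:trig}: the value $\tfrac{-1+(-1)^m}{2}$ is correct for $0 < m < n$, but when $m$ is a multiple of $n$ the sum equals $d$. For $k \in \{2,\dots,d\}$ both $k+1$ and $k-1$ lie strictly between $0$ and $n$ (using $n \ge 4$, so that $k+1 \le d+1 < 2d$), hence each of the two sums is $\tfrac{-1+(-1)^{k\pm1}}{2} = \tfrac{-1-(-1)^k}{2}$, giving first sum $= \tfrac{-1-(-1)^k}{2}$; adding the second sum gives $a^{(k)} = \frac{2}{n-2}\cdot(-1) = -\frac{2}{n-2}$. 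When $k = 1$, however, the $k-1$ term is $\sum_{i=1}^d \cos(0) = d$ rather than $0$, so the first sum becomes $\tfrac12(0 + d) = \tfrac d2$, and then $a^{(1)} = \frac{2}{n-2}\!\left(\tfrac d2 - 1\right) = \frac{d-2}{n-2}$.

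Collecting the two cases yields the claimed formula. In the write-up I would isolate the $k=1$ case at the start to flag the boundary subtlety; the remaining computation is routine bookkeeping mirroring the eigenvalue calculation already carried out for Proposition \ref{prop:relax}, and Claim \ref{cm:asum} provides a useful consistency check (it is the $k=0$ instance of the same identity, since $n-2 = 2(d-1)$).
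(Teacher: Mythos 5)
Your proposal is correct and follows essentially the same route as the paper's proof: split off the constant part of $a_i$, apply the product-to-sum identity, invoke Lemma \ref{lem:trig} on the three resulting sums, and treat $k=1$ separately because the $(k-1)$-sum then has argument $0$, where Lagrange's identity does not apply. The boundary check $k+1 \le d+1 < 2d$ for $k=d$ and the consistency check against Claim \ref{cm:asum} are nice touches, but the underlying computation is the same.
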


\begin{proof}
As in Proposition \ref{prop:relax}, we use the product-to-sum identity for cosines and then do casework using Lemma \ref{lem:trig}. We have:
\begin{align*}
a^{(k)} &=\sum_{i=1}^d  \cos\left(\f{2\pi i k}{n}\right) a_i\\
 &=\f{2}{n-2}  \sum_{i=1}^d \left( \cos\left(\f{2\pi ik}{n}\right) +\cos\left(\f{2\pi ik}{n}\right) \cos\left(\f{\pi i}{d}\right) \right)\\
 &=\f{2}{n-2}  \sum_{i=1}^d \left( \cos\left(\f{2\pi ik}{n}\right) + \f{1}{2}\cos \left(\f{2\pi i(k+1)}{n} \right)  +  \f{1}{2}\cos \left( \f{2\pi i(k-1)}{n}\right) \right) 
\intertext{We  cannot apply Lagrange's trigonometric identity only when $k=1$, so that}
&= \begin{cases} \f{2}{n-2}  \left( \f{-1+(-1)^k}{2} + \f{-1 + (-1)^{k+1}}{4}  + \f{-1 + (-1)^{k-1}}{4}\right), & \text{ if } k>1 \\
\f{2}{n-2}  \left( -1+ 0+ \f{1}{2} d\right), & \text{ if } k=1 \end{cases} \\
&= \begin{cases} -\f{2}{n-2}, & \text{ if } k>1 \\
\f{d-2}{n-2} , & \text{ if } k=1. \end{cases} 
\end{align*} \hfill
\end{proof}

Claim \ref{cm:aj} and Corollary \ref{rangecor} now show that the claimed $a_i$ imply feasible solutions satisfying the PSD constraints.  Taken with Claim \ref{cm:asum} and Proposition \ref{prop:SDPLP}, we have that $$a_i = \f{2}{n-2} \left(\cos\left(\f{\pi i}{d}\right)+1\right), i=1, ..., d$$ is feasible for the linear program in Proposition \ref{prop:SDPLP} and therefore implies feasible solutions for the SDP  (\ref{eq:SDP}) of the form $$X^{(j)}  =  \left( \begin{pmatrix} a_j & b_j \\ b_j & a_j \end{pmatrix} \otimes J_d \right) - a_j I_n \hspace{5mm} \text{ where } \hspace{5mm}  b_j  = \begin{cases} \f{4}{n} - \left(1-\f{2}{n}\right)a_j, & \text { if }j=1, ..., d-1 \\  \f{2}{n} - \left(1-\f{2}{n}\right)a_j, & \text{ if } j=d.\end{cases} $$

\subsection{The Unbounded Integrality Gap} \label{Cons}
We are now able to prove our main theorem:

\begingroup
\def\thetheorem{\noindent {\bf Theorem \ref{MainThm}}}
\begin{thetheorem} \emph{$$\SDPOPT(\hat{C}) \leq \f{\pi^2}{2n} \TSPOPT(\hat{C}).$$} 
\end{thetheorem}
\endgroup

\begin{proof}
Earlier we saw that a feasible solution of the form in Equation  (\ref{eq:solnstructure}) had cost $\f{n^2}{4}b_1$ and $\TSPOPT(\hat{C})=2.$  Hence, assuming a feasible solution, we can bound
$$\f{\SDPOPT(\hat{C})}{\TSPOPT(\hat{C})} \leq \f{n^2 b_1}{8}.$$ We have since found a feasible solution with parameter $$a_1=  \f{2}{n-2} \left(\cos\left(\f{\pi}{d}\right)+1\right)$$ so that
$$b_1  = \f{4}{n} - \left(1-\f{2}{n}\right)\f{2}{n-2} \left(\cos\left(\f{\pi}{d}\right)+1\right) = \f{2}{n}\left(1-\cos\left(\f{\pi}{d}\right)\right).$$
Using Taylor series with remainder, 
 $$\cos\left(\f{\pi}{d}\right) = 1-\f{\pi^2}{2d^2} + \f{1}{4!}\f{\pi^4}{d^4} \cos\left(\xi_{1/d}\right) \geq 1-\f{\pi^2}{2d^2}, 
$$ 
where $\xi_{1/d} \in [0, \f{1}{d}]$

Hence, we bound:
\begin{align*}
\f{\SDPOPT(\hat{C})}{\TSPOPT(\hat{C})} &\leq \f{n^2 b_1}{8}\\
&\leq \f{n^2}{8} \f{2}{n} \left(\f{\pi^2}{2d^2} \right) \\
&= \f{\pi^2}{2n}. 
\end{align*}
 \hfill
\end{proof}

We note that, at best, the SDP is an $\mathcal{O}\left(n \right)$-approximation algorithm. Also we notice the following: 

\begin{rem}
Several hierarchies exist that strengthen convex relaxations of combinatorial optimization problems, including those of Sherali and Adams \cite{She90}, Lov{\'a}sz  and Schrijver \cite{Lov91}, and Lasserre \cite{Las01}. These hierarchies iteratively add constraints to the relaxation; after sufficiently many iterations, the surviving feasible solutions correspond exactly to convex combinations of integer solutions.   See Chlamtac and Tulsiani \cite{Chl12} for a  detailed survey.  

Cheung \cite{Che05}, for example, applies  hierarchies to show that certain feasible solutions for the subtour LP survive applying the  Lov{\'a}sz  and Schrijver hierarchy any constant number of times. In particular, those solutions violated certain constraints (2-matching inequalities) satisfied by Hamiltonian cycles.  One might analogously wonder how long our solution survives iteratively adding constraints to an appropriate linear program. $X^{(1)}$ is not feasible for the subtour LP for sufficiently large $n$, so that it trivially doesn't survive any rounds of these hierarchies applied to the subtour LP.  In contrast, it can be shown that the feasible $X^{(1)}$ we found is in the convex hull of cycle covers.  Hence our solution would survive arbitrarily many rounds of any of these hierarchies applied to linear program obtained by using only the degree constraints of the subtour LP.
\end{rem}


\section{Corollaries of Theorem \ref{MainThm}}\label{cors}
Theorem \ref{MainThm} and its proof imply several corollaries that help us better understand the SDP and its relationship to other relaxations of the TSP.  We list several corollaries in this section, first relating the SDP to the subtour LP (Sections \ref{sec:nonmon} through \ref{sec:LPSDP}), and then relating the SDP to another SDP for the TSP in Section \ref{sec:SDPs}.

%

\subsection{Non-Monotonicity of Solution Costs}\label{sec:nonmon}
We begin with the counterintuitive result that adding vertices (in a way that retains costs being metric) can arbitrarily decease the cost of some solutions to the SDP. We state this as a \emph{non-monotonicity} property that contrasts with both TSP and subtour LP solutions.  

Consider an optimization problem whose variables correspond to edges of the complete graph $K_n$ and whose input consists of edge costs and a size $n$.  Let $S\subset [n]$ be a subset of the vertices.  Let $\text{OPT}$ denote the cost of the optimal solution to the optimization problem on the full set of vertices, and let $\text{OPT}[S]$ denote the  the cost of the optimal solution to the optimization problem \emph{induced} on the set $S.$  Formally, if $C$ denotes the matrix of edge costs corresponding to the original input, then the induced problem on $S$ uses the edge cost matrix $C[S]$ defined to be the principle submatrix of $C$ obtained by deleting the rows and columns in $[n]\backslash S$.  If $\text{OPT}[S]\leq \text{OPT}$ for all possible input costs, values of $n$, and subsets $S$, we say that the the optimization property has a \emph{monotonicity} property.  

The TSP (as usual, assuming metric and symmetric edge costs) is well-known to be monotonic (this can be seen as an application of \emph{shortcutting.}  See  Section 2.4 of Williamson and Shmoys \cite{DDBook} for details of shortcutting.)  Moreover, Shmoys and Williamson \cite{Shm90} show that the subtour LP is also monotonic.  Our example shows that the SDP of de Klerk et al.\ \cite{Klerk08}, however, is not: the cost of our SDP solutions get arbitrarily small as $n$ grows, and our instance on $n'$ vertices can be viewed as an induced from a larger instance on $n>n'$ vertices.

\begin{cor}
The SDP in (\ref{eq:SDP}) is not monotonic.
\end{cor}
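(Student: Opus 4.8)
The plan is to produce a single cost matrix, a value of $n$, and a subset $S$ for which $\text{OPT}[S] > \text{OPT}$, where here $\text{OPT}$ denotes the optimal value of the SDP~(\ref{eq:SDP}); this directly contradicts the monotonicity property. The instances are the cut-semimetric matrices on varying numbers of vertices: for even $m$ write $\hat C_m := \begin{pmatrix} 0 & 1 \\ 1 & 0 \end{pmatrix}\otimes J_{m/2}$, so that the matrix $\hat C$ of Section~\ref{Idea} is $\hat C_n$. The key observation is that $\hat C_{n'}$ is exactly the cost matrix induced by $\hat C_n$ on any $S$ consisting of $n'/2$ vertices of $U$ together with $n'/2$ vertices of $W$, whereas Theorem~\ref{MainThm} forces $\SDPOPT(\hat C_n) \le \pi^2/n$, which tends to $0$.

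The one substantive step is to show $\SDPOPT(\hat C_{n'}) > 0$ for some fixed even $n'$; I would take $n' = 4$. The feasible region of the SDP is nonempty (Proposition~\ref{prop:relax}), closed, and bounded (each $X^{(j)}$ satisfies $0 \le X^{(j)} \le J-I$ entrywise), and the objective is linear, so an optimum is attained; since $\hat C_4 \ge 0$ and $X^{(1)} \ge 0$, the optimal value is at least $0$, and it suffices to rule out a feasible solution of cost $0$. Such a solution would have $X^{(1)}$ supported only on intragroup edges, i.e.\ block-diagonal with two $2\times 2$ blocks of zero diagonal; by the row-sum identity $X^{(1)}e = 2e$ (the footnote, or Theorem~\ref{thm:apthm}) each such block is forced to equal $\begin{pmatrix} 0 & 2 \\ 2 & 0 \end{pmatrix}$, whence $X^{(2)} = (J-I) - X^{(1)}$ has an entry equal to $1-2 = -1 < 0$, contradicting $X^{(2)} \ge 0$. (If one prefers not to cite the row-sum identity, the $k=1$ semidefinite constraint already reduces, for $n'=4$, to $I \succeq X^{(2)}$; evaluating the quadratic form at the all-ones vector gives $\sum_{ij} X^{(2)}_{ij} \le 4$, hence $\sum_{ij} X^{(1)}_{ij} \ge 12 - 4 = 8$, which is incompatible with $X^{(1)}$ being supported on four intragroup entries each of which is at most $1$ by $X^{(2)} \ge 0$.) Thus $c := \SDPOPT(\hat C_4) > 0$.

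Finally, I would choose any even $n \ge 4$ with $n > \pi^2/c$ and let $S$ consist of two vertices of $U$ and two of $W$ in the $n$-vertex instance $\hat C_n$. Then $\hat C_n[S]$ is $\hat C_4$ up to a relabeling of vertices, and relabeling leaves the SDP value unchanged (it conjugates every $X^{(j)}$ by a fixed permutation matrix, preserving each constraint and the objective), so $\text{OPT}[S] = \SDPOPT(\hat C_4) = c$. On the other hand $\text{OPT} = \SDPOPT(\hat C_n) \le \frac{\pi^2}{2n}\,\TSPOPT(\hat C_n) = \frac{\pi^2}{n} < c$ by Theorem~\ref{MainThm} and $\TSPOPT(\hat C_n) = 2$. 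Hence $\text{OPT}[S] > \text{OPT}$, so the SDP is not monotonic.

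I expect the verification that $\SDPOPT(\hat C_{n'}) > 0$ to be the only real obstacle: everything else is just quoting Theorem~\ref{MainThm} and observing that a small cut-semimetric instance is induced by a larger one. The subtlety there is precisely that the SDP cannot load $X^{(1)}$ entirely onto the ``free'' intragroup edges — a fact enforced either by the $k=1$ positive-semidefinite constraint or by the degree-type row-sum identity satisfied by all feasible solutions.
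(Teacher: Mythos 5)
Your proof is correct and follows the same approach the paper sketches informally: the cut-semimetric instance on $n'$ vertices is induced from the one on $n>n'$ vertices, while $\SDPOPT(\hat{C}_n)\to 0$ by Theorem~\ref{MainThm}. The paper elides the one nontrivial step — a positive lower bound on $\SDPOPT$ for a fixed small instance — which your direct argument at $n'=4$ correctly supplies (a zero-cost solution forces $X^{(1)}_{12}=2$ via the row-sum identity or the $k=1$ constraint, contradicting $X^{(2)}=(J-I)-X^{(1)}\geq 0$); alternatively, one could invoke Section~\ref{sec:LPSDP}, where the SDP and subtour LP feasible regions are shown to coincide for $n\leq 5$, to conclude $\SDPOPT(\hat{C}_4)=2>0$.
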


\subsection{The Relationship of our SDP Solutions to the Minimum Spanning Tree Polytope}\label{sec:nonMST}
The \emph{minimum spanning tree} (MST) polytope is 
$$\{z\in \R^{\ch{n}{2}} : \sum_{e\in E} x_e = n-1, \sum_{e\in E(S)} z_e \leq |S|-1 \text{ for all } S\subset V, z\geq 0\}.$$
One nice, well-known property of the subtour  LP is that any feasible solution to it, when appropriately scaled, is also feasible for the MST polytope (see, e.g., Gao \cite{Gao15} for a very similar argument).  
%
Conversely, solutions to the SDP  cannot in general be scaled to be in the MST polytope.  We show this directly using our feasible solutions\footnote{We briefly note that, if we could appropriately scale the SDP solutions to be in the MST, we would be able to bound the integrality gap by a factor of 2 by using the standard tree-doubling approximation algorithm (see, e.g., Section 2.4 of  Williamson and Shmoys \cite{DDBook}); from this observation, and the fact that we have shown that the integrality gap is unbounded, it follows that our solutions cannot be scaled to lie in the MST polytope. Here we instead chose to provide a direct proof that reveals how far our solutions are outside of the MST polytope.}.    

%


\begin{cor}
Let $x\in \R^{\ch{n}{2}}$ be defined by $x_e = X^{(1)}_{i j} = X^{(1)}_{j i}$ and denote by $E$ the set of all edges in the complete graph on $n$ vertices.  There is no suitable scaling factor $c$ such that $cx$ is in the minimum spanning tree polytope (where $c$ is allowed to be a function of $n$).
\end{cor}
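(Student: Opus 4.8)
The plan is to compute the relevant entries of $x$ explicitly and show that no scaling $c = c(n)$ can simultaneously satisfy the edge-sum constraint $\sum_{e \in E} (cx)_e = n-1$ and all the subtour constraints $\sum_{e \in E(S)} (cx)_e \le |S| - 1$. First I would record what $x$ looks like: from Equation (\ref{eq:solnstructure}) together with the feasible parameters found in Section \ref{FindingSolns}, $X^{(1)}$ places weight $a_1 = \f{2}{n-2}(\cos(\f{\pi}{d}) + 1)$ on each intragroup edge and $b_1 = \f{2}{n}(1 - \cos(\f{\pi}{d}))$ on each intergroup edge. Summing over all $\ch{n}{2}$ edges: there are $2\ch{d}{2} = d(d-1)$ intragroup edges and $d^2$ intergroup edges, so $\sum_{e\in E} x_e = d(d-1)a_1 + d^2 b_1$. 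Using $\sum_{i=1}^d a_i = 1$ is not directly what I want here, but a direct substitution gives this total in closed form; call it $T_n$. The required scaling factor is then forced: $c = (n-1)/T_n$.

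Next I would exhibit a single violated subtour constraint for this forced $c$. The natural candidate is $S = U$, one of the two vertex groups, with $|S| = d$. Then $E(S)$ consists only of intragroup edges, all of weight $a_1$, and there are $\ch{d}{2} = \f{d(d-1)}{2}$ of them, so $\sum_{e \in E(S)} x_e = \f{d(d-1)}{2} a_1$. The subtour constraint for $cx$ requires $c \cdot \f{d(d-1)}{2} a_1 \le d - 1$, i.e. $c \le \f{2}{d\, a_1} = \f{n-2}{d(\cos(\f{\pi}{d})+1)}$, which is $\Theta(1)$ as $n \to \infty$ (since $d\,a_1 = \f{2d}{n-2}(\cos(\f{\pi}{d})+1) \to 2$). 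On the other hand, I would show the forced scaling $c = (n-1)/T_n$ grows without bound: since $b_1 = \Theta(n^{-3})$ (by the Taylor estimate already used in the proof of Theorem \ref{MainThm}) and $a_1 = \Theta(n^{-3})$ as well (because $\cos(\f{\pi}{d}) + 1 \to 2$ wait — no: $\cos(\pi i/d)+1$ at $i=1$ tends to $2$, so $a_1 = \Theta(n^{-1})$), I need to be careful. Let me recompute: $a_1 \sim \f{2}{n}\cdot 2 = \f{4}{n}$, so $d(d-1)a_1 \sim \f{n^2}{4}\cdot\f{4}{n} = n$, while $d^2 b_1 \sim \f{n^2}{4}\cdot\f{2}{n}\cdot\f{\pi^2}{2d^2} = \Theta(n^{-1})$, hence $T_n \sim n$ and $c = (n-1)/T_n \to 1$. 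So in fact the two groups alone do not force a growing scaling. This means $S = U$ is the wrong witness, and I should instead pick $S$ to be a small subset of a single group.

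The fix: take $S \subseteq U$ with $|S| = m$ for a well-chosen $m$ (possibly $m$ constant, possibly $m = \Theta(\sqrt{n})$). Then $E(S)$ has $\ch{m}{2}$ edges each of weight $a_1 \sim \f 4n$, so $\sum_{e\in E(S)} x_e \sim \ch m2 \cdot \f4n$, and the constraint $\sum_{e\in E(S)}(cx)_e \le m-1$ with $c \to 1$ becomes $\ch m2 \cdot \f 4n \lesssim m-1$, i.e. $2m(m-1)/n \lesssim m - 1$, i.e. $m \lesssim n/2$ — which holds. So small sets inside one group also do not violate. The genuine obstruction must therefore involve the structure more cleverly: I expect the right move is that $x$, scaled to have edge-sum $n-1$, is essentially (asymptotically) the uniform point $\f{2}{n-1}$ on every edge within each group and nearly $0$ across, so it behaves like two disjoint scaled copies of $K_d$; but the MST polytope on $K_n$ forces $\sum_{e\in E(U)} z_e \le d - 1$, whereas $\sum_{e\in E(U)} (cx)_e \to$ (total intragroup weight in one group, scaled) $= c \cdot \ch d2 a_1 \sim 1 \cdot \f{d^2}{2}\cdot \f4n = d$, and $d > d-1$. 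That last strict inequality, carried out with the exact (non-asymptotic) values and the exact forced $c = (n-1)/T_n$, is the crux: I would show $c \cdot \ch d2 a_1 > d - 1$ for all sufficiently large even $n$ by plugging in $a_1$ and $T_n$ exactly and bounding $\cos(\pi/d)$ from below via its Taylor expansion, reducing to an elementary inequality in $n$. The main obstacle is precisely this final exact inequality: the asymptotics say the gap is only $\Theta(1)$ (namely $\approx 1$, comparing $d$ to $d-1$), so the estimate on $\cos(\pi/d)$ and on the lower-order terms of $T_n$ must be sharp enough to beat an additive constant, not just an order of magnitude.
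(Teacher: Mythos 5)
Your final plan is essentially the paper's proof: the forced scaling is $c = \frac{n-1}{n}$ (which the paper gets immediately from $X^{(1)}e = 2e$, hence $\sum_{e\in E} x_e = n$, rather than by summing entries as you propose), the witness set is $S = U$, and the Taylor bound $\cos(\pi/d) \geq 1 - \frac{\pi^2}{2d^2}$ gives $c\sum_{e\in E(U)} x_e = \frac{n-1}{4}\left(\cos\left(\frac{\pi}{d}\right)+1\right) \geq d - \frac{1}{2} - O\left(\frac1n\right) > d-1$ for all large $n$, beating the additive constant as you anticipated. The mid-proposal detour where you declare $S=U$ ``the wrong witness'' is a false alarm --- your own final paragraph shows it works, and there is no need to look for cleverer sets.
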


\begin{proof}
Notice that $X^{(1)}e=2e$ implies that $\sum_{e\in E} x_e = n$ so that we must set $c=\f{n-1}{n}$.  Again let $U$ correspond to the set of vertices in one group.  Then there are $\ch{d}{2}$ edges in $E(U)$, each of which has is assigned weight of $a_1$ in our solution.  Hence:
\begin{align*}
\f{n-1}{n} \sum_{e\in E(U)} x_e &= \f{n-1}{n} \sum_{e\in E(U)} a_1 \\
&= \f{n-1}{n} \ch{n/2}{2} \f{2}{n-2} \left(\cos\left(\f{\pi}{d}\right)+1\right) \\
&= \f{n-1}{4} \left(\cos\left(\f{\pi}{d}\right)+1\right) \\
& \geq \f{n-1}{4} \left(2 - \f{\pi^2}{2d^2} \right) \\ 
&= d - \f{1}{2} + O\left(\f{1}{n}\right) \\
& >  |U|-1,
\end{align*}
for all $n$ sufficiently large.   \hfill
\end{proof}


\subsection{The SDP and Subtour Elimination Linear Program When $n$ is Small}\label{sec:LPSDP}
When $n=6$, our solution is
$$X^{(1)}= \begin{pmatrix} 3/4 & 1/6 \\ 1/6 & 3/4 \end{pmatrix}\otimes J_3 - \f{3}{4}I_6 .$$  Letting $U=\{1, 2, 3\}$ denote one of the two groups of vertices, we see that $\delta(U)$ has 9 edges in it, each of which is assigned a weight of $1/6$, so that the total weight crossing
$\delta(U)$ in this solution is $9*\f{1}{6} = \f{3}{2} < 2.$  This violates the subtour elimination constraint for $U$. Hence, we see that the subtour LP and SDP have distinct feasible regions when $n=6$.   
We can show, in contrast, that they are the same for $n\leq 5.$  Doing so involves  computations that are of a different spirit than what we have done so far; we defer this proof to the Appendix.   We emphasize this result because, when $n\leq 5$, it is known that the feasible region to the subtour LP consists exactly of convex combinations of Hamiltonian cycles.  See, for example, Gr{\"o}tschel and Padberg \cite{Gro86}.   Hence this result lets us  characterize the feasible region to the SDP when $n\leq 5$ as corresponding exactly to convex combinations of Hamiltonian cycles.  We state and formalize these results in  Lemma \ref{lem:simpsubtour}.

\subsection{The Relationship of our Solution to an Earlier TSP SDP}\label{sec:SDPs}
Previously we mentioned an earlier SDP relaxation for the TSP from  Cvetkovi{\'c} et al.\ \cite{Cvet99} which was shown to be weaker than the subtour LP in Goemans and Rendl  \cite{Goe00}.   This relaxation has a single matrix variable $X$ and takes the form:
\begin{equation}\label{eq:CSDP}
\begin{array}{l l l}
\min & \f{1}{2} \tr\left(CX\right) & \\
\text{subject to} & Xe = 2e \\
& X_{ii}=0,& i=1, ..., n \\
&  X_{ij} \leq 1, & i, j = 1, ..., n \\
&2I - X + \left(2-2\cos\left(\f{2\pi}{n}\right)\right) (J-I) \succeq 0\\
& X \in S^n. &
\end{array} \end{equation}
The variable $X$ can be interpreted as a weighted adjacency matrix, and the constraint that $Xe=2e$ enforces $e$ is an eigenvector of $X$ with corresponding eigenvalue $2$.  The term $2I-X$ in the PSD constraint can be interpreted as the Laplacian of  $X$:   let $G$ be a weighted, undirected graph on $n$ vertices with weighted adjacency matrix $A$.  Let $D$ be the degree matrix of $G$ (i.e., $D$ is diagonal with $D_{ii} = \sum_{j=1}^n A_{ij}$).  The Laplacian of $G$ is defined as $$L:=D-A.$$  With the interpretation of $X$ as a weighted adjacency matrix, the constraint $Xe=2e$  implies that the Laplacian corresponding to $X$ is $$L(X):=2I-X,$$ where we make the dependence on $X$ explicit.  This observation, and machinery from spectral graph theory, motivates the positive semidefinite constraint in the SDP of  Cvetkovi{\'c} et al.\ \cite{Cvet99}.  (See Spielman \cite{Spi07} for a nice introduction to spectral graph theory.)  

In more detail, let $h_n:=2-2\cos\left(\f{2\pi}{n}\right)$ so that  the positive semidefinite constraint  is $$L(X) + h_n(J-I) \succeq 0.$$   The value of $h_n$ is known to be the second smallest eigenvalue of the Laplacian of a cycle on $n$ vertices\footnote{The Laplacian of a cycle graph is also a circulant matrix, with $m_0=2$, $m_1=m_{n-1}=-1$ and $m_i=0$ otherwise.  Its eigenvalues can be directly computed using Lemma \ref{lm:circ}. }.  The second smallest eigenvalue of the Laplacian is known as the \emph{algebraic connectivity} of a graph.

The Laplacian of a weighted graph is known to be positive semidefinite (see Spielman \cite{Spi07}, which represents the Laplacian as a quadratic form), so we can write the eigenvalues of $L(X)$ as $0\leq \lambda_1\leq \lambda_2 \leq \cdots \leq \lambda_n.$   Since $X$ is symmetric, we further assume that these eigenvalues correspond to an orthogonal basis of eigenvectors $v_1, ..., v_n$ where  $v_i$ corresponds to eigenvalue $\lambda_i$.  Moreover, we can choose to let $v_1=e$ and $\lambda_1=0,$ since $Xe=2e$.  The eigenvalues of $L(X)+h_n(J-I)$ are then:
$$\lambda_1 +(n-1)h_n = (n-1)h_n, \lambda_2-h_n, ...,  \lambda_n -h_n.$$  
These follow by right-multiplying  $L(X)+h_n(J-I)$ by $v_i$ and noting that $Jv_i = ee^Tv_i = 0$ if $i\neq 1$, and $Jv_1 = ee^Te = ne.$ Since $h_n \geq 0,$  the positive semidefinite constraint enforces that $$\lambda_i - h_n \geq 0, \hspace{5mm} i=1, ..., n-1,$$ or equivalently that \begin{equation}\label{eq:eig} \lambda_2 \geq h_n.\end{equation}
Hence, the PSD constraint introduced by Cvetkovi{\'c} et al.\ \cite{Cvet99}  enforces that the algebraic connectivity of $X$ is at least $h_n$, the algebraic connectivity of a cycle on $n$ vertices. 

One might wonder if our solution $X^{(1)}$ is also feasible for the SDP of  Cvetkovi{\'c} et al.\ \cite{Cvet99}.  The answer is yes because, as mentioned earlier, as de Klerk et al.\ \cite{Klerk08} showed that any solution of (\ref{eq:SDP}) is feasible for the SDP of Cvetkovi{\'c} et al.\ \cite{Cvet99} in (\ref{eq:CSDP}).  
Hence, Theorem \ref{MainThm} implies that the SDP (\ref{eq:CSDP}) also has an unbounded integrality gap.

 Here we show the result directly for our feasible solutions, as it turns out that our solution corresponds to an instance where Equation (\ref{eq:eig}) is tight.  Thus our $X^{(1)}$ instance and cost matrix $\hat{C}$ provide an explicit example of a weighted graph that has exactly the same algebraic connectivity as a cycle and in which every vertex has degree two, but has cost arbitrarily far from a minimum-cost Hamiltonian cycle.
\begin{prop}
Taking $$X=X^{(1)} =\left( \begin{pmatrix} a_1 & b_1 \\ b_1 & a_1 \end{pmatrix} \otimes J_d \right)-a_1 I_n,$$ with $a_1=  \f{2}{n-2} \left(\cos\left(\f{\pi}{d}\right)+1\right)$ and $b_1  = \f{2}{n}\left(1-\cos\left(\f{\pi}{d}\right)\right)$ yields a feasible solution for the SDP (\ref{eq:CSDP}).  Moreover, the algebraic connectivity of $X$ is exactly that of an $n$-cycle.
\end{prop}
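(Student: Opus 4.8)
The plan is to verify the constraints of the SDP (\ref{eq:CSDP}) one at a time, with essentially all the work concentrated in the semidefinite constraint, which reduces to a spectral computation of exactly the kind already performed in Section \ref{SDPtoLP}.

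The linear constraints are immediate. The matrix $X = X^{(1)}$ is symmetric, lies in $S^n$, has zero diagonal because of the $-a_1 I_n$ term, and satisfies $Xe = 2e$ — indeed $Xe = 2e$ is precisely the row-sum identity $(d-1)a_1 + d b_1 = 2$ that was built into the definition of $b_1$ in (\ref{eq:solnstructure}) (with $j = 1$). Its off-diagonal entries are $a_1 = \f{2}{n-2}\!\left(\cos\!\left(\f{\pi}{d}\right)+1\right)$ and $b_1 = \f{2}{n}\!\left(1-\cos\!\left(\f{\pi}{d}\right)\right)$; one checks directly that $a_1 \le 1$ and $b_1 \le 1$ for every even $n \ge 4$ (for instance $b_1 \le \f{4}{n} \le 1$, while $a_1 = 1$ at $n = 4$ and $a_1 \le \f{4}{n-2} < 1$ for $n \ge 6$), so $X_{ij} \le 1$.

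For the PSD constraint $L(X) + h_n(J - I) \succeq 0$, I would invoke the reduction from the paragraph preceding (\ref{eq:eig}): since $Xe = 2e$, this constraint is equivalent to the algebraic connectivity bound $\lambda_2(L(X)) \ge h_n$, where $L(X) = 2I - X$. So the task is to compute the spectrum of $L(X) = (2 + a_1)I_n - \left(\begin{smallmatrix} a_1 & b_1 \\ b_1 & a_1 \end{smallmatrix}\right) \otimes J_d$. Using the same Kronecker-product eigenvalue facts invoked in the proof of Proposition \ref{prop:SDPLP} — the $2\times 2$ block has eigenvalues $a_1 + b_1$ and $a_1 - b_1$, and $J_d$ has eigenvalue $d$ once and $0$ with multiplicity $d - 1$ — and then shifting by $(2+a_1)I_n$, one finds that $L(X)$ has eigenvalues $2 + a_1 - d(a_1 + b_1)$, $2 + a_1 - d(a_1 - b_1)$, and $2 + a_1$ with multiplicity $n - 2$. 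The first equals $0$ by the row-sum identity $d(a_1 + b_1) = 2 + a_1$, and its eigenvector is $\left(\begin{smallmatrix}1\\1\end{smallmatrix}\right)\otimes e = e$, consistent with $\lambda_1 = 0$; the second simplifies, again via $d(a_1 + b_1) = 2 + a_1$, to $2 d b_1 = n b_1$.

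It then remains to identify $n b_1$ and to order the three eigenvalues. One computes $n b_1 = 2\!\left(1 - \cos\!\left(\f{\pi}{d}\right)\right) = 2 - 2\cos\!\left(\f{2\pi}{n}\right) = h_n$, and $2 + a_1 \ge 2 \ge h_n$ for every even $n \ge 4$ (here $h_n \le 2$ because $\cos(\pi/d) \ge 0$ when $d \ge 2$). Hence the eigenvalues of $L(X)$ are $0 \le h_n \le 2 + a_1$, so $\lambda_2(L(X)) = h_n$: the PSD constraint holds — tight on the eigenvector in the $\left(\begin{smallmatrix}1\\-1\end{smallmatrix}\right)\otimes e$ direction, which is exactly why Equation (\ref{eq:eig}) is tight for our solution — and the algebraic connectivity of $X$ equals $h_n$, the algebraic connectivity of $\mathcal{C}_n$. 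I do not anticipate a genuine obstacle here; the only points that require care are bookkeeping the eigenvector multiplicities (so that $\lambda_1 = 0$ is simple and genuinely belongs to $e$), and the elementary identity $n b_1 = h_n$ together with the small-$n$ verification of $2 + a_1 \ge h_n$.
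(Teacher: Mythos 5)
Your proof is correct and takes essentially the same approach as the paper: both reduce the PSD constraint to the algebraic-connectivity bound $\lambda_2(L(X)) \geq h_n$ via the discussion preceding Equation (\ref{eq:eig}), compute the three-value spectrum of $L(X^{(1)}) = (2+a_1)I_n - \left(\begin{smallmatrix} a_1 & b_1 \\ b_1 & a_1 \end{smallmatrix}\right)\otimes J_d$ using the Kronecker eigenvalue facts, and verify that the nonzero small eigenvalue equals $h_n$. Your write-up simply fills in a couple of steps the paper leaves implicit, namely the explicit verification of the linear constraints and the identity $2+a_1-d(a_1-b_1) = nb_1 = 2 - 2\cos(2\pi/n) = h_n$ together with the eigenvalue ordering $0 \le h_n \le 2+a_1$.
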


\begin{proof}
By construction, $X^{(1)}$ satisfies all conditions of (\ref{eq:CSDP}) except possibly that $$2I_n-X^{(1)}+h_n(J_n-I_n) \succeq 0.$$  By the argument above, it suffices to compute the second smallest eigenvalue of $2I_n-X^{(1)}$ and show that it is at least $h_n$.  The eigenvalues of $$2I_n-X^{(1)} = (2+a_1)I_n - \left( \begin{pmatrix} a_1 & b_1 \\ b_1 & a_1 \end{pmatrix} \otimes J_d \right)$$ are $2+a_1$, with multiplicity $n-2$, and $2+a_1 - d(a_1\pm b_1)$, each with multiplicity 1.  Simplifying these later eigenvalues, we have the two eigenvalues $$2+a_1-d(a_1+b_1)=0, \hspace{5mm} 2+a_1-d(a_1-b_1)=h_n.$$   Hence, the second smallest eigenvalue of $L(X^{(1)})$ is indeed $h_n$. \hfill
\end{proof}  

\begin{cor}
The SDP (\ref{eq:CSDP}) has an unbounded integrality gap.
\end{cor}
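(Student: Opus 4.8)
The plan is to note that the statement is an immediate corollary of Theorem~\ref{MainThm} together with the Proposition just proved, so almost no new work is required. First I would observe that the objective of the SDP (\ref{eq:CSDP}), namely $\f{1}{2}\tr(CX)$, is literally the objective of the SDP (\ref{eq:SDP}) read off of its cost-bearing variable $X^{(1)}$. The preceding Proposition establishes that the matrix $X^{(1)}$ of the form in Equation~(\ref{eq:solnstructure}), with the parameters $a_1 = \f{2}{n-2}\left(\cos\left(\f{\pi}{d}\right)+1\right)$ and $b_1 = \f{2}{n}\left(1-\cos\left(\f{\pi}{d}\right)\right)$ of the Claim, is feasible for (\ref{eq:CSDP}). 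Hence, if $V_n$ denotes the optimal value of (\ref{eq:CSDP}) on the instance $\hat{C}$, then $V_n$ is at most the cost of this $X^{(1)}$, which was computed in the proof of Theorem~\ref{MainThm} to be $\f{n^2}{4} b_1 \le \f{\pi^2}{n}$ (using the Taylor bound $\cos\left(\f{\pi}{d}\right) \ge 1 - \f{\pi^2}{2d^2}$ and $d = n/2$).

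Next I would combine this with $\TSPOPT(\hat{C}) = 2$, valid for every even $n$, to conclude
$$\f{\TSPOPT(\hat{C})}{V_n} \;\ge\; \f{2}{\pi^2/n} \;=\; \f{2n}{\pi^2},$$
which grows without bound. Since $\hat{C}$ is a valid (metric, symmetric) cost matrix for every even $n$, this shows there is no constant $\alpha > 0$ bounding the ratio of $\TSPOPT$ to the optimal value of (\ref{eq:CSDP}) over all such cost matrices; that is, the SDP (\ref{eq:CSDP}) has an unbounded integrality gap. I would also record the shortcut that bypasses the explicit cost computation: as recalled in the text, de Klerk et al.\ \cite{Klerk08} showed that any feasible solution of (\ref{eq:SDP}) yields a feasible solution of (\ref{eq:CSDP}) of the same cost, so $V_n \le \SDPOPT(\hat{C}) \le \f{\pi^2}{2n}\TSPOPT(\hat{C})$ and the same conclusion follows.

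There is no genuine obstacle in this corollary: the substance is entirely contained in the preceding Proposition (the feasibility check for the single PSD constraint of (\ref{eq:CSDP})) and in Theorem~\ref{MainThm}. The only point deserving a sentence of care is verifying that the two SDPs have the same objective when evaluated on $X^{(1)}$, so that the numerical bound $\f{\pi^2}{2n}$ transfers verbatim; this is immediate from inspecting (\ref{eq:SDP}) and (\ref{eq:CSDP}).
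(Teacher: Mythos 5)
Your proposal is correct and matches the paper's approach: the corollary follows either directly from the feasibility of $X^{(1)}$ for (\ref{eq:CSDP}) established in the preceding Proposition (together with the cost bound from Theorem~\ref{MainThm}), or via the general fact from de Klerk et al.\ \cite{Klerk08} that feasibility in (\ref{eq:SDP}) transfers to (\ref{eq:CSDP}) at the same cost. You identify both routes and the arithmetic checks out.
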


\begin{cor}
The algebraic connectivity of $X^{(1)}$ is equal to the algebraic connectivity of cycle.
\end{cor}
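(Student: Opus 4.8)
The plan is to recognize that this corollary is essentially an unpacking of the preceding proposition together with the definition of algebraic connectivity. Recall that the algebraic connectivity of a weighted graph is the second smallest eigenvalue of its Laplacian, and that, viewing $X^{(1)}$ as a weighted adjacency matrix, the identity $X^{(1)}e = 2e$ (forced by our choice of the $b_i$) makes its Laplacian equal to $L(X^{(1)}) = 2I_n - X^{(1)}$. So the task reduces to identifying the second smallest eigenvalue of $2I_n - X^{(1)}$ and comparing it to the algebraic connectivity of $\mathcal{C}_n$.

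First I would quote, from the proof of the preceding proposition, the full spectrum of $2I_n - X^{(1)}$: the eigenvalue $2 + a_1$ with multiplicity $n-2$, together with the two simple eigenvalues $2 + a_1 - d(a_1 + b_1) = 0$ and $2 + a_1 - d(a_1 - b_1) = h_n$. Next I would note the ordering: since $n$ is even and at least $4$, we have $\cos\!\left(\f{2\pi}{n}\right) \geq 0$, hence $h_n = 2 - 2\cos\!\left(\f{2\pi}{n}\right) \leq 2 \leq 2 + a_1$ (using $a_1 \geq 0$). Therefore the two smallest eigenvalues of $L(X^{(1)})$ are $0$ and $h_n$, so the algebraic connectivity of $X^{(1)}$ is exactly $h_n$.

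Finally I would invoke the fact recorded earlier in the text that $h_n = 2 - 2\cos\!\left(\f{2\pi}{n}\right)$ is precisely the algebraic connectivity of a cycle on $n$ vertices: the Laplacian of $\mathcal{C}_n$ is circulant with $m_0 = 2$, $m_1 = m_{n-1} = -1$, and $m_s = 0$ otherwise, so Lemma \ref{lm:circ} gives its eigenvalues $2 - 2\cos\!\left(\f{2\pi t}{n}\right)$ for $t = 0, 1, \ldots, n-1$, whose smallest nonzero value is $h_n$. Combining the two computations yields that the algebraic connectivity of $X^{(1)}$ equals that of $\mathcal{C}_n$. I do not expect any real obstacle here; the only point requiring a line of justification is the inequality $h_n \leq 2 + a_1$, which guarantees that $h_n$ is genuinely the second smallest eigenvalue rather than being overtaken by the multiplicity-$(n-2)$ eigenvalue $2 + a_1$, and this is immediate from $\cos\!\left(\f{2\pi}{n}\right) \geq 0$ for even $n \geq 4$.
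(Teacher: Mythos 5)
Your proof is correct and takes essentially the same route as the paper, which treats this corollary as an immediate consequence of the preceding proposition; in fact you add a small but worthwhile verification that the paper leaves implicit, namely the ordering $h_n \leq 2 \leq 2 + a_1$ (from $a_1 \geq 0$ and $\cos(2\pi/n) \geq 0$ for even $n \geq 4$), which confirms that $h_n$ really is the \emph{second} smallest eigenvalue rather than being exceeded by the multiplicity-$(n-2)$ eigenvalue $2 + a_1$.
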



%
%
%

\section{The $k$-Cycle Cover Problem}\label{sec:kcyc}
In the TSP, we try to find a minimum-cost cycle that covers all verticies.  This problem is generalized in the $k$-cycle cover problem which involves finding $k$ equally sized cycles that cover all of the vertices (and assumes $n$ is divisible by $k$).  Just as in the TSP, the goal is to do so with minimum-cost.    As for the TSP, there are algorithms for finding approximate solutions with a bounded integrality gap.  Goemans and Williamson \cite{Goe95} give a 4-approximation algorithm for this problem.    

De Klerk et al.\ \cite{Klerk12} notice that the SDP (\ref{eq:SDP}) can be  modified to become a relaxation of the $k$-cycle problem by changing only the objective function.  They argue the following:
\begin{prop}  The following SDP is a relaxation of the minimum-cost $k$-cycle cover problem. \normalfont
\begin{equation}\label{eq:kSDP}
\begin{array}{l l l}
\min & \f{1}{2} \text{trace}\left(CX^{(k)}\right) & \\
\text{subject to} & X^{(j)} \geq 0, & j=1, \ldots, d \\
& \sum_{j=1}^d X^{(j)} = J-I & \\
& I + \sum_{j=1}^d \cos \left(\f{2\pi i j}{n}\right) X^{(j)} \succeq 0, & i=1, \ldots, d \\
& X^{(i)} \in S^n, & i=1, \ldots, d.
\end{array} \end{equation}
\end{prop}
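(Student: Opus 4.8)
The plan is to follow the blueprint of Proposition~\ref{prop:relax}: I will exhibit, for every $k$-cycle cover, a feasible solution of (\ref{eq:kSDP}) whose objective value equals the cost of that cover. Since (\ref{eq:kSDP}) is a minimization problem, this shows that its optimal value is at most the minimum cost of a $k$-cycle cover, i.e.\ that the SDP is a relaxation. Note that the constraints of (\ref{eq:kSDP}) are literally identical to those of (\ref{eq:SDP}); only the objective has changed, from $\f12\tr(CX^{(1)})$ to $\f12\tr(CX^{(k)})$. So the feasibility half of the argument is already done for us, once we identify the right solution.

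The one genuinely new ingredient is the observation that the $k$-th distance matrix of the standard cycle is itself (the adjacency matrix of) a $k$-cycle cover. Write $\mathcal{C}_n$ for the cycle $1, 2, \ldots, n, 1$, and recall $k \mid n$; I will assume $n/k \ge 3$ and $k \le d = \f n2$, both implicit in the definition of the $k$-cycle cover problem and of (\ref{eq:kSDP}). Since $k \le d$, the distance in $\mathcal{C}_n$ between $i$ and $i \pm k$ is exactly $k$, so $A_k(\mathcal{C}_n)$ is the graph in which each $i$ is joined to $i + k$ and $i - k$ modulo $n$; as $\gcd(k, n) = k$, this graph is a disjoint union of $k$ cycles, one per residue class modulo $k$, each of length $n/k$ — that is, a $k$-cycle cover. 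Conversely, any $k$-cycle cover $\mathcal{H}$ on the vertex set $[n]$ is isomorphic as a graph to this one (a disjoint union of $k$ cycles of the same length is determined up to isomorphism by $k$ and $n/k$), so there is a permutation $\pi$ of $[n]$ with $A_\mathcal{H} = P^T A_k(\mathcal{C}_n) P$, where $A_\mathcal{H}$ is the adjacency matrix of $\mathcal{H}$ and $P$ the permutation matrix of $\pi$. (When $k=1$ this just says $A_1(\mathcal{C}_n)$ is a Hamiltonian cycle, recovering Proposition~\ref{prop:relax}.)

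Given $\mathcal{H}$, I would then take $X^{(j)} := A_j(\mathcal{C}_n')$ for $j = 1, \ldots, d$, where $\mathcal{C}_n'$ is $\mathcal{C}_n$ with its vertices relabeled by $\pi$. Relabeling preserves distances, so $A_j(\mathcal{C}_n') = P^T A_j(\mathcal{C}_n) P$ for every $j$, and in particular $A_k(\mathcal{C}_n') = A_\mathcal{H}$. The proof of Proposition~\ref{prop:relax} already shows, via conjugation by a permutation matrix, that $X^{(j)} = A_j(\text{any Hamiltonian cycle})$ satisfies all the constraints of (\ref{eq:SDP}) — hence of (\ref{eq:kSDP}) — so this choice is feasible. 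Its objective value is $\f12\tr(CX^{(k)}) = \f12\tr(C A_\mathcal{H}) = \sum_{e \in E(\mathcal{H})} c_e$, the cost of $\mathcal{H}$, since each edge of $\mathcal{H}$ is counted twice in the trace and corrected by the factor $\f12$. Taking the $\mathcal{H}$ of minimum cost completes the argument.

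The proof is essentially bookkeeping on top of Proposition~\ref{prop:relax}; I do not expect a serious obstacle. The only points needing care are the structural claim that $A_k(\mathcal{C}_n)$ is a $k$-cycle cover and that all $k$-cycle covers are isomorphic to it, and the degenerate parameter ranges ($n/k \ge 3$ so the components are genuine cycles, $k \le d$ so $X^{(k)}$ is a variable at all), which I would flag at the outset.
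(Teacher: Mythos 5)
Your proof is correct and follows essentially the same route as the paper's: both take $X^{(j)}=A_j(\mathcal{C}_n)$ for a suitably relabeled Hamiltonian cycle, use the feasibility argument from Proposition~\ref{prop:relax} verbatim, and rest on the observation that $A_k(\mathcal{C}_n)$ is a $k$-cycle cover while every $k$-cycle cover arises as $A_k$ of some Hamiltonian cycle. You simply spell out the isomorphism/conjugation step and the parameter ranges ($n/k\ge 3$, $k\le d$) that the paper's terse sketch leaves implicit.
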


\begin{proof}[Proof (from de Klerk et al.\ \cite{Klerk12})]
The proof uses exactly the same feasible solutions as Proposition \ref{prop:relax}.  The key observation is that the $k$-th distance matrix $A_k(\mathcal{C}_n)$ represents a partition of the vertices into $k$ equally sized cycles.  In particular, if $\mathcal{C}_n$ is the cycle $v_1, v_2, ..., v_n, v_1$, then $A_k(\mathcal{C}_n)$ consists of the cycles $v_i, v_{i+k}, v_{i+2k}, ..., v_{i+ (n-k)}, v_i$ for $i=1, 2, ..., k$. See, for example, Figure \ref{fig:example}.  Any $k$-cycle cover of the vertices can similarly be represented as the $k$-th distance matrix of some Hamiltonian cycle. \hfill
\end{proof}

\begin{figure}[h!t]%
    \centering
\begin{tikzpicture}[scale=0.85]
\tikzset{vertex/.style = {shape=circle,draw,minimum size=2.3em}}
\tikzset{edge/.style = {->,> = latex'}}
\tikzstyle{decision} = [diamond, draw, text badly centered, inner sep=3pt]
\tikzstyle{sq} = [regular polygon,regular polygon sides=4, draw, text badly centered, inner sep=3pt]
\node[vertex] (a) at  (3, 0) {$v_1$};
\node[vertex] (b) at  (2.6, -1.5) {$v_2$};
\node[vertex] (c) at  (1.5, -2.6) {$v_3$};
\node[vertex] (d) at  (0, -3) {$v_4$};
\node[vertex] (e) at  (-1.5, -2.6) {$v_5$};
\node[vertex] (f) at  (-2.6, -1.5) {$v_6$};
\node[vertex] (g) at  (-3, 0) {$v_7$};
\node[vertex] (h) at  (-2.6, 1.5) {$v_8$};
\node[vertex] (i) at  (-1.5, 2.6) {$v_9$};
\node[vertex] (j) at  (0, 3) {$v_{10}$};
\node[vertex] (k) at  (1.5, 2.6) {$v_{11}$};
\node[vertex] (l) at  (2.6, 1.5) {$v_{12}$};

\draw (a) -- (b);
\draw (c) -- (b);
\draw (c) -- (d);
\draw (e) -- (d);
\draw (e) -- (f);
\draw (g) -- (f);
\draw (g) -- (h);
\draw (i) -- (h);
\draw (i) -- (j);
\draw (k) -- (j);
\draw (k) -- (l);
\draw (a) -- (l);
\end{tikzpicture}
    \qquad
\begin{tikzpicture}[scale=0.85]
\tikzset{vertex/.style = {shape=circle,draw,minimum size=2.3em}}
\tikzset{edge/.style = {->,> = latex'}}
\tikzstyle{decision} = [diamond, draw, text badly centered, inner sep=3pt]
\tikzstyle{sq} = [regular polygon,regular polygon sides=4, draw, text badly centered, inner sep=3pt]
\node[vertex] (a) at  (3, 0) {$v_1$};
\node[vertex] (b) at  (2.6, -1.5) {$v_2$};
\node[vertex] (c) at  (1.5, -2.6) {$v_3$};
\node[vertex] (d) at  (0, -3) {$v_4$};
\node[vertex] (e) at  (-1.5, -2.6) {$v_5$};
\node[vertex] (f) at  (-2.6, -1.5) {$v_6$};
\node[vertex] (g) at  (-3, 0) {$v_7$};
\node[vertex] (h) at  (-2.6, 1.5) {$v_8$};
\node[vertex] (i) at  (-1.5, 2.6) {$v_9$};
\node[vertex] (j) at  (0, 3) {$v_{10}$};
\node[vertex] (k) at  (1.5, 2.6) {$v_{11}$};
\node[vertex] (l) at  (2.6, 1.5) {$v_{12}$};

\draw (a) -- (d);
\draw (d) -- (g);
\draw (g) -- (j);
\draw (j) -- (a);
\draw (b) -- (e) [dashed] ;
\draw (e) -- (h) [dashed] ;
\draw (h) -- (k) [dashed] ;
\draw (k) -- (b) [dashed] ;
\draw (c) -- (f) [thick, densely dotted];
\draw (f) -- (i) [thick, densely dotted];
\draw (i) -- (l) [thick, densely dotted];
\draw (c) -- (l) [thick, densely dotted];
\end{tikzpicture}
    \caption{The graphs corresponding to $A_1(\mathcal{C}_n)$ and $A_3(\mathcal{C}_n)$ when $n=12$.  Notice that the right graph is a 3-cycle cover, and each cycle is drawn with a different edge style.}%
    \label{fig:example}%
\end{figure}
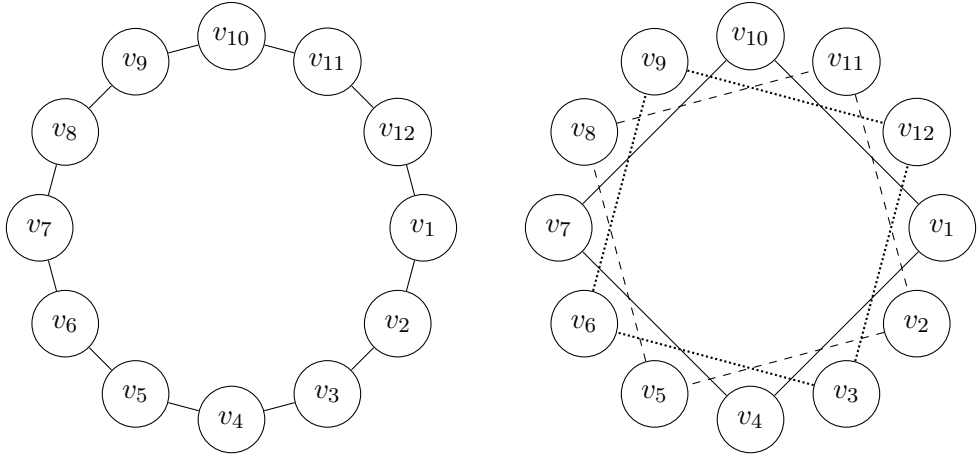

Since the SDP for the TSP is a special case of this SDP obtained by setting $k=1$, it is natural to wonder if our technique also shows that this more general SDP has an unbounded integrality gap.  Again the answer is in the affirmative. Let $\SDPOPT(C)$ and $\CYCOPT(C)$ respectively denote the optimal solutions to the SDP (\ref{eq:kSDP}) and to the $k$-cycle cover problem for a given matrix of costs $C$ and fixed $k\geq 2.$   Our earlier result generalizes as follows:
\begin{thm}\label{MainThm2}
$$\SDPOPT(\hat{C}) \leq \f{\pi^2}{n}\f{k}{k+1} \CYCOPT(\hat{C}).$$ 
\end{thm}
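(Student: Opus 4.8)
The plan is to follow the proof of Theorem~\ref{MainThm} essentially verbatim, adjusting only for the change of objective. The SDP~(\ref{eq:kSDP}) has exactly the same feasible region as~(\ref{eq:SDP}); only the objective changed, and it now reads $\f12\tr(CX^{(k)})$ in place of $\f12\tr(CX^{(1)})$. So I would again restrict attention to the symmetric family~(\ref{eq:solnstructure}), for which the cost is now $\left(\f n2\right)^2 b_k = \f{n^2}{4}b_k$. The derivation of the PSD constraints in Proposition~\ref{prop:SDPLP} and Corollary~\ref{rangecor} never referred to the objective, so it carries over unchanged: finding a minimum-cost structured solution for~(\ref{eq:kSDP}) is equivalent to the linear program of Proposition~\ref{prop:SDPLP} with its objective $\max a_1$ replaced by $\max a_k$, since minimizing $b_k$ is the same as maximizing $a_k$.

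Next I would pin down $\CYCOPT(\hat C)$. Since $n$ is even and $k\mid n$, when $k$ is even $n/2$ is a multiple of the cycle length $n/k$, so each of $U$ and $W$ splits into $k/2$ cycles of length $n/k$ lying entirely inside the group; this gives a $k$-cycle cover of cost $0$, and then $\SDPOPT(\hat C)=0$ as well (the SDP is a relaxation and $\hat C\geq 0$), so the claimed inequality is immediate. When $k$ is odd, $2k\mid n$: putting $(k-1)/2$ cycles of length $n/k$ inside each of $U$ and $W$ covers all but $n/(2k)$ vertices of each group, and those $2\cdot n/(2k)=n/k$ leftover vertices form one final cycle that must cross $\delta(U)$; since any crossing cycle crosses $\delta(U)$ an even, positive number of times, this cover is optimal and $\CYCOPT(\hat C)=2$. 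So for odd $k$ the theorem reduces to exhibiting a structured solution with $\f{n^2}{4}b_k \le \f{\pi^2}{n}\cdot\f k{k+1}\cdot 2$, equivalently one with $a_k$ within relative error $O(1/n^2)$ of its box bound $\f4{n-2}$ (a direct computation gives $\f{n^2}{4}b_k = n\bigl(1 - \tfrac{n-2}{4}a_k\bigr)$).

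Producing and verifying that LP point is where the real work lies, and it is the step I expect to be the \emph{main obstacle}. A single raised cosine $a_i=\f2{n-2}\bigl(1+\cos(\f{2\pi m i}{n})\bigr)$ at an odd frequency $m$ is automatically normalized ($\sum_i a_i=1$ by Lemma~\ref{lem:trig}) and drives every $a^{(j)}$ with $j\ne m$ down to the boundary value $-\f2{n-2}$, exactly as in Claim~\ref{cm:aj}; but the best such choice yields only $a_k=\f2{n-2}\bigl(1+\cos(\f{2\pi k}{n})\bigr)$, hence cost $\Theta(\pi^2 k^2/n)$, larger than the target $\f{2\pi^2 k}{n(k+1)}$ by a factor $\Theta(k^2)$, so for $k\ge 3$ a genuinely $k$-dependent construction is needed. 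I would look for a combination of cosine terms (plausibly a Fej\'er-kernel-type sequence adapted to the residue of $k$ modulo $n$, or more generally an optimal vertex of the linear program) for which $a_k$ is driven essentially to $\f4{n-2}$ while still $-\f2{n-2}\le a^{(j)}\le 1$ for every $j$ and $0\le a_i\le\f4{n-2}$ (with $a_d\le\f2{n-2}$) for every $i$. Normalization and the box constraints should again be a direct application of Lemma~\ref{lem:trig}; the PSD constraints amount to computing each $a^{(j)}=\sum_i\cos(\f{2\pi ij}{n})a_i$ by the product-to-sum identity together with Lemma~\ref{lem:trig}, but now with a more involved case analysis over the residues of $j$ (and of $j\pm k$, $j\pm 2k,\dots$) modulo the relevant divisors of $n$, and the delicate point is to make all of these $\approx n/2$ inequalities hold simultaneously.

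Finally, with the solution in hand, I would Taylor-expand with remainder the cosine(s) appearing in $b_k$, exactly as in the proof of Theorem~\ref{MainThm}, to bound $\f{n^2}{4}b_k\le\f{\pi^2}{n}\f k{k+1}\cdot 2=\f{\pi^2}{n}\f k{k+1}\CYCOPT(\hat C)$. Unbounded integrality gap for~(\ref{eq:kSDP}) is then immediate: for odd $k$, $\CYCOPT(\hat C)/\SDPOPT(\hat C)\ge n(k+1)/(\pi^2 k)\to\infty$ as $n\to\infty$ with $k$ fixed.
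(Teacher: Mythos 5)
Your plan diverges from the paper's at the outset: you keep the two-group instance $\hat{C}=\begin{pmatrix}0&1\\1&0\end{pmatrix}\otimes J_d$, whereas the paper switches to a $(k+1)$-group instance $\hat{C}=(J_{k+1}-I_{k+1})\otimes J_{ck}$ with $n=ck(k+1)$, a different structured ansatz $X^{(i)}=\left(\left(b_i J_{k+1}+(a_i-b_i)I_{k+1}\right)\otimes J_{ck}\right)-a_i I_n$, and a different LP. This is not a cosmetic difference: with the paper's instance, $\CYCOPT(\hat{C})=2k$ for \emph{every} $k$ and a clean construction exists, while with your instance $\CYCOPT$ degenerates to $0$ whenever $k$ is even (so the inequality becomes vacuous and gives no integrality-gap information) and equals $2$ only for odd $k$.

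More importantly, the step you flag as the main obstacle really is a gap, and it is not clear it can be closed along the lines you suggest. You correctly observe that every mixture of odd-frequency raised cosines $a_i=\frac{2}{n-2}\bigl(1+\sum_m c_m\cos\tfrac{2\pi m i}{n}\bigr)$ is LP-feasible (the normalization, box, and $a^{(j)}$-range checks all go through via Lemma~\ref{lem:trig}), but a short computation shows that over this family $a_k$ is maximized by taking $c_1=1$, since for $k$ odd with $k\mid n$ and $n/k$ even the best value of $\cos\tfrac{2\pi mk}{n}$ over odd $m$ is exactly $\cos\tfrac{2\pi k}{n}$. That caps the achievable bound at $\Theta(\pi^2k^2/n)$, roughly a factor $k(k+1)$ worse than the theorem's $\frac{2\pi^2 k}{n(k+1)}$; a ``Fej\'er-kernel-type'' enrichment keeps you inside this family, so it will not close the gap. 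The paper's construction is qualitatively different: it is a \emph{sparse} choice supported on multiples of $k$, namely $a_i=0$ for $i\not\equiv 0\pmod k$ and $a_i=\frac{2}{n-k-1}\bigl(\cos\tfrac{\pi i}{d}+k\bigr)$ otherwise (halved at $i=d$), which meshes with the $(k+1)$-group Kronecker structure and yields $b_k\le\frac{\pi^2}{cd^2(k+1)}$ directly. Without either switching to the $(k+1)$-group instance or producing a concrete feasible LP point for the two-group one that beats the raised-cosine barrier (and handling even $k$), the proof as proposed does not establish the theorem, even up to constants.
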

\begin{cor}
The SDP for the $k$-cycle cover problem has an unbounded integrality gap.  That is, there exists no constant $\al >0$ such that $$\f{\CYCOPT(C)}{\SDPOPT(C)} \leq \al $$ for all cost matrices $C$.
\end{cor}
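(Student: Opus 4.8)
The plan is to mimic the proof of Theorem \ref{MainThm} exactly, changing only the bookkeeping that depends on $k$. First I would observe that the feasible region of the SDP (\ref{eq:kSDP}) is literally identical to that of (\ref{eq:SDP}): the constraints are the same, and only the objective $\f{1}{2}\tr(CX^{(k)})$ differs. Hence the structured solutions $X^{(j)} = \left(\begin{pmatrix} a_j & b_j \\ b_j & a_j\end{pmatrix}\otimes J_d\right) - a_j I_n$ constructed in Sections \ref{SDPtoLP}--\ref{FindingSolns}, with $a_i = \f{2}{n-2}\left(\cos\left(\f{\pi i}{d}\right)+1\right)$, remain feasible here by Proposition \ref{prop:SDPLP}, Corollary \ref{rangecor}, and Claims \ref{cm:asum} and \ref{cm:aj}; nothing about that argument used $k$.

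Next I would compute the cost of this solution under the new objective. The cost is $\f{1}{2}\tr(\hat{C}X^{(k)}) = \left(\f{n}{2}\right)^2 b_k$, since $\hat{C}$ weights only intergroup edges and $X^{(k)}$ places weight $b_k$ on each. From the formula in the Claim, $b_k = \f{2}{n}\left(1-\cos\left(\f{\pi k}{d}\right)\right)$ when $k \le d-1$ (and $b_d = \f{2}{n}$, but we care about fixed $k \ge 2$ with $n$ large, so $k \le d-1$). A Taylor expansion gives $1 - \cos\left(\f{\pi k}{d}\right) \le \f{\pi^2 k^2}{2d^2}$, so the SDP cost is at most $\f{n^2}{4}\cdot\f{2}{n}\cdot\f{\pi^2 k^2}{2d^2} = \f{\pi^2 k^2}{n}$.

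Then I would handle the denominator $\CYCOPT(\hat{C})$. A $k$-cycle cover partitions the $n$ vertices into $k$ cycles each of length $n/k$; each such cycle has at least two edges crossing $\delta(U)$ (since $U$ and $W$ each have $d = n/2$ vertices and $n/k \le d$ for $k \ge 2$, so no single cycle can fit entirely in one group once $n/k < d$, i.e. $k \ge 3$; for $k=2$ one must check that $\delta(U)\ge 2$ per cycle still holds — each cycle of length $d$ either stays in one group contributing $0$ or crosses, so the minimum over all $2$-cycle covers needs a short argument). The correct bound is $\CYCOPT(\hat{C}) = 2(k+1)$: one cycle can weave through all of $U$ and all of $W$ using only $2$ crossing edges, while the remaining $k-1$ cycles must be split across the groups and each contributes at least $2$ crossings... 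I would instead argue directly that a $k$-cycle cover has at least $k+1$ edges in $\delta(U)$ (by a parity/counting argument on how the $k$ cycles interleave with the balanced cut, analogous to the $\delta(U)\ge 2$ fact for Hamiltonian cycles) and exhibit a cover achieving exactly $2(k+1)$, giving $\CYCOPT(\hat{C}) = 2(k+1)$... actually the bound stated in the theorem, $\f{\pi^2}{n}\f{k}{k+1}$, forces $\CYCOPT(\hat C) = 2(k+1)$ and SDP cost $\le \f{2\pi^2 k^2}{n}\cdot\f{1}{2k}$ — so I expect the true SDP cost bound is $\f{\pi^2 k}{n}$ (using $b_k$ with a cleaner estimate, perhaps noting the relevant index is effectively scaled), and combining gives $\SDPOPT(\hat C) \le \f{\pi^2 k}{n} = \f{\pi^2}{n}\f{k}{k+1}\cdot\f{k+1}{k}\cdot k$...

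The main obstacle, and the step I would be most careful about, is pinning down $\CYCOPT(\hat{C})$ and the matching cost estimate so the constants align: I would need the combinatorial lemma that any $k$-cycle cover of the balanced cut instance uses at least $2(k+1)$ total crossing-edge cost (equivalently at least $k+1$ crossing edges when $n/k$ is even, with the right parity correction otherwise), together with a construction attaining it, and then a careful Taylor bound on $b_k = \f{2}{n}(1-\cos(\pi k/d))$ showing the SDP cost is at most $\f{\pi^2 k}{n}$ rather than $\f{\pi^2 k^2}{n}$ — presumably by choosing a different structured solution (shifting which $X^{(j)}$ carries the small weight, or using a solution tailored to make $b_k$ rather than $b_1$ small). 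Concretely I would redo the linear-programming argument of Section \ref{SDPtoLP} with objective $\min b_k$ in place of $\min b_1$ (equivalently $\max a_k$), solve it analytically by an analogue of Claims \ref{cm:asum}--\ref{cm:aj} with $a_i$ built from $\cos(\pi i/d)$ shifted to make the $k$-th Fourier coefficient extremal, and then the arithmetic closes. Everything else — feasibility, the reduction to the LP, the Taylor estimate — is a routine transcription of the $k=1$ case.
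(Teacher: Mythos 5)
Your feasibility argument is correct — the feasible region of (\ref{eq:kSDP}) is the same as (\ref{eq:SDP}), so the structured solution from Section \ref{FindingSolns} remains feasible, and your computation that the SDP cost under $\f12\tr(\hat C X^{(k)})$ is $(n/2)^2 b_k \le \pi^2 k^2/n$ is also correct. The gap is in the denominator: the two-group cut semimetric $\hat C$ simply does not work for the $k$-cycle cover problem when $k$ is even. A $k$-cycle cover uses $k$ cycles of length $n/k$ each, and for $k$ even you can place $k/2$ of them entirely inside $U$ (they cover $\tfrac{k}{2}\cdot\tfrac{n}{k} = \tfrac{n}{2} = |U|$ vertices) and the other $k/2$ entirely inside $W$. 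That cover uses no crossing edges, so $\CYCOPT(\hat C) = 0$ and the instance gives no information about the integrality gap. Your justification for why cycles must cross $\delta(U)$ also has the inequality backwards: $n/k \le d$ means a cycle \emph{can} fit inside one group, not that it cannot. For $k$ odd the instance survives — no integer split $m_U + m_W = k$ with $m_U \cdot n/k = n/2$ is possible, so at least one cycle crosses and $\CYCOPT(\hat C) = 2$ — and your bound then does yield an unbounded gap for odd $k$, but the corollary is meant to hold for every fixed $k \ge 2$.

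The missing idea, which you gestured at ("choosing a different structured solution...") but did not land on, is to change the \emph{instance}, not the solution. The paper uses $k+1$ equal groups of size $ck$ with $n = ck(k+1)$, i.e. $\hat C = (J_{k+1} - I_{k+1}) \otimes J_{ck}$. Then each cycle has length $n/k = c(k+1)$, which is strictly larger than the group size $ck$, so \emph{every} cycle in \emph{every} $k$-cycle cover must cross between groups at least twice, forcing $\CYCOPT(\hat C) = 2k$ regardless of the parity of $k$. The structured solution is then built to respect the $(k+1)$-group symmetry, $X^{(i)} = \bigl((b_i J_{k+1} + (a_i - b_i)I_{k+1})\otimes J_{ck}\bigr) - a_i I_n$, and the LP reduction and eigenvalue analysis are redone for this block structure (Appendix \ref{sec:app2}). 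Your plan to ``shift which $X^{(j)}$ carries the small weight'' within the two-group family cannot rescue the argument, because the problem is that $\CYCOPT$ vanishes, not that the SDP cost is too large.
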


To prove this, we modify our earlier example and consider cost matrices reflecting $k+1$ equally sized groups of vertices.  Hence, we let $n=ck(k+1)$ and will scale $n$ by scaling $c\in\N$ (to reduce future casework, we also take $c$ to be even when $k$ is even). As before, the costs associated to intergroup edges will be 1, while the costs of intragroup edges, 0.  Our cost matrix  is $$\hat{C}:=\left(J_{k+1}-I_{k+1}\right)\otimes J_{ck}.$$  

Notice that any integer solution to the $k$-cycle problem will use cycles of length $c(k+1)$, while each group is of size $ck$.  Hence, any cycle in any integer solution will need to use at least two expensive edges.  This lower bounds the cost of $\CYCOPT(\hat{C})$ as $2k$.  We can also achieve this cost by labeling the groups of $ck$ vertices as $G_1,..., G_{k+1}$.  For $i=1, ..., k$, we create a cycle $\mathcal{C}_i$ that visits all vertices in group $G_i$, then visits $c$ vertices in $G_{k+1}$, and then returns to $G_i$ for each $i=1, ..., k$.  Each cycle $\mathcal{C}_i$ costs 2, so that the cost is indeed $2k$.  Hence, regardless of $n$, $\CYCOPT(\hat{C}) = 2k.$

Our proof of Theorem \ref{MainThm2} proceeds  as in the proof of Theorem \ref{MainThm}.  We find solutions whose structure respects the symmetry of $\hat{C}$: solutions that place a weight $a_i$ on each intragroup edge, a weight $b_i$ on each intergroup edge, and zeros on the diagonal. That is,
$$X^{(i)} = \left(\left(b_i J_{k+1}+(a_i-b_i)I_{k+1}\right)\otimes J_{ck}\right)-a_i I_n.$$    The structure of these matrices (and their linear combinations) allows us to again explicitly write down the eigenvalues.
We enforce the same constraints on the row sums of the $X^{(i)},$ which now become $$ b_i  = \f{1}{ck^2} \begin{cases} \left(2-(ck-1)a_i\right), &\text{ if } i<d \\ \left(1-(ck-1)a_i\right), &\text{ if } i=d. \end{cases}$$ 
To prove the theorem, it suffices to show that the following choice of $a_i$, for $i=1, ..., d$, leads to a feasible SDP solution:
 $$a_i = \begin{cases} 0, & \text{ if } i \not\equiv_k k  \\ \f{2}{n-k-1} \left(\cos\left(\f{\pi i}{d}\right)+k\right), &  \text{ if } i  \equiv_k k, i\neq d \\ \f{1}{n-k-1} \left(\cos\left(\f{\pi i}{d}\right)+k\right), & \text{ if } i=d. \end{cases}$$ 
 This is sufficient, as it will imply 
 \begin{align*}
b_k  
\leq \f{\pi^2}{cd^2(k+1)}.
\end{align*}
For these solutions, it can then be shown that
\begin{align*}
\f{\SDPOPT(\hat{C})}{\CYCOPT(\hat{C})} 
&\leq \f{\f{c^2}{2}(k+1)k^3 b_k}{2k}\\
& \leq \pi^2 \f{k}{k+1} \f{1}{n}.
\end{align*}
The full proof of Theorem \ref{MainThm2} uses almost exactly the same ideas as in the proof of Theorem \ref{MainThm}: for the structured solutions, finding a feasible SDP solution is equivalent to finding feasible solutions to a linear program; the $a_i$ indicated above are feasible solutions to this linear program.  The theorem then follows by bounds on $b_k$.    Because the ideas are so similar, we defer a sketch of the details to Appendix \ref{sec:app2}.

\section{Conclusion and Open Questions}
In this paper, we have shown that an  SDP for the TSP introduced in de Klerk et al. \cite{Klerk08} has an unbounded integrality gap.  We then deduced several corollaries, and used the same techniques to show that a related SDP, for the $k$-cycle cover problem and introduced in de Klerk et al. \cite{Klerk12}, also has an unbounded integrality gap.

One open question relates to the relationship between the SDP of de Klerk et al. \cite{Klerk08} and the subtour LP.  For the instance we constructed, the subtour LP outputs the exact cost of a solution tour.  Is it the case that an approximation algorithm that runs both the SDP and subtour LP, then takes the best solution, has an integrality gap of $1.5-\eps$ for $\eps>0?$

A second open question relates to the performance of the SDP on special types of TSP instances.  One example is the Euclidean TSP, where each city $i\in[n]$ corresponds to a point $x_i\in\R^2$, and the cost $c_{ij}$ is given by the Euclidean distance between $x_i$ and $x_j$. While no algorithm for the general TSP (with metric and symmetric edge costs) has been shown to have an integrality gap strictly less than $1.5,$ Arora \cite{Aro96} and Mitchell \cite{Mitch99} give a polynomial time approximation scheme for the Euclidean TSP.    Moreover, one can solve the Euclidean TSP in $\R^1$ exactly: if $x_m \in \min_{i\in [n]}x_i$ and $x_M \in \max_{i\in [n]}x_i,$ then any optimal tour will cost $2(x_M-x_m);$ such a tour can be achieved by starting at $x_m$, iteratively visiting vertices in increasing order of $x_i$ until reaching $x_M$, and returning to $x_m$.  We noted that our instance corresponds to an instance of Euclidean TSP in $\R^1$ (and hence in $\R^2$ by lifting the points $(0)$ and $(1)$ in $\R^1$ to $(0, 0)^T$ and $(1, 0)^T$, respectively), so that the SDP has an unbounded integrality gap even when restricted to the Euclidean TSP (or the Euclidean TSP in $\R^1$).  

Another class of instances that has received considerable attention is that of graphic TSP: here the input corresponds to a connected, undirected graph $G$ on vertex set $[n]$, and for $i, j\in [n],$ the cost $c_{ij}$ is the length of the shortest $i$-$j$ path in $G$. Several recent papers have bounded the integrality gap on graphic TSP instances as strictly less than 1.5.  (See, for example,  Gharan, Saberi, and Singh \cite{Gha11} for a $1.5-\epsilon$ bound, M{\"o}mke and Svensson \cite{Mom16} for a 1.461 bound, Mucha \cite{Muc14} for a $\f{13}{9}\approx 1.444$ bound, and Seb{\H{o}} and Vygen \cite{Seb14} for a $1.4$ bound.)    It is not hard to show that the SDP has at most an integrality gap of 2 when restricted to graphic TSP instances.  This follows because the minimum-cost Hamiltonian cycle is at most twice the cost of a MST (see Section 2.4 of Williamson and Shmoys \cite{DDBook}, for example), and the cost of an MST in a connected graph with unit edge weights is $n-1$.  Conversely, in graphic TSP the minimum-cost of an edge is $1$.  This means that $C\geq J-I$ (entrywise).  Thus:
$$\f{1}{2}\langle C, X^{(1)}\rangle  \geq  \f{1}{2} \langle J-I, X^{(1)}\rangle  = \f{1}{2}\langle J, X^{(1)}\rangle = n.$$
 Above, the first inequality follows from the fact that $X^{(1)}$ is nonnegative, the first equality follows from the fact that the diagonal of $X^{(1)}$ is zero, and the final equality follows from $X^{(1)}e=2e$. 
 An open question is to exactly compute the integrality gap of the SDP on graphic TSP; we conjecture that the integrality gap is at least $1.5$ and is asymptotically achieved when $G$ is a path.
\clearpage

\bibliography{bibliog} 

\begin{thebibliography}{10}
\providecommand{\url}[1]{#1}
\csname url@samestyle\endcsname
\providecommand{\newblock}{\relax}
\providecommand{\bibinfo}[2]{#2}
\providecommand{\BIBentrySTDinterwordspacing}{\spaceskip=0pt\relax}
\providecommand{\BIBentryALTinterwordstretchfactor}{4}
\providecommand{\BIBentryALTinterwordspacing}{\spaceskip=\fontdimen2\font plus
\BIBentryALTinterwordstretchfactor\fontdimen3\font minus
  \fontdimen4\font\relax}
\providecommand{\BIBforeignlanguage}[2]{{%
\expandafter\ifx\csname l@#1\endcsname\relax
\typeout{** WARNING: IEEEtranS.bst: No hyphenation pattern has been}%
\typeout{** loaded for the language `#1'. Using the pattern for}%
\typeout{** the default language instead.}%
\else
\language=\csname l@#1\endcsname
\fi
#2}}
\providecommand{\BIBdecl}{\relax}
\BIBdecl

\bibitem{Aro96}
S.~Arora, ``Polynomial time approximation schemes for {E}uclidean traveling
  salesman and other geometric problems,'' \emph{Journal of the ACM}, vol.~45,
  no.~5, pp. 753--782, 1998.

\bibitem{Che05}
K.~K. Cheung, ``On {L}ov{\'a}sz--{S}chrijver lift-and-project procedures on the
  {D}antzig--{F}ulkerson--{J}ohnson relaxation of the {TSP},'' \emph{SIAM
  Journal on Optimization}, vol.~16, no.~2, pp. 380--399, 2005.

\bibitem{Chl12}
E.~Chlamtac and M.~Tulsiani, ``Convex relaxations and integrality gaps,'' in
  \emph{Handbook on {S}emidefinite, {C}onic and {P}olynomial {O}ptimization},
  ser. International Series in Operations Research \& Management Science, M.~F.
  Anjos and J.~B. Lasserre, Eds.\hskip 1em plus 0.5em minus 0.4em\relax
  Springer, Boston, 2012, vol. 166, pp. 139--169.

\bibitem{Chr76}
N.~Christofides, ``Worst-case analysis of a new heuristic for the travelling
  salesman problem,'' Report 388, Graduate School of Industrial Administration,
  Carnegie-Mellon University, Pittsburgh, PA, Tech. Rep., 1976.

\bibitem{Cvet99}
D.~Cvetkovi{\'{c}}, M.~{\v{C}}angalovi{\'{c}}, and
  V.~Kova{\v{c}}evi{\'{c}}-Vuj{\v{c}}i{\'{c}}, ``Semidefinite programming
  methods for the symmetric traveling salesman problem,'' in \emph{Integer
  Programming and Combinatorial Optimization, 7th International {IPCO}
  Conference, Graz, Austria, June 9-11, 1999, Proceedings}, ser. Lecture Notes
  in Computer Science, G.~Cornu{\'e}jols, R.~E. Burkard, and G.~J. Woeginger,
  Eds., vol. 1610.\hskip 1em plus 0.5em minus 0.4em\relax Springer, Berlin,
  1999, pp. 126--136.

\bibitem{Dan54}
G.~Dantzig, R.~Fulkerson, and S.~Johnson, ``Solution of a large-scale
  traveling-salesman problem,'' \emph{Journal of the {O}perations {R}esearch
  {S}ociety of America}, vol.~2, no.~4, pp. 393--410, 1954.

\bibitem{Klerk12}
E.~de~Klerk, F.~De~Oliveira~Filho, and D.~Pasechnik, ``Relaxations of
  combinatorial problems via association schemes,'' in \emph{Handbook on
  Semidefinite, Conic and Polynomial Optimization}, ser. International Series
  in Operations Research \& Management Science, M.~F. Anjos and J.~B. Lasserre,
  Eds.\hskip 1em plus 0.5em minus 0.4em\relax Springer, Boston, 2012, vol. 166,
  pp. 171--199.

\bibitem{Klerk08}
E.~de~Klerk, D.~V. Pasechnik, and R.~Sotirov, ``On semidefinite programming
  relaxations of the traveling salesman problem,'' \emph{SIAM Journal on
  Optimization}, vol.~19, no.~4, pp. 1559--1573, 2008.

\bibitem{Gao15}
Z.~Gao, ``On the metric s--t path traveling salesman problem,'' \emph{SIAM
  Journal on Discrete Mathematics}, vol.~29, no.~3, pp. 1133--1149, 2015.

\bibitem{Gha11}
S.~O. Gharan, A.~Saberi, and M.~Singh, ``A randomized rounding approach to the
  traveling salesman problem,'' in \emph{{IEEE} 52nd Annual Symposium on
  Foundations of Computer Science, {FOCS} 2011, Palm Springs}, 2011, pp.
  550--559.

\bibitem{Goe00}
M.~Goemans and F.~Rendl, ``Combinatorial optimization,'' in \emph{Handbook of
  Semidefinite Programming: Theory, Algorithms, and Applications},
  H.~Wolkowicz, R.~Saigal, and L.~Vandenberghe, Eds.\hskip 1em plus 0.5em minus
  0.4em\relax Springer, Boston, 2000, pp. 343--360.

\bibitem{Goe99}
M.~X. Goemans and F.~Rendl, ``Semidefinite programs and association schemes,''
  \emph{Computing}, vol.~63, no.~4, pp. 331--340, 1999.

\bibitem{Goe95}
M.~X. Goemans and D.~P. Williamson, ``A general approximation technique for
  constrained forest problems,'' \emph{SIAM Journal on Computing}, vol.~24,
  no.~2, pp. 296--317, 1995.

\bibitem{Gray06}
R.~M. Gray, ``Toeplitz and circulant matrices: A review,'' \emph{Foundations
  and Trends{\textregistered} in Communications and Information Theory},
  vol.~2, no.~3, pp. 155--239, 2006.

\bibitem{Gro86}
M.~Gr{\"o}tschel and M.~W. Padberg, ``Polyhedral theory,'' in \emph{The
  Traveling Salesman Problem: A Guided Tour of Combinatorial Optimization},
  E.~L. Lawler, J.~Lenstra, A.~H. G.~R. Kan, and D.~B. Shmoys, Eds.\hskip 1em
  plus 0.5em minus 0.4em\relax John Wiley \& Sons, New York, 1985, pp.
  251--306.

\bibitem{Held70}
M.~Held and R.~M. Karp, ``The traveling-salesman problem and minimum spanning
  trees,'' \emph{Operations Research}, vol.~18, no.~6, pp. 1138--1162, 1970.

\bibitem{Hor91}
R.~A. Horn and C.~R. Johnson, \emph{Topics in Matrix Analysis}.\hskip 1em plus
  0.5em minus 0.4em\relax Cambridge University Press, Cambridge U.K., 1991.

\bibitem{Jeff08}
A.~Jeffrey and H.-H. Dai, \emph{Handbook of {M}athematical {F}ormulas and
  {I}ntegrals}, 4th~ed.\hskip 1em plus 0.5em minus 0.4em\relax Academic Press,
  New York, 2008.

\bibitem{Karp15}
M.~Karpinski, M.~Lampis, and R.~Schmied, ``New inapproximability bounds for
  {TSP},'' \emph{Journal of Computer and System Sciences}, vol.~81, no.~8, pp.
  1665--1677, 2015.

\bibitem{Las01}
J.~B. Lasserre, ``An explicit exact {SDP} relaxation for nonlinear 0-1
  programs,'' in \emph{International Conference on Integer Programming and
  Combinatorial Optimization, 8th International {IPCO} Conference, Utrecht, The
  Netherlands, June 13-15, 2001, Proceedings}, ser. Lecture Notes in Computer
  Science, K.~Aardal and B.~Gerards, Eds., vol. 2081.\hskip 1em plus 0.5em
  minus 0.4em\relax Springer Berlin Heidelberg, Berlin, Heidelberg, 2001, pp.
  293--303.

\bibitem{Lov91}
L.~Lov{\'a}sz and A.~Schrijver, ``Cones of matrices and set-functions and 0--1
  optimization,'' \emph{SIAM Journal on Optimization}, vol.~1, no.~2, pp.
  166--190, 1991.

\bibitem{Mitch99}
J.~S.~B. Mitchell, ``Guillotine subdivisions approximate polygonal
  subdivisions: A simple polynomial-time approximation scheme for geometric
  {TSP}, k-{MST}, and related problems,'' \emph{SIAM Journal on Computing},
  vol.~28, no.~4, pp. 1298--1309, 1999.

\bibitem{Mom16}
T.~M{\"{o}}mke and O.~Svensson, ``Removing and adding edges for the traveling
  salesman problem,'' \emph{Journal of the ACM}, vol.~63, no.~1, pp. 2:1--2:28,
  2016.

\bibitem{Muc14}
M.~Mucha, ``$\f{13}{9}$-approximation for graphic {TSP},'' \emph{Theory of
  Computing Systems}, vol.~55, no.~4, pp. 640--657, 2014.

\bibitem{Seb14}
A.~Seb{\H{o}} and J.~Vygen, ``Shorter tours by nicer ears: 7/5-approximation
  for the graph-tsp, 3/2 for the path version, and 4/3 for two-edge-connected
  subgraphs,'' \emph{Combinatorica}, vol.~34, no.~5, pp. 597--629, 2014.

\bibitem{She90}
H.~D. Sherali and W.~P. Adams, ``A hierarchy of relaxations between the
  continuous and convex hull representations for zero-one programming
  problems,'' \emph{SIAM Journal on Discrete Mathematics}, vol.~3, no.~3, pp.
  411--430, 1990.

\bibitem{Shm90}
D.~B. Shmoys and D.~P. Williamson, ``Analyzing the {H}eld-{K}arp {TSP} bound: A
  monotonicity property with application,'' \emph{Information Processing
  Letters}, vol.~35, no.~6, pp. 281--285, 1990.

\bibitem{Spi07}
D.~A. Spielman, ``Spectral graph theory and its applications,'' in \emph{48th
  Annual {IEEE} Symposium on Foundations of Computer Science {(FOCS} 2007),
  October 20-23, 2007, Providence, RI, USA, Proceedings}, 2007, pp. 29--38.

\bibitem{DDBook}
D.~P. Williamson and D.~B. Shmoys, \emph{The {D}esign of {A}pproximation
  {A}lgorithms}.\hskip 1em plus 0.5em minus 0.4em\relax Cambridge University
  Press, New York, 2011.

\bibitem{Wol80}
L.~A. Wolsey, ``Heuristic analysis, linear programming and branch and bound,''
  in \emph{Combinatorial Optimization II}, V.~J. Rayward-Smith, Ed.\hskip 1em
  plus 0.5em minus 0.4em\relax Springer Berlin Heidelberg, Berlin, Heidelberg,
  1980, pp. 121--134.

\end{thebibliography}
\bibliographystyle{abbrv}

\clearpage
\appendix

\section{Omitted Proofs}
First we provide an alternative, direct proof of the following theorem from de Klerk et al.\ \cite{Klerk08}, and then prove a result stated in Section \ref{sec:LPSDP}.
\begin{thm}\label{thm:apthm}
Let $n$ be even and let $X^{(1)}, ..., X^{(d)}$ be feasible for the SDP (\ref{eq:SDP}).  Then $$X^{(i)}e=\begin{cases} 2e, & \text{ if } i<d \\ e, & \text{ if } i=d.\end{cases}$$
\end{thm}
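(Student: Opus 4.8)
The plan is to recover each $X^{(j)}$ from the positive semidefinite constraints by a discrete--cosine inversion, and then simply read off the row sums. For $k=0,1,\ldots,n-1$ define
$$Y_k := I + \sum_{j=1}^d \cos\!\left(\f{2\pi jk}{n}\right) X^{(j)},$$
so that $Y_0 = I+(J-I) = J$, and $Y_k = Y_{n-k}$ since $\cos(2\pi jk/n)=\cos(2\pi j(n-k)/n)$. Every $Y_k$ is positive semidefinite: for $1\le k\le d$ this is one of the constraints of (\ref{eq:SDP}), and for $d<k\le n-1$ it follows from $Y_k=Y_{n-k}$. Moreover, since $\sum_{k=0}^{n-1}\cos(2\pi jk/n)=0$ whenever $1\le j\le d<n$ (the real part of a complete sum of $n$-th roots of unity vanishes because $n\nmid j$), adding up the $Y_k$ gives $\sum_{k=0}^{n-1} Y_k = nI$.

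The first step is to show $Y_ke=0$ for $k=1,\ldots,n-1$. From $\sum_{k=0}^{n-1}Y_k=nI$ we get $\sum_{k=1}^{n-1}Y_k = nI-Y_0 = nI-J$, and $(nI-J)e = ne-ne = 0$. Taking the quadratic form at $e$ then yields $\sum_{k=1}^{n-1} e^TY_ke = e^T(nI-J)e = 0$; each summand is nonnegative because $Y_k\succeq 0$, so $e^TY_ke=0$ for every such $k$, and for a positive semidefinite matrix this forces $Y_ke=0$.

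The second step is the inversion. Since $Y_k - I = \sum_{j=1}^d\cos(2\pi jk/n)X^{(j)}$ for every $k$ (the case $k=0$ being $J-I$), the orthogonality relation
$$\sum_{k=0}^{n-1}\cos\!\left(\f{2\pi kj}{n}\right)\cos\!\left(\f{2\pi kj'}{n}\right)=\begin{cases} n, & j=j'=d,\\ \f{n}{2}, & j=j'<d,\\ 0, & j\neq j',\end{cases}\qquad j,j'\in\{1,\ldots,d\},$$
gives $\sum_{k=0}^{n-1}\cos(2\pi kj'/n)(Y_k-I) = \f{n}{2}X^{(j')}$ for $j'<d$ and $=nX^{(d)}$ for $j'=d$. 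Applying both sides to $e$, the right-hand side becomes $\f{n}{2}X^{(j')}e$ (respectively $nX^{(d)}e$), while on the left, using $Y_0e=ne$, $Y_ke=0$ for $k\ge 1$, and $\sum_{k=1}^{n-1}\cos(2\pi kj'/n)=-1$, one computes
$$\sum_{k=0}^{n-1}\cos\!\left(\f{2\pi kj'}{n}\right)(Y_k-I)e = (n-1)e - (-1)e = ne.$$
Solving gives $X^{(j')}e=2e$ for $j'<d$ and $X^{(d)}e=e$, which is the claim.

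The only genuinely delicate point is keeping track of the exceptional index $j=k=d$: the cosine matrix appearing above is ``almost orthogonal'' but its last diagonal entry equals $n$ rather than $n/2$, and that single discrepancy is exactly what produces $X^{(d)}e=e$ in place of $2e$. Everything else reduces to standard roots-of-unity identities (in the spirit of Lemma \ref{lem:trig}) together with the elementary fact that a positive semidefinite matrix annihilates any vector on which its quadratic form vanishes, so I expect no real obstacle beyond this bookkeeping.
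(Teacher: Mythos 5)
Your proof is correct, and it takes a genuinely different route from the paper's. The paper computes the $d\times d$ inverse $Q^{-1}$ of the cosine coefficient matrix explicitly, observes that all its entries are nonpositive, and uses the resulting nonnegative linear combinations of the PSD constraints to derive the one-sided bounds $X^{(i)}\preceq 2I$ (respectively $X^{(d)}\preceq I$); it then closes with a conservation argument $e^T(J-I)e = n^2-n$ forcing equality of the quadratic forms, followed by a perturbation argument to upgrade equality of $e^TX^{(i)}e$ to equality of the row vectors $X^{(i)}e$. Your argument instead extends the constraint index over the full period $k=0,\ldots,n-1$ (noting $Y_0=J$ and $Y_k=Y_{n-k}$), uses $\sum_k Y_k = nI$ together with positive semidefiniteness and the standard fact that $v^TMv=0$ with $M\succeq 0$ forces $Mv=0$ to conclude $Y_ke=0$ for all $k\geq 1$, and then recovers each $X^{(j)}e$ directly by the orthogonality of the discrete cosine basis over $\mathbb{Z}/n\mathbb{Z}$. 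What your approach buys: you avoid ever computing $Q^{-1}$ (and verifying its sign pattern), replacing that bookkeeping with the cleaner full-period DFT orthogonality; the ``PSD annihilation'' step absorbs both the trace argument and the perturbation argument of the paper into one line. What the paper's approach buys: the explicit inverse $Q^{-1}$ and the intermediate operator inequalities $X^{(i)}\preceq 2I$, $X^{(d)}\preceq I$, which are of independent interest (they hold without reference to $e$) and are reused in the appendix's characterization for small $n$. Both proofs lean on the same trigonometric identities in the end; yours is shorter once the $n$-periodic framing is set up.
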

This proof has three main steps.  First, we let $Q$ be the matrix of coefficients of the $X^{(i)}$ in the PSD constraints from the SDP (\ref{eq:SDP}).  We compute $Q^{-1}$ and observe that all entries have the same sign.  Hence, for our second step, we can take certain nonnegative linear combinations of the PSD constraints to find (weaker) PSD constraints on each individual $X^{(i)}.$   Finally, we show that these PSD constraints imply the theorem statement.  The proof is not hard, but it is notationally complex, so we first illustrate the main ideas with an example.

\begin{exam} When $n=6$, the PSD constraints are
$$\begin{cases}
\f{1}{2}X^{(1)}-\f{1}{2}X^{(2)}-X^{(3)} \succeq -I & \hspace{5mm} (i)\\
-\f{1}{2}X^{(1)} -\f{1}{2}X^{(2)} +X^{(3)}\succeq -I & \hspace{5mm}(ii)\\
-X^{(1)}+X^{(2)}-X^{(3)}\succeq -I &\hspace{5mm} (ii).
\end{cases}$$
So that the matrix of coefficients and its inverse are:
$$Q=\begin{pmatrix} \f{1}{2} & -\f{1}{2} & -1 \\
-\f{1}{2} & -\f{1}{2} & 1 \\
-1 & 1 & -1 \end{pmatrix}, \hspace{5mm} Q^{-1}= \begin{pmatrix} -\f{1}{3} & -1 & -\f{2}{3} \\ -1 & -1 & 0 \\ -\f{2}{3} & 0 & -\f{1}{3} \end{pmatrix}.$$
Now consider the first row.  All of its entries are nonpositive, and so we take a nonnegative linear combination of the PSD constraints dictated by the first row of $-Q^{-1}.$  Here, we take $\f{1}{3}(i) + (ii) + \f{2}{3} (iii)$, which yields an inequality on just $X^{(1)}$: $$2I \succeq X^{(1)}.$$  Obtaining similar equations for each $X^{(i)}$ will let us deduce the stated theorem.
\end{exam}

We begin with the following, which contains the first two steps of the proof.
\begin{cm}\label{cm:acm}
Let $X^{(1)}, ..., X^{(d)}$ be feasible for the SDP (\ref{eq:SDP}).  Then $$X^{(i)}\preceq \begin{cases} 2I, &\text{ if }  i<d \\ I, &\text{ if }  i=d.\end{cases}$$
\end{cm}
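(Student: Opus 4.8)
The plan is to mimic the example given just before the statement, but carry it out for general even $n$. First I would write the $k$-th PSD constraint of the SDP~(\ref{eq:SDP}) in the form $\sum_{j=1}^d q_{kj} X^{(j)} \succeq -I$, where $q_{kj} = \cos\left(\f{2\pi jk}{n}\right)$, so that $Q = (q_{kj})_{k,j=1}^d$ is the matrix of coefficients. The idea is that if some row of $Q^{-1}$ has all entries of one sign, then a suitable nonnegative combination of the PSD constraints isolates a single $X^{(i)}$ on the left-hand side, with the right-hand side being a multiple of $I$; adding up the combination of $-I$'s on the right produces exactly the bound claimed.

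Concretely, I would aim to compute $Q^{-1}$ explicitly. Since $Q$ is (essentially) a cosine/DCT-type matrix, its inverse should again be expressible through cosines, up to normalizing constants; the key structural fact I expect to need is an orthogonality-type relation for $\{\cos(2\pi jk/n)\}$, which is precisely the kind of sum controlled by Lemma~\ref{lem:trig} and its product-to-sum manipulations (as already used in Proposition~\ref{prop:relax} and Claim~\ref{cm:aj}). So the main computational step is: use the product-to-sum identity plus Lemma~\ref{lem:trig} to evaluate $\sum_{k=1}^d \cos\left(\f{2\pi jk}{n}\right)\cos\left(\f{2\pi ik}{n}\right)$ and similar sums, thereby identifying $Q^{-1}$ (or at least the particular rows of it that I need). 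I would then check that the relevant row of $-Q^{-1}$ corresponding to index $i$ is entrywise nonnegative — this is the crux, and I expect this sign analysis to be the main obstacle, since it requires knowing the entries of $Q^{-1}$ with enough precision to determine signs rather than just closed forms.

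Once the sign condition is verified, the argument finishes quickly: let $(\mu_1,\dots,\mu_d)$ be the $i$-th row of $-Q^{-1}$, which is nonnegative; then $\sum_k \mu_k \left(\sum_j q_{kj} X^{(j)} + I\right) \succeq 0$, and since $\sum_k \mu_k q_{kj} = -\delta_{ij}$ (as $(\mu_k)$ is a row of $-Q^{-1}$), this collapses to $\left(\sum_k \mu_k\right) I - X^{(i)} \succeq 0$, i.e.\ $X^{(i)} \preceq \left(\sum_k \mu_k\right) I$. So the remaining task is to compute $\sum_{k=1}^d \mu_k$ and check it equals $2$ for $i<d$ and $1$ for $i=d$; this again should reduce to a single application of Lemma~\ref{lem:trig} (the sum of the entries of a row of $Q^{-1}$ is related to $Q^{-1}$ applied to the all-ones vector, i.e.\ to the $k=0$ row-sum behaviour). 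I would be careful about the $i=d$ case separately, since the $d$-th column of $Q$ behaves differently (the coefficient is $\cos(\pi k) = (-1)^k$ rather than a generic cosine), which is exactly why the bound degrades from $2I$ to $I$ there.

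One alternative, in case the explicit inversion of $Q$ turns out to be messier than hoped, would be to guess the multipliers $\mu_k$ directly: look for nonnegative $\mu_k$ (perhaps of the form $\mu_k = \alpha + \beta\cos(2\pi i k/n)$ for appropriate constants, matching the structure of the example where the $i=1$ multipliers were $\tfrac13, 1, \tfrac23$) and verify $\sum_k \mu_k \cos(2\pi jk/n) = -\delta_{ij}$ via Lemma~\ref{lem:trig}; this sidesteps computing all of $Q^{-1}$ and only requires validating a candidate. Either way, the engine is the same trigonometric identity used throughout the paper, and the only genuinely delicate point is establishing nonnegativity of the chosen multipliers.
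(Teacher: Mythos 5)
Your proposal matches the paper's proof: the paper likewise computes $Q^{-1}$ explicitly (with entries $\f{1}{d}[2\cos(\f{2\pi ij}{n})-2]$ for $i,j<d$, and signed-corrected entries in the last row/column), checks via Lemma~\ref{lem:trig} that all entries are nonpositive, and uses $Q^{-1}e=-2e+e_d$ to obtain the bounds $2I$ and $I$. Your ``alternative'' of guessing multipliers of the form $\alpha+\beta\cos(2\pi ik/n)$ is in fact exactly the structure the explicit $Q^{-1}$ turns out to have, so both routes you sketch converge to the paper's argument.
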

\begin{proof}[Proof (of Claim)]
Let $$Q=\left(\cos\left(\f{2\pi ij}{n}\right)\right)_{i, j=1}^d$$ be the matrix of coefficients of the  $X^{(i)}$ in the PSD constraints of the SDP (\ref{eq:SDP}).  As before, $e\in \R^d$ is the vector of all ones.  Also let  $e_d$ denote the unit vector in $\R^d$ with a one in the $d$-th coordinate, and let $q_d = Qe_d$ be the last column of $Q$.  Note that, by symmetry of $Q$, this has the same entries as the first row of $Q$, $e_d^TQ$, and for $n$ even, $(q_d)_i=\cos(\f{2\pi i d}{d}) = (-1)^i.$  We claim that 
$$Q^{-1} = \begin{cases}
\f{1}{d} \left(2Q - 2ee^T + e_de^T  + ee_d^T - e_dq_d^T -q_de_d^T\right), & \text{ if } d \text{ even }\\
\f{1}{d} \left(2Q - 2ee^T + e_de^T  + ee_d^T-e_dq_d^T -q_de_d^T - e_de_d^T\right), & \text{ if } d \text{ odd. }\\
\end{cases}$$

Expanding entrywise when  $d$ is even:
$$(Q^{-1})_{ij}=\begin{cases}
\f{1}{d}\left[2\cos(\f{2\pi ij}{n})-2\right], & \text{ if } i, j\neq d\\
\f{1}{d}\left[(-1)^i-1\right], & \text{ if } i\neq d, j=d\\
\f{1}{d}\left[(-1)^j-1\right], & \text{ if } i= d, j\neq d\\
0, & \text{ if } i=j= d.
\end{cases}$$  When $d$ is odd, all that changes is that $Q^{-1}_{dd} = -\f{1}{d}.$ 
 This inverse  can be verified by direct computation and applying the product-to-sum identity for cosines with Lemma \ref{lem:trig}.  We do so for $d$ even below in Claim \ref{cm:exs}.

Notice that all entries of $Q^{-1}$ are nonpositive (since $\cos(x)\leq 1$) and that $Q^{-1}e = -2e+e_d$ (we verify this fact in Claim \ref{cm:exs} below when $d$ is even).  That is, each of the first $d-1$ rows sum to $-2$ and the last row sums to $-1$.

We use these equations to take nonnegative linear combinations of the linear matrix inequalities arising as constraints in the SDP, when arranged in the form
\begin{equation}\label{eq:SDPcons} \sum_{s=1}^d \cos\left(\f{2\pi sr}{n}\right)X^{(s)} \succeq -I, \hspace{5mm} r=1, ..., d.\end{equation}
In particular, let $A_{i:}$ denote the $i$-th row of a matrix $A$ (as a row vector) and let $A_{:j}$ denote the $j$-th column of matrix $A$ (as a column vector). Then we write $Q^{-1}Q=I$ as $$(Q^{-1})_{i:}Q_{:j} = \begin{cases} 1, & \text{ if } i=j \\ 0, & \text{ otherwise.}\end{cases}$$  Since $Q$ contains the coefficients of the $X^{(i)}$ (and since each row of $Q^{-1}$ has entries all of the same sign) we can use this relationship to isolate a positive semidefinite constraint for each $X^{(i)}$. To write a constraint on $X^{(i)},$ we take the linear combination of the linear matrix inequalities in Equation (\ref{eq:SDPcons}) dictated by the $i$-th row of $-Q^{-1}$: we take $-(Q^{-1})_{ir}$ times the $r$-th linear matrix inequality.  Doing so yields a coefficient of on $X^{(j)}$ of exactly $$-(Q^{-1})_{i:}Q_{:j} = \begin{cases} -1, & \text{ if } i=j \\ 0, & \text{ otherwise.}\end{cases}$$  The coefficient on the right side (on $-I$) is $$-(Q^{-1})_{i:}e = \begin{cases} 2, & \text{ if }  i<d \\ 1, & \text{ if } i=d.\end{cases}$$  Summarizing, we obtain the linear matrix inequalities $$-X^{(i)}\succeq \begin{cases} -2I, &\text{ if }  i<d \\ -I, &\text{ if }  i=d.\end{cases}$$  \hfill
\end{proof}

We now verify two of our stated computations from Claim \ref{cm:acm}.

\begin{cm}\label{cm:exs}
For $Q=\left(\cos\left(\f{2\pi ij}{n}\right)\right)_{i, j=1}^d$ and $n, d$ even, $$(Q^{-1})_{ij}=\begin{cases}
\f{1}{d}\left[2\cos(\f{2\pi ij}{n})-2\right], & \text{ if } i, j\neq d\\
\f{1}{d}\left[(-1)^i-1\right], & \text{ if } i\neq d, j=d\\
\f{1}{d}\left[(-1)^j-1\right], & \text{ if } i= d, j\neq d\\
0, & \text{ if } i=j= d.
\end{cases}$$   Moreover, $Q^{-1}e = -2e+e_d$
\end{cm}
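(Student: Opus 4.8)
The plan is to verify the claimed formula by direct computation: exhibit that the stated matrix is a two-sided inverse of $Q$, which in particular shows $Q$ is invertible. It is cleanest to work with the compact form $Q\inv=\f1d\lp 2Q-2ee^T+e_de^T+ee_d^T-e_dq_d^T-q_de_d^T\rp$ (this agrees with the entrywise formula in the claim; the only entry needing a moment's thought is the $(d,d)$ one, where $\cos(2\pi d^2/n)=\cos(\pi d)=1$ since $d$ is even), and to check $QQ\inv=I_d$. The only ingredient that is not routine linear algebra is the matrix $Q^2$.

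To compute $Q^2$, I would expand $(Q^2)_{k\ell}=\sum_{j=1}^d\cos\lp\f{2\pi jk}{n}\rp\cos\lp\f{2\pi j\ell}{n}\rp$ with the product-to-sum identity $2\cos\ta\cos\phi=\cos(\ta+\phi)+\cos(\ta-\phi)$ and then apply Lemma \ref{lem:trig} to each resulting sum, splitting into the cases $k=\ell<d$, $k=\ell=d$ (where $k+\ell=n$, so the relevant sum equals $d$ rather than $\f{-1+(-1)^{2k}}{2}$), and $k\neq\ell$ (where one uses $(-1)^{k-\ell}=(-1)^{k+\ell}$ to track parity). Writing $q_d$ for the last column of $Q$ — so $(q_d)_i=\cos(\pi i)=(-1)^i$ — this gives
$$Q^2=\f{d}{2}I+\f{d}{2}e_de_d^T+\f12\,q_dq_d^T-\f12\,ee^T.$$
Alongside this I would record the auxiliary identities $Qe=\f12(q_d-e)$ (immediate from Lemma \ref{lem:trig}, again using $d$ even for the last coordinate), $Qe_d=q_d$ (this is literally the last column of $Q$), and $Qq_d=Q^2e_d=de_d+\f12(q_d-e)$.

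Now multiply $Q$ into $\f1d\lp 2Q-2ee^T+e_de^T+ee_d^T-e_dq_d^T-q_de_d^T\rp$ on the left, replacing $Q^2$, $Qe$, $Qe_d$, $Qq_d$ by the expressions above and collecting like terms ($I$, $e_de_d^T$, $q_dq_d^T$, $ee^T$, $q_de^T$, $q_de_d^T$, $ee_d^T$): every term cancels except $dI$, so $QQ\inv=I_d$ and the first statement follows. For the second statement I would apply this $Q\inv$ to $e$ directly: using $e^Te=d$, $e_d^Te=1$, $q_d^Te=\sum_{i=1}^d(-1)^i=0$ (here $d$ even is essential), and $Qe=\f12(q_d-e)$, the $q_d$-contributions cancel and one is left with $Q\inv e=\f1d\lp-2de+de_d\rp=-2e+e_d$.

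The main obstacle is the case analysis that produces $Q^2$: one must be careful about the exceptional behavior whenever an index equals $d$ (so a cosine argument becomes an integer multiple of $2\pi$ and Lemma \ref{lem:trig} returns $d$ rather than $\f{-1+(-1)^k}{2}$) and keep the parity bookkeeping straight in the off-diagonal entries. Everything downstream of $Q^2$ is mechanical matrix algebra.
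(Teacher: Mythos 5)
Your proof is correct. You verify the claimed compact identity $Q^{-1}=\tfrac1d\bigl(2Q-2ee^T+e_de^T+ee_d^T-e_dq_d^T-q_de_d^T\bigr)$ and the auxiliary facts $Qe=\tfrac12(q_d-e)$, $Qe_d=q_d$, $q_d^Te=0$, $(q_d)_d=1$ (all using $d$ even), and the key identity
$$Q^2=\tfrac{d}{2}I+\tfrac{d}{2}e_de_d^T+\tfrac12 q_dq_d^T-\tfrac12 ee^T,$$
which I have checked; the rank-one bookkeeping then collapses $QQ^{-1}$ to $I$ and $Q^{-1}e$ to $-2e+e_d$ exactly as you say. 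The paper instead verifies the entrywise formula for $Q^{-1}$ directly: it computes $(QQ^{-1})_{ij}=\sum_k Q_{ik}(Q^{-1})_{kj}$ by plugging in the case-split formula for $(Q^{-1})_{kj}$, expands with the product-to-sum identity, and applies Lemma~\ref{lem:trig} case by case (separately for $j<d$ and $j=d$), and then checks $Q^{-1}e=-2e+e_d$ with a separate short vector computation. Both proofs rest on the same analytic input (product-to-sum plus Lemma~\ref{lem:trig}, with care whenever an index reaches $d$), so they are not far apart in substance; but your organization is cleaner. By isolating the single matrix identity for $Q^2$, you concentrate all the parity case analysis in one place and reduce the rest to mechanical rank-one algebra, which is easier to audit and would also make it straightforward to re-derive the odd-$n$ analogue the paper states without proof. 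The paper's entrywise route is more pedestrian but avoids introducing the additional matrix identity. Either way the argument is complete; your version is a genuine, and arguably tidier, alternative.
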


\begin{proof}
We first verify that that the formula given for $(Q^{-1})_{ij}.$  If $j<d,$ then $(QQ^{-1})_{ij} = \sum_{k=1}^d Q_{ik}Q^{-1}_{kj}$ expands as:
 \begin{align*}
 (QQ^{-1})_{ij}&=\f{1}{d}\left(\left(\sum_{k=1}^{d-1} \cos\left(\f{2\pi ik}{n}\right)  \left(2\cos\left(\f{2\pi jk}{n}\right) -2\right) \right)+ \cos\left(\f{2\pi id}{n}\right)\left((-1)^j-1\right)\right)\\
 &=\f{1}{d}\left(\left(\sum_{k=1}^{d} \cos\left(\f{2\pi ik}{n}\right)  \left(2\cos\left(\f{2\pi jk}{n}\right) -2\right) \right)- \cos\left(\f{2\pi id}{n}\right)\left((-1)^j-1\right)\right) \\
  &=\f{1}{d}\left(\left(\sum_{k=1}^{d}\cos\left(\f{2\pi (i+j)k}{n} \right) +\cos\left(\f{2\pi (i-j)k}{n} \right)  - 2\cos\left(\f{2\pi ik}{n} \right)  \right)- \cos\left(\f{2\pi id}{n}\right)\left((-1)^j-1\right)\right). 
  \intertext{Using  Lemma \ref{lem:trig}, and considering separately the special case when $i=j$:}
    &=\begin{cases} \f{1}{d}\left(\f{(-1)+(-1)^{i+j}}{2} +\f{(-1)+(-1)^{i-j}}{2} -2 \f{(-1)+(-1)^{i}}{2}  - (-1)^i((-1)^j-1)\right), &i\neq j \\
   \f{1}{d}\left(\f{(-1)+(-1)^{i+j}}{2} +d -2 \f{(-1)+(-1)^{i}}{2}  - (-1)^i((-1)^j-1)\right), &i=j \end{cases}
     \intertext{In the first case, we use the fact that $i+j$ and $i-j$ have the same parity.  In the second, we use that $i=j$ implies $i+j$ is even:}
    &=\begin{cases} \f{1}{d}\left((-1)+(-1)^{i+j} -\left((-1)+(-1)^{i}\right)  - (-1)^i((-1)^j-1)\right), &i\neq j \\
   \f{1}{d}\left(d - \left((-1)+(-1)^{i}\right)  - (-1)^i((-1)^j-1)\right), &i=j \end{cases} \\
       &=\begin{cases} \f{1}{d}\left( -1 + (-1)^{i+j} + 1 - (-1)^i - (-1)^{i+j} +(-1)^i \right), &i\neq j \\
   \f{1}{d}\left(d +1 - (-1)^i -1 + (-1)^i \right), &i=j \end{cases} \\
          &=\begin{cases} 0,  &i\neq j \\
  1, &i=j. \end{cases}
 \end{align*}
 When $j=d$ and $d$ is even, we have
  \begin{align*}
 (QQ^{-1})_{id}&=\f{1}{d}\left(\left(\sum_{k=1}^{d-1} \cos\left(\f{2\pi ik}{n}\right) \left((-1)^k -1\right) \right)+ \cos\left(\f{2\pi id}{n}\right)\cdot 0\right)\\
 &=\f{1}{d}\left(\left(\sum_{k=1}^{d-1} \cos\left(\f{2\pi ik}{n}\right) \left(\cos\left(\f{2\pi kd}{n}\right) -1\right) \right) \right).
 \intertext{When $k=d$, the summand is zero since $d$ is even.}
  &=\f{1}{d}\left(\sum_{k=1}^{d} \cos\left(\f{2\pi ik}{n}\right) \left(\cos\left(\f{2\pi kd}{n}\right) -1\right) \right) \\
    &=\f{1}{d}\sum_{k=1}^{d} \left( \cos\left(\f{2\pi ik}{n}\right)\cos\left(\f{2\pi kd}{n}\right) -\cos\left(\f{2\pi ik}{n}\right) \right)\\
        &=\f{1}{d}\sum_{k=1}^{d} \left( \f{\cos\left(\f{2\pi(d+i)k}{n}\right)+\cos\left(\f{2\pi k(d-i)}{n}\right)}{2} -\cos\left(\f{2\pi ik}{n}\right) \right)\\
 &=\begin{cases}
                \f{1}{d}\left( \f{-1+(-1)^{d-i}-1+(-1)^{d+i}}{4} -\f{-1+(-1)^i}{2} \right), i\neq d \\
            \f{1}{d} \left( \f{d+d}{2} -\f{-1+(-1)^i}{2}\right), i= d.      
                \end{cases}
                \intertext{In the former case, we note that $d$ being even implies $i, i+d,$ and $d-i$ all have the same parity.  In the latter case, we note that $i=d$ means $i$ is even:}
 &=\begin{cases}
              0, i\neq d \\
           1, i= d.      
                \end{cases}
 \end{align*}

We now verify  $Q^{-1}e = -2e+e_d$ for $d$ even.  Here:
\begin{align*}
Q^{-1}e &=
\f{1}{d} \left(2Q - 2ee^T + e_de^T  + ee_d^T - e_dq_d^T -q_de_d^T\right)e\\ 
&=\f{1}{d} \left(2Qe - 2ee^Te + e_de^Te  + ee_d^Te - e_dq_d^Te -q_de_d^Te\right)\\
&=\f{1}{d} \left(2Qe - 2de + d e_d  + e - \left(\sum_{i=1}^d (-1)^i \right) e_d-q_d\right).
\intertext{The $i$-th entry of $Qe$ is $\sum_{k=1}^d \cos\left(\f{2\pi i k}{n}\right).$  By  Lemma \ref{lem:trig}, this is $\f{-1+(-1)^i}{2}.$  Since $(q_d)_i=(-1)^i$, we have that $2Qe-q_d=-e$.  Also, since $d$ is even, $ \left(\sum_{i=1}^d (-1)^i \right)=0.$  Hence:}
&=\f{1}{d} \left(-e - 2de + d e_d  + e \right)\\
&= -2e+e_d.
\end{align*}  \hfill
\end{proof}

Using these linear matrix inequalities, we can now prove Theorem \ref{thm:apthm}.
\begin{proof}[Proof (of Theorem \ref{thm:apthm})]

By Claim \ref{cm:acm}, for any $v\in \R^n,$ we have $v^TX^{(i)}v\leq 2v^Tv$ for $i<d$, and $v^TX^{(d)}v\leq v^Tv.$  
These inequalities imply
\begin{align*}
n^2-n = e^T(J-I)e &= e^T\left(\sum_{i=1}^d X^{(d)} \right)e \\
&\leq \left(\sum_{i=1}^{d-1} e^T(2I)e \right)+ e^TIe\\
&= 2(d-1)n + n \\&= n^2-n.
\end{align*}
Equality must prevail throughout, so that $e^TX^{(i)}e = 2n$ for $i<d$, and $e^TX^{(d)}e=n.$

Now suppose that $i<d$ and $X^{(i)}e \neq 2e.$ The fact that $e^TX^{(i)}e=2n$ implies that there is a row whose sum is strictly greater than 2: say $2+2\eps$ with $\eps>0$.  Without loss of generality, we take this to be row 1, and write the row sum as $\sum_{j=1}^n X^{(i)}_{1j}=2+2\eps$ for $\eps>0$.  We obtain a contradiction to our positive semidefinite relationships by computing $(e+\eps e_1)^TX^{(i)}(e+\eps e_1),$ where $e_1\in\R^d$ is the unit vector with a 1 in the first coordinate.
\begin{align*}
(e+\eps e_1)^T X^{(i)} (e+\eps e_1) &= e^TX^{(i)}e + 2\eps e_1^T X^{(i)} e + \eps^2 e_1^T X^{(i)}e_1 \\
&= 2n+2\eps (2+2\eps) + 0 \\
&= 2n + 4\eps + 4\eps^2  \\
& > 2n + 4\eps + 2\eps^2 \\
&= 2(e+\eps e_1)^TI (e+\eps e_1).
\end{align*}
With $X^{(d)}$ we obtain an analogous contradiction by supposing that $\sum_{j=1}^n X^{(i)}_{1j}=1+\eps$. \hfill
\end{proof}

The above  result extends when $n$ is odd: the SDP of de Klerk et al.\ \cite{Klerk08} sets $d=\lfloor\f{n}{2}\rfloor$ and is otherwise  the  same:
$$
\begin{array}{l l l}
\min & \f{1}{2} \tr\left(CX^{(1)}\right) & \\
\text{subject to} & X^{(k)} \geq 0, & k=1, \ldots, d \\
& \sum_{j=1}^d X^{(j)} = J-I, & \\
& I + \sum_{j=1}^d \cos \left(\f{2\pi jk}{n}\right) X^{(j)} \succeq 0, & k=1, \ldots, d \\
& X^{(k)} \in S^n, & k=1, \ldots, d.
\end{array} $$
When $n$ is odd, all of the $d$ distance matrices have row sums equal to 2, so we now expect $X^{(i)}e=2e$ for $i=1, ..., d$; indeed this is the case.  Our proof above generalizes, except $Q^{-1}$ takes a simpler form:  $$(Q^{-1})_{ij} = \f{4}{n}\left(\cos\left(\f{2\pi ij}{n}\right) - 1\right).$$  In this case we obtain that $$2I-X^{(i)} \succeq 0, \hspace{5mm} i=1, ..., d,$$ from which $X^{(i)}e=2e$ follows.

This machinery lets us prove a result stated informally in Section \ref{sec:LPSDP}.  Let $$X^{(1)}_P = \{(X^{(1)}_{ij})_{1\leq i < j \leq n}, X^{(1)} \text{ feasible for the SDP}\}\subset \R^{\ch{5}{2}}$$ be the projection of feasible  $X^{(1)}$ for the SDP (\ref{eq:SDP}) when $n=5$ onto $\R^{\ch{5}{2}}.$ 
\begin{prop}\label{prop:5case}
$X^{(1)}_P $ is exactly equal to the set of feasible solutions for the subtour LP when $n=5$.  Moreover, both are equal to the set of convex combinations of Hamiltonian cycles.
\end{prop}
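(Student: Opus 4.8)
The plan is to sandwich $X^{(1)}_P$ between the convex hull of Hamiltonian cycles and the feasible region of the subtour LP, and then invoke the classical theorem of Gr\"otschel and Padberg \cite{Gro86} that these two sets coincide for $n\le 5$. Concretely, writing $\chi^H\in\R^{\ch{5}{2}}$ for the edge-incidence vector of a Hamiltonian cycle $H$ on $[5]$, I will prove
$$\mathrm{conv}\{\chi^H : H \text{ a Hamiltonian cycle on } [5]\} \ \subseteq\ X^{(1)}_P \ \subseteq\ \{x : x \text{ feasible for the subtour LP},\ n=5\};$$
combined with Gr\"otschel--Padberg this forces all three sets to be equal, which also yields the ``moreover''.

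For the first inclusion I would use that the SDP (\ref{eq:SDP}) is a relaxation of the TSP: by the construction of Proposition \ref{prop:relax} (valid for all $n$, as established by de Klerk et al.\ \cite{Klerk08}), for each Hamiltonian cycle $H$ the tuple $X^{(j)}=A_j(H)$, $j=1,\dots,d$, is SDP-feasible, and the upper-triangular part of $X^{(1)}=A_1(H)$ is exactly $\chi^H$. The SDP feasible set is convex, being cut out by linear equalities, entrywise linear inequalities, and linear matrix inequalities, and the map $(X^{(1)},\dots,X^{(d)})\mapsto (X^{(1)}_{ij})_{i<j}$ is linear; hence the projection of any convex combination of the cycle solutions equals the corresponding convex combination of the $\chi^H$, so $\mathrm{conv}\{\chi^H\}\subseteq X^{(1)}_P$.

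For the second inclusion, let $(X^{(1)},\dots,X^{(d)})$ be SDP-feasible with $n=5$ (so $d=2$) and set $x_e=X^{(1)}_{ij}$ for $e=\{i,j\}$. The degree constraints $\sum_{e\in\delta(v)}x_e=2$ are precisely $X^{(1)}e=2e$, which holds by the odd-$n$ case of Theorem \ref{thm:apthm}. Nonnegativity $x_e\ge 0$ is $X^{(1)}\ge 0$, and $x_e\le 1$ follows since $0\le X^{(1)}\le \sum_{j=1}^d X^{(j)}=J-I$ entrywise (using $X^{(j)}\ge 0$). For the subtour elimination constraints $\sum_{e\in\delta(S)}x_e\ge 2$ with $\emptyset\neq S\subsetneq[5]$, I would use $\delta(S)=\delta([5]\setminus S)$ to reduce to $|S|\le 2$: for $|S|=1$ this is a degree constraint, and for $S=\{u,v\}$ we compute $\sum_{e\in\delta(S)}x_e=\sum_{e\in\delta(u)}x_e+\sum_{e\in\delta(v)}x_e-2x_{uv}=4-2x_{uv}\ge 2$, since $x_{uv}\le 1$. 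Thus every subtour LP constraint holds, establishing the inclusion.

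I do not anticipate a genuine obstacle; the one point deserving care, and the reason the statement is special to $n=5$, is that there every subtour inequality is, after complementation, either a degree constraint ($|S|=1$) or an inequality $x_e\le 1$ on a single edge value ($|S|=2$). The first genuinely new cut occurs at $|S|=3$, which requires $n\ge 6$ for the complement also to have at least three vertices; indeed this is exactly where the argument breaks and where the $n=6$ solution of Section \ref{sec:LPSDP} violates subtour elimination. Beyond the cited results, the only input is the routine bookkeeping identity $\sum_{e\in\delta(\{u,v\})}x_e=\sum_{e\in\delta(u)}x_e+\sum_{e\in\delta(v)}x_e-2x_{uv}$ and the availability of the odd-$n$ row-sum result, both of which follow directly from the material already developed.
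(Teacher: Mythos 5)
Your proposal is correct and rests on exactly the same three ingredients as the paper's proof: Proposition \ref{prop:relax} for the inclusion $\text{conv}\{\chi^H\}\subseteq X^{(1)}_P$, the odd-$n$ version of Theorem \ref{thm:apthm} for the degree constraints, and the Gr\"otschel--Padberg fact (recorded in the paper as Lemma \ref{lem:simpsubtour}) to close the loop. The one structural difference is that you verify the subtour elimination constraints directly (via complementation to $|S|\le 2$ and the identity $\sum_{e\in\delta(\{u,v\})}x_e=4-2x_{uv}\ge 2$), whereas the paper sidesteps this entirely by invoking the \emph{first} equality in Lemma \ref{lem:simpsubtour} — namely that for $n=5$ the degree constraints together with $0\le x_e\le 1$ already cut out the convex hull of Hamiltonian cycles, so once degree and box constraints are checked, the subtour inequalities are free. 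Your extra verification is harmless and even illuminating (it makes explicit why the argument fails precisely at $n=6$, matching the counterexample in Section \ref{sec:LPSDP}), but it is logically redundant given the form of Lemma \ref{lem:simpsubtour}. One small point worth surfacing in either version: the bound $x_e\le 1$ (which you correctly derive from $0\le X^{(1)}\le\sum_j X^{(j)}=J-I$) is needed to land in the degree polytope of Lemma \ref{lem:simpsubtour}, and the paper's proof uses it implicitly without stating it; you state it explicitly, which is a minor improvement in rigor.
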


Our proof follows from a lemma that lets us write the feasible solution for the subtour LP without the subtour elimination constraints.  Moreover, this lemma says, for $n=5$, the degree constraints (and subtour LP) perfectly capture the convex hull of all feasible Hamiltonian cycles.  For $F\subset E$ let $\chi^F \in \R^{\ch{n}{2}}$ be the incidence vector of $F$: $$\chi^F_e = \begin{cases} 1, & e\in F \\ 0, & \text{ else.} \end{cases}$$  Also, let $\text{conv}\{v_1, ..., v_k\}$ be the convex hull of vectors $v_1, ..., v_k.$  
%
\begin{lm}\label{lem:simpsubtour}
When $n=5$, \normalfont
\begin{align*}
\{&x \in \R^{\ch{5}{2}}:  \sum_{e\in \delta(\{v\})} x_e = 2 \fa v\in [5], 0\leq x_e \leq 1 \fa e\in E\}\\
& =\{x \in \R^{\ch{5}{2}}:  \sum_{e\in \delta(\{v\})} x_e = 2 \fa v\in [5], 0\leq x_e \leq 1 \fa e\in E, \sum_{e\in \delta(S)} x_e \geq 2 \fa \emptyset \subset S\subset [5]\}  \\
&= \text{conv}\{ \chi^F: F \text{ is a Hamiltonian cycle} \}.
\end{align*}
\end{lm}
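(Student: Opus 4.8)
The plan is to prove the three sets are equal by chasing inclusions around a loop. Write $P_1$ for the degree-plus-box polytope (the first displayed set), $P_2$ for the subtour LP polytope (the second), and $P_3=\mathrm{conv}\{\chi^F: F\text{ a Hamiltonian cycle on }[5]\}$. The inclusions $P_3\subseteq P_2\subseteq P_1$ are immediate: every $\chi^F$ is $2$-regular and $\{0,1\}$-valued and has at least two edges crossing each nontrivial cut (a Hamiltonian cycle is connected), so $\chi^F\in P_2$, and $P_2$ is convex; and $P_2\subseteq P_1$ since $P_2$ is cut out by a superset of the constraints. Hence it suffices to prove $P_1\subseteq P_3$, after which $P_1\subseteq P_3\subseteq P_2\subseteq P_1$ forces equality of all three, and in particular all equal $\mathrm{conv}\{\chi^F\}$.

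For $P_1\subseteq P_3$ I would argue that every vertex $x^\ast$ of $P_1\subseteq\R^{10}$ is the incidence vector of a Hamiltonian cycle. Let $E_f=\{e: 0<x^\ast_e<1\}$ and let $G_f$ be the subgraph of $K_5$ with edge set $E_f$. First, $G_f$ has no vertex of degree $1$: if $v$ met $E_f$ in a single edge $e$, then $x^\ast_e = x^\ast(\delta(v)) - x^\ast(\delta(v)\setminus\{e\}) = 2 - |\delta(v)\cap E_1|$ would be an integer, contradicting $0<x^\ast_e<1$. Second, since $x^\ast$ is a vertex, the active constraints (the five degree equalities together with $x_e = x^\ast_e$ for each $e\notin E_f$) have rank $10$; equivalently, the only $d$ supported on $E_f$ with $d(\delta(v))=0$ for all $v$ is $d=0$, i.e.\ the unoriented incidence matrix of $G_f$ has trivial kernel. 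A short rank computation shows this forces every connected component of $G_f$ to be either a tree or a graph containing exactly one cycle, which moreover must be odd; combined with the absence of degree-$1$ vertices (which rules out trees and hanging subtrees), every component of $G_f$ must in fact be an odd cycle.

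Since the vertices of $G_f$ lie in $[5]$, the only possibilities are $G_f=\emptyset$, $G_f$ a triangle, or $G_f$ a $5$-cycle. If $G_f$ were a triangle on $S=\{u,v,w\}$, then at each of $u,v,w$ the two fractional triangle edges sum to an integer in $(0,2)$, hence to $1$, so $x^\ast(E(S))=\tfrac32$ and $x^\ast(\delta(S))=3$; but summing the degree equalities at the two vertices outside $S$ gives $x^\ast(\delta(S))=4-2x^\ast_{e_0}$, where $e_0$ is the edge joining them, forcing $x^\ast_{e_0}=\tfrac12$ — impossible since $e_0\notin E_f$. If $G_f$ were a $5$-cycle $C$, its complement in $K_5$ is a second $5$-cycle $C'$, all of whose edges are integral; at each vertex the two $C$-edges sum to an integer in $(0,2)$, hence to $1$, so exactly one of the two $C'$-edges at that vertex lies in $E_1$, i.e.\ the edges of $C'\cap E_1$ form a perfect matching on $5$ vertices, which is impossible. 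Hence $G_f=\emptyset$, $x^\ast$ is integral, and an integral point of $P_1$ is a simple $2$-regular subgraph of $K_5$, necessarily a single $5$-cycle (as $5$ is not a sum of integers each $\ge 3$ other than $5$ itself), i.e.\ a Hamiltonian cycle. The main obstacle is this vertex-structure step — carefully pinning down the kernel condition and the resulting short list of shapes for $G_f$. Alternatively, one could invoke the classical result of Gr\"otschel and Padberg \cite{Gro86} that the TSP polytope coincides with the subtour LP polytope for $n\le 5$ to get $P_2=P_3$, together with the short observation that for $n=5$ the subtour elimination constraints are redundant given the degree constraints and $x_e\le 1$ — for a $2$-set $S=\{u,w\}$, summing the two degree equalities gives $x(\delta(S))=4-2x_{uw}\ge 2$, and the cases $|S|\in\{1,3,4\}$ reduce to this or to a single degree constraint — to get $P_1=P_2$.
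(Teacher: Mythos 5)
Your proof is correct, and its main route is genuinely different from what the paper does. The paper does not prove this lemma at all: it simply cites Gr\"otschel and Padberg \cite{Gro86} for the equality of the subtour polytope with the convex hull of tours when $n\le 5$, and in one sentence observes (without details) that for $n=5$ the degree and box constraints already imply the subtour elimination constraints --- exactly the two-step argument you sketch at the end as an alternative. Your primary argument instead bypasses $P_2$ entirely and proves $P_1\subseteq P_3$ directly by characterizing the vertices of the degree-plus-box polytope: the rank/kernel condition on the incidence matrix of the fractional support $G_f$ (columns linearly independent, hence each component a tree or an odd unicyclic graph), the elimination of degree-1 vertices, the short case analysis ruling out a fractional triangle and a fractional $5$-cycle, and finally the observation that an integral $2$-regular graph on $5$ vertices must be a Hamiltonian cycle. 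This is a self-contained proof where the paper gives only a citation; the cost is the somewhat delicate vertex-structure argument, while the gain is that you never need to invoke the (nontrivial, external) fact that the subtour polytope is integral for $n\le 5$. One small presentational point: the symbol $E_1$ in your degree-1 step is never defined --- from context it clearly means $\{e: x^\ast_e=1\}$, but you should say so, since the argument hinges on every edge in $\delta(v)\setminus\{e\}$ being integral, not merely equal to $1$.
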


This result is mentioned on page 286 of Gr{\"o}tschel and Padberg \cite{Gro86}; the degree constraints and the constraints that  $0\leq x_e \leq 1$  imply all subtour constraints when $n=5.$  

\begin{proof}[Proof (of Proposition \ref{prop:5case})]
In Theorem \ref{thm:apthm} and its equivalent for $n$ odd, we argued that any feasible $X^{(1)}$ for the SDP  (\ref{eq:SDP}) met the degree constraints.  By Lemma \ref{lem:simpsubtour}, this is sufficient to imply $$X^{(1)}_P \subset \text{conv}\{ \chi^F: F \text{ is a Hamiltonian cycle} \}  .$$   By Proposition \ref{prop:relax}, however,  $$X^{(1)}_P \supset \text{conv}\{ \chi^F: F \text{ is a Hamiltonian cycle} \}  .$$  Hence these two sets are equal, and again by Lemma \ref{lem:simpsubtour}, they are both also equal to the set of feasible solutions to the subtour LP. \hfill
\end{proof}

\section{Details for the $k$-Cycle Cover Problem}\label{sec:app2}
Here we sketch the proof of Theorem \ref{MainThm2}.  We recall that this SDP searches for $k$ equally sized cycles covering all $n=ck(k+1)$ vertices, where $n$ scales with $c\in \N$ (and if $k$ is even, we require $c$ to be even to reduce casework).  The cost matrix is $$\hat{C}:=\left(J_{k+1}-I_{k+1}\right)\otimes J_{ck},$$ and we found that $\CYCOPT(\hat{C}) = 2k.$

\begingroup
\def\thetheorem{\noindent {\bf Theorem \ref{MainThm2}}}
\begin{thetheorem} \emph{$$\SDPOPT(\hat{C}) \leq \f{\pi^2}{n}\f{k}{k+1} \CYCOPT(\hat{C}).$$ }
\end{thetheorem}
\endgroup

%

We recall that our cost matrix  is $$\hat{C}:=\left(J_{k+1}-I_{k+1}\right)\otimes J_{ck}.$$  
We look for solutions of the form
$$X^{(i)} = \left(\left(b_i J_{k+1}+(a_i-b_i)I_{k+1}\right)\otimes J_{ck}\right)-a_i I_n,$$  
where\footnote{As before, these conditions enforce that the row sums are correct.  I.e., that $$(ck-1)a_i + ck^2b_i = \begin{cases} 2, & \text{ if }  i=1, ..., d-1 \\ 1, & \text{ if } i=d. \end{cases}$$} $$ b_i  = \f{1}{ck^2} \begin{cases} \left(2-(ck-1)a_i\right), &\text{ if } i<d \\ \left(1-(ck-1)a_i\right), &\text{ if } i=d. \end{cases}$$ 
We ultimately show that the following is a feasible choice of the $a_i$:
 $$a_i = \begin{cases} 0, & \text{ if } i \not\equiv_k k  \\ \f{2}{n-k-1} \left(\cos\left(\f{\pi i}{d}\right)+k\right), &  \text{ if } i  \equiv_k k, i\neq d \\ \f{1}{n-k-1} \left(\cos\left(\f{\pi i}{d}\right)+k\right), & \text{ if } i=d. \end{cases}$$ 

Also, note that in each row of the $n$ rows of $X^{(k)}$ there are $n-ck=ck^2$ entries that are $b_k$ (occurring exactly where $C$ has 1s), and $ck$ entires that are $a_k$ (occurring exactly where $C$ has 0s).  Hence the cost of such a solution is now based entirely on $b_k$:
$$\f{1}{2} \langle C, X^{(k)}\rangle = \f{1}{2}n(n-ck)b_k = \f{c^2}{2}(k+1)k^3 b_k.$$ 

\subsection{Writing an Equivalent Linear Program}
We first prove the following counterpart of Proposition \ref{prop:SDPLP}.
\begin{prop}
Finding a minimum-cost feasible solution of the form $$X^{(i)} = \left(\left(b_i J_{k+1}+(a_i-b_i)I_{k+1}\right)\otimes J_{ck}\right)-a_i I_n $$ for $i=1, ..., d$ with $$ b_i  = \f{1}{ck^2} \begin{cases} \left(2-(ck-1)a_i\right), &\text{ if } i<d \\ \left(1-(ck-1)a_i\right), &\text{ if } i=d\end{cases}$$ is equivalent  to solving the following optimization problem: \normalfont
\begin{equation}\label{kcycLP}
\begin{array}{llll}
\max & a_k &  \\
\text{subject to} & \sum_{i=1}^d    \cos\left(\f{2\pi ij}{n}\right)  a_i & \geq -\f{1}{ck-1}, & j=1, ..., d\\
 & \sum_{i=1}^d    \cos\left(\f{2\pi ij}{n}\right)  a_i & \leq 1, &  j=1, ..., d\\
& \sum_{i=1}^d a_i &= 1 \\
& a_i & \leq \f{2}{ck-1},  & i=1, ..., d-1\\
& a_d& \leq \f{1}{ck-1}\\
& a_i & \geq 0,  & i=1, ..., d. 
\end{array} \end{equation}
\end{prop}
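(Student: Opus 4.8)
The plan is to follow the proof of Proposition \ref{prop:SDPLP} line for line, translating each ingredient of the SDP (\ref{eq:kSDP}) into the corresponding constraint of (\ref{kcycLP}) and checking that the objectives agree. It is convenient to record up front the arithmetic identities $n = ck(k+1)$, $2d = n$, and $n - k - 1 = (k+1)(ck-1)$, which are exactly what make the constants in (\ref{kcycLP}) come out clean; I would also note $k < d$ (since $c(k+1) > 2$), so the formula for $b_k$ is the ``$i < d$'' one.

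First, the objective: since $b_k = \f{1}{ck^2}\big(2 - (ck-1)a_k\big)$ is a strictly decreasing affine function of $a_k$, minimizing the cost $\f12\tr(C X^{(k)}) = \f{c^2}{2}(k+1)k^3 b_k$ is the same as maximizing $a_k$. Next, the combinatorial constraints: entrywise nonnegativity $X^{(i)} \geq 0$ is equivalent to $a_i \geq 0$ and $b_i \geq 0$ for all $i$, and rewriting $b_i$ in terms of $a_i$ turns $b_i \geq 0$ into $a_i \leq \f{2}{ck-1}$ for $i < d$ and $a_d \leq \f{1}{ck-1}$. The equality $\sum_{i=1}^d X^{(i)} = J_n - I_n$, using $J_n = J_{k+1}\otimes J_{ck}$, $I_n = I_{k+1}\otimes I_{ck}$, and bilinearity of $\otimes$, reduces to $\sum_i a_i = 1$ together with $\sum_i b_i = 1$; and the latter follows from the former by the direct computation $\sum_i b_i = \f{1}{ck^2}\big(2(d-1) + 1 - (ck-1)\big) = \f{ck^2}{ck^2} = 1$, using $2d = ck^2 + ck$.

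The bulk of the work, exactly as in Proposition \ref{prop:SDPLP}, is the PSD constraints. For fixed $j \in [d]$ set $a^{(j)} = \sum_i \cos(\f{2\pi ij}{n})a_i$ and $b^{(j)} = \sum_i \cos(\f{2\pi ij}{n})b_i$; bilinearity of $\otimes$ gives
$$I_n + \sum_{i=1}^d \cos\left(\f{2\pi ij}{n}\right) X^{(i)} = (1 - a^{(j)})I_n + \left(b^{(j)} J_{k+1} + (a^{(j)} - b^{(j)})I_{k+1}\right)\otimes J_{ck}.$$
Then I would apply the linear-algebra facts used in the proof of Proposition \ref{prop:SDPLP}: $J_{ck}$ has eigenvalues $ck$ (once) and $0$ ($ck-1$ times), and $b^{(j)} J_{k+1} + (a^{(j)} - b^{(j)})I_{k+1}$ has eigenvalues $a^{(j)} + kb^{(j)}$ (once) and $a^{(j)} - b^{(j)}$ ($k$ times), so the eigenvalues of the constraint matrix are
$$1 - a^{(j)}\ \ (\text{mult.}\ n - k - 1), \qquad 1 - a^{(j)} + ck\big(a^{(j)} + kb^{(j)}\big), \qquad 1 - a^{(j)} + ck\big(a^{(j)} - b^{(j)}\big)\ \ (\text{mult.}\ k).$$
Next I would eliminate $b^{(j)}$: splitting the sum at $i = d$, using $\cos(\f{2\pi dj}{n}) = (-1)^j$, and applying Lemma \ref{lem:trig} yields $b^{(j)} = -\f{1 + (ck-1)a^{(j)}}{ck^2}$. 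Substituting, the second eigenvalue is identically $0$ and the third simplifies to $\f{1}{k}\big((k+1) + (n-k-1)a^{(j)}\big)$. Hence the $j$-th PSD constraint is equivalent to $1 - a^{(j)} \geq 0$ and $(k+1) + (n-k-1)a^{(j)} \geq 0$, i.e.\ to $-\f{1}{ck-1} \leq a^{(j)} \leq 1$ (using $n - k - 1 = (k+1)(ck-1)$). Collecting these equivalences over $j = 1, \dots, d$ yields precisely (\ref{kcycLP}).

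I expect the only real obstacle to be bookkeeping: keeping the three layers of structure straight — the $(k+1)$-block pattern, the inner $J_{ck}$ blocks, and the $-a_i I_n$ diagonal correction — while reading off eigenvalues and multiplicities, and carrying the $i < d$ versus $i = d$ case split through the computation of $b^{(j)}$. There is no new idea relative to Proposition \ref{prop:SDPLP}, so I would keep the exposition terse and point to that proof for the parallel steps.
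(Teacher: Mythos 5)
Your proposal is correct and follows the paper's own proof (which is itself deliberately a sketch mirroring Proposition \ref{prop:SDPLP}) step for step: same Kronecker-product eigenvalue decomposition, same elimination of $b^{(j)}$ via Lemma \ref{lem:trig}, and same reduction of the three nonnegativity conditions to $-\f{1}{ck-1}\leq a^{(j)}\leq 1$. The only difference is that you spell out the post-substitution simplification (the second eigenvalue vanishing identically, the third factoring as $\f{1}{k}\bigl((k+1)+(n-k-1)a^{(j)}\bigr)$) slightly more explicitly than the paper does, which is a welcome clarification rather than a deviation.
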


\begin{proof}[Proof (Sketch)]
 Just as before, the $j$-th SDP constraint for the $k$-cycle cover problem is
$$I + \sum_{i=1}^d \cos\left(\f{2\pi ij}{n}\right) X^{(i)} \succeq 0.$$  We again define 
 $$a^{(j)} =\sum_{i=1}^d    \cos\left(\f{2\pi ij}{n}\right)  a_i, \hspace{5mm} b^{(j)} =\sum_{i=1}^d    \cos\left(\f{2\pi ij}{n}\right)  b_i.$$  Then the $j$-th SDP  constraint becomes
\begin{align*}
I_n + \sum_{i=1}^d \cos\left(\f{2\pi ij}{n}\right) X^{(i)} 
&= I_n + \sum_{i=1}^d \cos\left(\f{2\pi ij}{n}\right) \left( \left(\left(b_i J_{k+1}+(a_i-b_i)I_{k+1}\right)\otimes J_{ck}\right)-a_i I_n\right)\\
&= (1-a^{(j)})I_n +   \left( \left(b^{(j)} J_{k+1}+(a^{(j)}-b^{(j)})I_{k+1}\right)\otimes J_{ck}\right).
\end{align*}

We write down the eigenvalues of the matrix representing the  $j$-th SDP constraint  using properties of Kronecker products and shifts by the identity matrix.  For example, $J_{ck}$ has $0$ as an eigenvalue with multiplicity $ck-1$.  These give rise to $0$ as an eigenvalue of the term $ \left(b^{(j)} J_{k+1}+(a^{(j)}-b^{(j)})I_{k+1}\right)\otimes J_{ck}$ with multiplicity  $(k+1)(ck-1)$.  Accounting for the $(1-a^{(j)})I_n$ term, the eigenvalue $$1-a^{(j)}$$ occurs with multiplicity $(k+1)(ck-1)$.

Proceeding similarly  by casework, we obtain the following eigenvalue with multiplicity $k$: 
 $$1 + (ck-1)a^{(j)} - ckb^{(j)}.$$
We also obtain the following eigenvalue with multiplicity $1$: $$1 + (ck-1)a^{(j)} + ck^2 b^{(j)}.$$

 Hence, for the $j$-th psd constraint to hold, it suffices that the following three linear inequalities to hold: 
\begin{equation}\label{kcyceig} 1 - a^{(j)} \geq 0, \hspace{5mm}  1 + (ck-1)a^{(j)} - ckb^{(j)} \geq 0,  \hspace{5mm}1 + (ck-1)a^{(j)} + ck^2 b^{(j)} \geq 0.\end{equation}

As before, we  simplify these by writing $b^{(j)}$ as a function of $a^{(j)}$.  To do so, we again use Lemma \ref{lem:trig}.  We find
\begin{align*}
b^{(j)} 
&= -\f{1}{ck^2} - \f{ck-1}{ck^2}a^{(j)}.
\end{align*}
Substituting this relationship in, our requirements in Equation (\ref{kcyceig}) become
only that $$-\f{1}{ck-1}\leq a^{(j)} \leq 1.$$

Note also that, for solutions to be feasible, we again require that $a_i \geq 0, b_i\geq 0,$ and $ \sum_{i=1}^{d} a_i = 1.$ The requirement $\sum_{i=1}^d b_i=1$ again follows from the relationships between each $a_i$ and $b_i$ as well as $\sum_{i=1}^{d} a_i = 1$; together these two constraints will imply  $\sum_{i=1}^d X^{(i)} = J-I.$   \hfill
\end{proof}

\subsection{ Finding Feasible Solutions to the Linear Program}
We make one more assumption to reduce our casework: if $k$ is even, we only consider even $c$.  This enforces that $d$ is always congruent to $0$ mod $k$.  With this assumption, we claim that the following is a feasible solution to the linear program  (\ref{kcycLP}):
$$a_i = \begin{cases} 0, & \text{ if } i \not\equiv_k k  \\ \f{2}{n-k-1} \left(\cos\left(\f{\pi i}{d}\right)+k\right), &  \text{ if } i  \equiv_k k, i\neq d \\ \f{1}{n-k-1} \left(\cos\left(\f{\pi i}{d}\right)+k\right), & \text{ if } i=d. \end{cases}$$

Since $\cos(x)\geq - 1$, we observe that the $a_i$ are nonnegative.  Also,  
$$\f{\cos\left(\f{\pi i}{d}\right) + k}{n-k-1}\leq \f{k+1}{n-k-1} = \f{k+1}{ck(k+1)-(k+1)}  = \f{1}{ck-1},$$  so that the $a_i$ satisfy their upper bounds.  Now it remains to show that the $a_i$ sum to 1 and that they satisfy the $a^{(j)}$ constraints. We begin with a modification of Lemma  \ref{lem:trig}.
\begin{lm}\label{lem:trig2}
Let $s\geq 0$ be an integer such that $\f{\pi ks}{d}\notin \{0, 2\pi, 4\pi, ...\}$.  Then 
$$\sum_{i=1}^{d/k} \cos\left(\f{\pi k i}{d}s\right) = \f{-1+(-1)^s}{2}.$$
\end{lm}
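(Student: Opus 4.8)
The plan is to reduce this to Lemma~\ref{lem:trig}. Under the standing assumption that $k \mid d$, set $m := d/k \in \mathbb{N}$, so that the sum in question is $\sum_{i=1}^{m}\cos(i\theta)$ with $\theta := \frac{\pi k s}{d} = \frac{\pi s}{m}$. First I would note that replacing $s$ by its residue $r \in \{0,1,\ldots,2m-1\}$ modulo $2m$ changes nothing: since $s-r$ is an even multiple of $m$, we have $\cos\!\big(\tfrac{\pi i s}{m}\big) = \cos\!\big(\tfrac{\pi i r}{m}\big)$ for every $i$, and $(-1)^s = (-1)^r$. The hypothesis $\frac{\pi k s}{d} \notin \{0, 2\pi, 4\pi, \ldots\}$ says precisely that $s$ is not a nonnegative multiple of $2m$, so $r \neq 0$; hence we may assume $1 \le s \le 2m-1$, equivalently $0 < \theta < 2\pi$.

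With this reduction in hand, the identity follows by applying Lemma~\ref{lem:trig} with its parameter ``$n$'' set to $2m$ (which is even) and its parameter ``$k$'' set to $s$ (which satisfies $0 < s < 2m$); its ``$d$'' is then $m$, and the lemma gives $\sum_{i=1}^{m}\cos\!\big(\tfrac{2\pi i s}{2m}\big) = \tfrac{-1+(-1)^s}{2}$, which is exactly the claim. Alternatively, one can reprove it in place exactly as in the proof of Lemma~\ref{lem:trig}: Lagrange's trigonometric identity gives $\sum_{i=1}^{m}\cos(i\theta) = -\tfrac12 + \tfrac{\sin((m+\frac12)\theta)}{2\sin(\theta/2)}$, and substituting $\theta = \pi s/m$ turns the numerator into $\sin\!\big(\pi s + \tfrac{\pi s}{2m}\big) = (-1)^s \sin\!\big(\tfrac{\pi s}{2m}\big)$, so the fraction collapses to $(-1)^s/2$ (the denominator $\sin(\pi s/(2m))$ being nonzero precisely because $s$ is not a multiple of $2m$), yielding $-\tfrac12 + \tfrac{(-1)^s}{2}$.

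There is essentially no obstacle here: the content is a direct specialization of Lemma~\ref{lem:trig}, and the only point requiring care is the periodicity bookkeeping that brings the argument into the range $(0,2\pi)$ where Lagrange's identity applies — which is exactly where the hypothesis on $s$ is used, and also what guarantees the denominator does not vanish. Everything else is routine algebra of the same flavor as the earlier proofs.
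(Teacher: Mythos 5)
Your proof is correct and follows essentially the same route as the paper's, which also cites Lagrange's trigonometric identity with $N = d/k$ and $\theta = \pi k s/d$ and notes it is a specific instance thereof. The only difference is that you explicitly reduce $s$ modulo $2d/k$ to bring $\theta$ into the range $(0,2\pi)$ before invoking the identity, whereas the paper applies it directly (implicitly relying on the fact that the identity holds whenever $\sin(\theta/2)\neq 0$, which the hypothesis on $s$ guarantees); your extra bookkeeping is harmless and if anything makes the appeal to the stated form of Lagrange's identity airtight.
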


\begin{proof}
Again, this is a specific instance of Lagrange's trigonometric identity:
$$\sum_{i=1}^N \cos(i\theta) = -\f{1}{2} + \f{\sin\left(\left(N+\f{1}{2}\right)\theta\right)}{2\sin\left(\f{\theta}{2}\right)}.$$  Here we take $N=\f{d}{k}$ and $\theta = \f{\pi ks}{d}.$  \hfill
\end{proof}

We now consider the sum of the $a_i$.
\begin{cm}
For $$a_i = \begin{cases} 0, & \text{ if } i \not\equiv_k k  \\ \f{2}{n-k-1} \left(\cos\left(\f{\pi i}{d}\right)+k\right), & \text{ if } i  \equiv_k k, i\neq d \\ \f{1}{n-k-1} \left(\cos\left(\f{\pi i}{d}\right)+k\right), & \text{ if } i=d, \end{cases}$$ we have 
$$\sum_{i=1}^d a_i = 1.$$
\end{cm}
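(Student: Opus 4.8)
The plan is to reindex the sum over the nonzero $a_i$, split off the $i=d$ term, and apply Lemma \ref{lem:trig2} with $s=1$. Recall $n = ck(k+1)$ and $d = n/2 = ck(k+1)/2$, and that our standing assumption (that $c$ is even when $k$ is even) guarantees $d/k = c(k+1)/2 \in \N$. The only indices $i\in\{1,\dots,d\}$ with $a_i\neq 0$ are the multiples of $k$, namely $i = kj$ for $j = 1,\dots, d/k$, and the term $i=d$ (which is $j = d/k$) carries a coefficient $\tfrac{1}{n-k-1}$ instead of $\tfrac{2}{n-k-1}$. So I would write
$$\sum_{i=1}^d a_i = \frac{2}{n-k-1}\sum_{j=1}^{d/k}\left(\cos\left(\frac{\pi k j}{d}\right)+k\right) - \frac{1}{n-k-1}\left(\cos(\pi)+k\right),$$
where the second term corrects for the fact that $i=d$ should only be counted once with the appropriate coefficient.

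Next I would evaluate the sum $\sum_{j=1}^{d/k}\cos\!\left(\tfrac{\pi k j}{d}\right)$ using Lemma \ref{lem:trig2} with $s=1$: since $0 < \tfrac{\pi k}{d} \le \pi < 2\pi$, the hypothesis $\tfrac{\pi k s}{d}\notin\{0,2\pi,4\pi,\dots\}$ holds, and the lemma gives the value $\tfrac{-1+(-1)^1}{2} = -1$. Hence $\sum_{j=1}^{d/k}\left(\cos\left(\tfrac{\pi k j}{d}\right)+k\right) = -1 + k\cdot\tfrac{d}{k} = d-1$. Using $\cos(\pi)+k = k-1$, this yields
$$\sum_{i=1}^d a_i = \frac{2(d-1) - (k-1)}{n-k-1} = \frac{2d-k-1}{n-k-1} = \frac{n-k-1}{n-k-1} = 1,$$
the last step using $n = 2d$.

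I do not expect any real obstacle here: the computation is a short application of Lemma \ref{lem:trig2} together with bookkeeping. The only points requiring minor care are (i) confirming that $d/k$ is an integer, which is exactly what the ``$c$ even when $k$ even'' convention was introduced to ensure, and (ii) correctly accounting for the special coefficient on the $i=d$ term, which the split above handles transparently. This mirrors Claim \ref{cm:asum} from the TSP case almost verbatim, with Lemma \ref{lem:trig2} playing the role of Lemma \ref{lem:trig}.
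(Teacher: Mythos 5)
Your proposal is correct and follows essentially the same route as the paper: reindex the nonzero terms as $i=kj$ for $j=1,\dots,d/k$, fold in the $i=d$ term with a correction for its halved coefficient, and apply Lemma \ref{lem:trig2} with $s=1$. The paper states the resulting identity and skips the intermediate algebra, whereas you spell it out, but the argument is the same.
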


\begin{proof}
We use Lemma \ref{lem:trig2} with $s=1$.  In this case $\f{\pi k}{d}$ is not an integer multiple of $2\pi$, so the proof follows as for Claim \ref{cm:asum}.  Skipping most of the algebra, we obtain:
\begin{align*}
\sum_{i=1}^d a_i 
&= \f{1}{n-k-1}\left( \left( \sum_{i=1}^{\f{d}{k}} 2 \left(\cos\left(\f{\pi i k}{d}\right)+k\right)   \right) - \cos\left(\f{\pi d}{d}\right) - k \right)\\
&=1.
\end{align*}  \hfill
\end{proof}

\begin{cm}
With $$a_i = \begin{cases} 0, & \text{ if } i \not\equiv_k k  \\ \f{2}{n-k-1} \left(\cos\left(\f{\pi i}{d}\right)+k\right), & \text{ if } i  \equiv_k k, i\neq d \\ \f{1}{n-k-1} \left(\cos\left(\f{\pi i}{d}\right)+k\right), & \text{ if } i=d \end{cases},$$ we have $$-\f{1}{ck-1}\leq a^{(j)} \leq 1.$$
\end{cm}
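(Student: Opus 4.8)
The plan is to compute $a^{(j)}$ in closed form, exactly as in Claim~\ref{cm:aj}, and then read off the two inequalities. Since $a_i=0$ unless $k\mid i$, I would first substitute $i=k\ell$ and set $\theta_\ell:=\tfrac{2\pi k\ell}{n}=\tfrac{\pi k\ell}{d}$, so that $\cos\!\big(\tfrac{2\pi ij}{n}\big)=\cos(j\theta_\ell)$ and $\cos\!\big(\tfrac{\pi i}{d}\big)=\cos(\theta_\ell)$ when $i=k\ell$, and $\theta_{d/k}=\pi$. Writing $D:=d/k$ and correcting for the fact that $a_d$ carries coefficient $\tfrac1{n-k-1}$ rather than $\tfrac2{n-k-1}$, this gives
\[
a^{(j)}=\frac{2}{n-k-1}\sum_{\ell=1}^{D}\cos(j\theta_\ell)\big(\cos(\theta_\ell)+k\big)-\frac{(-1)^j(k-1)}{n-k-1}.
\]
The product-to-sum identity $\cos(j\theta_\ell)\cos(\theta_\ell)=\tfrac12\cos((j+1)\theta_\ell)+\tfrac12\cos((j-1)\theta_\ell)$ then reduces $a^{(j)}$ to a fixed linear combination of $\sum_{\ell=1}^{D}\cos(s\theta_\ell)$ for $s\in\{j-1,j,j+1\}$ together with the correction term.

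The next step is to evaluate these sums with Lemma~\ref{lem:trig2}. Put $N:=c(k+1)=\tfrac{2d}{k}=\tfrac nk$; then $\sum_{\ell=1}^{D}\cos(s\theta_\ell)=\tfrac{-1+(-1)^s}{2}$ whenever $N\nmid s$, while $\sum_{\ell=1}^{D}\cos(s\theta_\ell)=D$ whenever $N\mid s$ (including $s=0$, i.e.\ $j=1$). Since $N\geq 3$, among the three consecutive integers $j-1,j,j+1$ at most one is divisible by $N$; this separates the analysis into a \emph{generic} case (where $N$ divides none of them) and three \emph{special} cases. Also, the convention that $c$ is even whenever $k$ is even makes $N=c(k+1)$ always even, so in the special cases the parity of $j$ is determined for free.

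In the generic case, substituting $\tfrac{-1+(-1)^s}{2}$ for each of the three sums, using $(-1)^{j\pm1}=-(-1)^j$, and simplifying with $n-k-1=(k+1)(ck-1)$ collapses everything to $a^{(j)}=\tfrac{-(k+1)}{n-k-1}=-\tfrac1{ck-1}$, which meets the lower bound with equality and is trivially $\leq 1$. In the special cases I would replace the distinguished sum by $D$ and recompute. If $N\mid(j-1)$ or $N\mid(j+1)$ then $j$ is odd and one gets $a^{(j)}=\tfrac{D-k-1}{n-k-1}=\tfrac{c-2}{2(ck-1)}$; if $N\mid j$ then $j$ is even and one gets $a^{(j)}=\tfrac{2d-k-1}{n-k-1}=1$ (using $2d=n$). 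In every case $-\tfrac1{ck-1}\le a^{(j)}\le 1$: the only checks that are not immediate are $\tfrac{c-2}{2(ck-1)}\le 1$, equivalent to $1\le 2k$, and $\tfrac{c-2}{2(ck-1)}\ge-\tfrac1{ck-1}$, equivalent to $c\ge 0$.

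The only genuine obstacle is bookkeeping: getting the substitution and the correction term right in the first display, and then running the three special subcases without sign errors. Each individual manipulation is of exactly the kind already carried out in Claim~\ref{cm:aj} and Claim~\ref{cm:exs}, so nothing conceptually new is required. It is worth recording as a byproduct that the subcase $N\mid j$ does occur for some $j\leq d$ as soon as $k\geq 2$ (e.g.\ $j=d$ when $k$ is even), so the constraint $a^{(j)}\le 1$ is in fact tight, mirroring the tightness of the lower bound in the generic case.
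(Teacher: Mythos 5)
Your proposal is correct and follows essentially the same route as the paper's proof: reindex via $i=k\ell$, apply the product-to-sum identity, invoke Lemma~\ref{lem:trig2} with casework on which (if any) of $j-1,j,j+1$ is a multiple of $c(k+1)$, and arrive at the same three values $-\tfrac{1}{ck-1}$, $\tfrac{c-2}{2(ck-1)}$, and $1$. The observation that $N=c(k+1)$ is always even under the paper's parity convention, and the closing remark that the upper bound is attained, are consistent with (and slightly more explicit than) what the paper records.
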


Our proof is in the same spirit as that of Lemma \ref{cm:aj}.
\begin{proof}
We begin by writing the terms in one sum to $\f{d}{k}$ to use Lemma \ref{lem:trig2}, and then use the product to sum  identity for cosines.  
\begin{align*}
 a^{(j)} &= \sum_{i=1}^d \cos\left(\f{2\pi ij}{n} \right)a_i \\
&=  \f{ (-1)^{j+1} \left( k-1 \right) + 2 k \sum_{i=1}^{d/k} \cos\left(\f{\pi ijk}{d} \right) +  \sum_{i=1}^{d/k} \cos\left(\f{\pi ki}{d}(j+1) \right)+\sum_{i=1}^{d/k}\cos\left(\f{\pi k i}{d}(j-1)\right)}{n-k-1}\\
&:= (*).
\end{align*}
However, we now need to be  careful with Lagrange's identity, as there are more cases where we cannot apply Lagrange's trigonometric identity to these sums.  In particular, we cannot do so when summing  $\cos\left(\f{\pi k s}{d}\right)$ where  $\f{\pi k s}{d}$ is an integer multiple of $2\pi$.  This happens when $ks =\lambda 2d=\lambda n$ for $\lambda = 0, 1, 2, ....$. In terms of $s$, this is when
$$s = \lambda \f{n}{k} = \lambda c (k+1),$$  and for $k\geq 2$ this can happen more than just when $s=0$.  
Notice, though, that if $s=\lambda c(k+1)$, then $s$ must be even (as either $k+1$ or $c$ is even).  Also, the distance between successive values of $s$ where this occurs is $c(k+1)\geq 3$ for $k\geq 2$.  In simplifying $(*)$ we will need to evaluate sums of $\cos\left(\f{\pi k s}{d}\right)$ when $s=j-1, j$, and $j+1$; at most one of these terms can be an integer multiple of $c(k+1)$.  Summarizing:
$$ \sum_{i=1}^{d/k} \cos\left(\f{\pi k i}{d}s\right) = \begin{cases} \f{d}{k}, & \text{ if } s = \lambda c(k+1) \text{ for } \lambda =0, 1, 2, ... \\ \f{-1+(-1)^s}{2}, & \text{ else.} \end{cases}$$
We now consider three cases: none of the three sums evaluate to $\f{d}{k}$, the sum $ \sum_{i=1}^{d/k} \cos\left(\f{\pi ijk}{d} \right)$ evaluates to $d/k$, and either of $ \sum_{i=1}^{d/k} \cos\left(\f{\pi ik}{d}j\pm 1 \right)$ evaluate to $d/k$.  We respectively obtain the following, noting that $j+1$ and $j-1$ both have the same parity.  Up to bookkeeping, the algebraic manipulation is as in Claim \ref{cm:aj}, so we  summarize the evaluations:

Case 1):
\begin{align*}
(*)
&=- \f{k+1}{n-k-1} = - \f{k+1}{ck(k+1)-(k+1)} = -\f{1}{ck-1}
\end{align*}

Case 2), in which $j$ is even:
\begin{align*}
 (*) &= \f{ - \left( k-1 \right) +2d - 2}{n-k-1}= \f{n-k-1}{n-k-1} =1\\
\end{align*}

Case 3), in which $j\pm 1$ is even.  Without loss of generality, we take the case where $j+1$ evaluates to $d/k$.  Then:
\begin{align*}
(*)&=  \f{ k-1 -2k + \f{d}{k}}{n-k-1}
= \f{ c(k+1)-2(k+1)}{2(n-k-1)}
= \f{c-2}{2(ck-1)}\\
\end{align*}
For all $c\geq 1, k\geq 2$, we find that in every case $$-\f{1}{ck-1} \leq (*) \leq 1,$$ completing the proof.   \hfill
\end{proof}

\subsection{The Unbounded Integrality Gap} 
We are now able to prove the generalization of Theorem \ref{MainThm}:

\begingroup
\def\thetheorem{\noindent {\bf Theorem \ref{MainThm2}}}
\begin{thetheorem} \emph{$$\SDPOPT(\hat{C}) \leq \f{\pi^2}{n}\f{k}{k+1} \CYCOPT(\hat{C}).$$ }
\end{thetheorem}
\endgroup

\begin{proof}
First recall that the optimal cost of the SDP is  $$\SDPOPT(\hat{C})\leq\f{c^2}{2}(k+1)k^3 b_k$$ by the feasible solution we found, and that $$\CYCOPT(\hat{C}) = 2k.$$  
Using  $\cos\left(\f{k\pi}{d}\right) \geq 1-\f{k^2\pi^2}{2d^2}$ we have
\begin{align*}
b_k  
&= \f{2}{ck^2}\left(1-(ck-1)\f{1}{n-k-1} \left(\cos\left(\f{\pi k}{d}\right)+k\right)\right) \\
&\leq \f{2}{ck^2}\left(1-\f{ck-1}{n-k-1} \left(k+1-\f{k^2\pi^2}{2d^2}\right)\right) \\
&= \f{\pi^2}{cd^2(k+1)}.
\end{align*}
For any fixed $k$, the denominator is again $\mathcal{O}(n^3)$.  Hence:
\begin{align*}
\f{\SDPOPT(\hat{C})}{\CYCOPT(\hat{C})} 
&\leq \f{\f{c^2}{2}(k+1)k^3 b_k}{2k}\\
& \leq \f{\pi^2}{4} \f{ck^2}{d^2} \\
\intertext{Using $4d^2=n^2$ and $n=ck(k+1)$}
& = \pi^2 \f{1}{c (k+1)^2 }\\
& = \pi^2 \f{k}{k+1} \f{1}{n}.
\end{align*}
In the last line, we used that $c=n/(k(k+1)).$ Again, the SDP's integrality gap is unbounded; as $n$ increases, solutions to the SDP become arbitrarily small.
 \hfill
\end{proof}

\end{document}